\documentclass{vllm}
\usepackage[margin=1.2in]{geometry}

\usepackage[utf8]{inputenc}
\usepackage[T1]{fontenc}
\usepackage{graphicx}
\usepackage{amsmath,amssymb,amsthm}
\usepackage{amsfonts}
\usepackage{booktabs}
\usepackage{multirow}
\usepackage{enumitem}
\usepackage{algorithm}
\usepackage{algorithmic}
\usepackage{float}
\usepackage{listings}
\usepackage{xcolor}
\usepackage{hyperref}
\usepackage{cleveref}
\usepackage{subcaption}
\usepackage{tabularx}
\usepackage{url}
\usepackage[numbers]{natbib}
\usepackage{tikz}
\usetikzlibrary{shapes,shapes.geometric,arrows,positioning,fit,calc,backgrounds,decorations.pathreplacing}

\newtheorem{definition}{Definition}

\newtheorem{proposition}{Proposition}

\hypersetup{
  colorlinks=true,
  linkcolor=blue!70!black,
  citecolor=green!50!black,
  urlcolor=blue!60!black,
  pdfauthor={vLLM Semantic Router Team},
  pdftitle={vLLM Semantic Router: Signal Driven Decision Routing for Mixture-of-Modality Models}
}

\newcommand{\sysname}{vLLM Semantic Router}

\newcommand{\halugate}{HaluGate}

\title{%
\textbf{vLLM Semantic Router:}\\[-1pt]
{\fontsize{11}{12.5}\selectfont\mbox{Signal Driven Decision Routing for Mixture-of-Modality Models}}%
}

\author{%
\parbox[t]{0.965\textwidth}{\centering\small\normalfont
Xunzhuo Liu, Huamin Chen, Samzong Lu, Yossi Ovadia, Guohong Wen, Hao Wu,
Zhengda Tan, Jintao Zhang, Senan Zedan, Yehudit Kerido, Liav Weiss,
Haichen Zhang, Bishen Yu, Asaad Balum, Noa Limoy,
Abdallah Samara, Baofa Fan, Brent Salisbury, Ryan Cook, Zhijie Wang,
Qiping Pan, Rehan Khan, Avishek Goswami, Houston H.\ Zhang, Shuyi Wang,
Ziang Tang, Fang Han, Zohaib Hassan, Jianqiao Zheng, Avinash Changrani, Xue (Steve) Liu, Bowei He$^\dagger$. \\
$^\dagger$ Corresponding author: \texttt{Bowei.He@mbzuai.ac.ae}
}%
}

\date{June 2026}

\abstract{%
As large language models (LLMs) diversify across modalities, capabilities, and cost profiles, the problem of \emph{intelligent request routing}: selecting the right model for each query at inference time, has become a critical systems challenge. We present \textbf{\sysname{}}, a signal-driven decision routing framework for Mixture-of-Modality (MoM) model deployments. The architecture follows two complementary Shannon-inspired views.
In the information-theoretic regime, signal extraction reduces the entropy of ``which model?'' by distilling routing-relevant information from raw queries. In the Boolean-algebraic regime, the decision engine composes functionally complete routing policies from signal conditions. The central innovation is \emph{composable signal orchestration}: thirteen heterogeneous signal types, spanning sub-millisecond heuristics and neural classifiers for semantics, safety, and modality, are composed through configurable Boolean decision rules into deployment-specific routing policies, so that fundamentally different scenarios (multi-cloud enterprise, privacy-regulated, cost-optimized) are expressed as different configurations over the same architecture. Matched decisions drive \emph{semantic model routing} via thirteen selection algorithms, while per-decision plugin chains enforce safety constraints including a three-stage \emph{HaluGate} hallucination detection pipeline and a lightweight episodic memory system with \emph{ReflectionGate} for personalized multi-turn context.
A typed neural-symbolic DSL specifies these routing policies and compiles them to multiple deployment targets, enabling configuration-first adaptation without code changes. Together, these components show that composable signal orchestration enables a single framework to serve diverse deployment scenarios with differentiated cost, privacy, and safety policies.
}

\begin{document}

\maketitle
\begingroup
\renewcommand{\thefootnote}{}
\footnotetext{\, Github code repository: \url{https://github.com/vllm-project/semantic-router}}
\footnotetext{\, Huggingface model repository: \url{https://huggingface.co/llm-semantic-router}}
\endgroup


\section{Introduction}
\label{sec:introduction}

The landscape of large language models has fragmented along multiple axes: modality (text, code, vision, diffusion), scale (1B to 1T+ parameters), cost (10$\times$ variation in per-token pricing), and specialization (general-purpose vs.\ domain-specific fine-tuning).
Organizations increasingly operate \emph{heterogeneous model fleets}---local vLLM instances alongside cloud endpoints from OpenAI, Anthropic, Azure, Bedrock, Gemini, and Vertex AI---each with different capabilities, pricing, and compliance characteristics.
This heterogeneity creates a fundamental inference-time optimization problem: \emph{given a user query, a fleet of diverse models, and deployment-specific constraints, which model should serve it, and what safety and privacy policies should apply?}

Viewed through the lens of information theory~\cite{shannon1948mathematical}, routing is an \emph{uncertainty-reduction} problem.
Before any analysis, routing entropy is maximal: $H(M \mid r_\text{raw}) \approx \log_2 K$ bits for $K$ candidate models---every model is equally plausible.
Each signal extracted from the request reduces this uncertainty.

This viewpoint suggests a natural two-layer decomposition (\Cref{fig:shannon_mapping}).
Under this decomposition, the objective is to collapse $H(M \mid S(r))$ to near zero so the decision engine can make a deterministic, high-confidence routing choice (\Cref{fig:entropy_collapse}).

\begin{figure*}[!ht]
  \centering
  \includegraphics[width=0.94\linewidth]{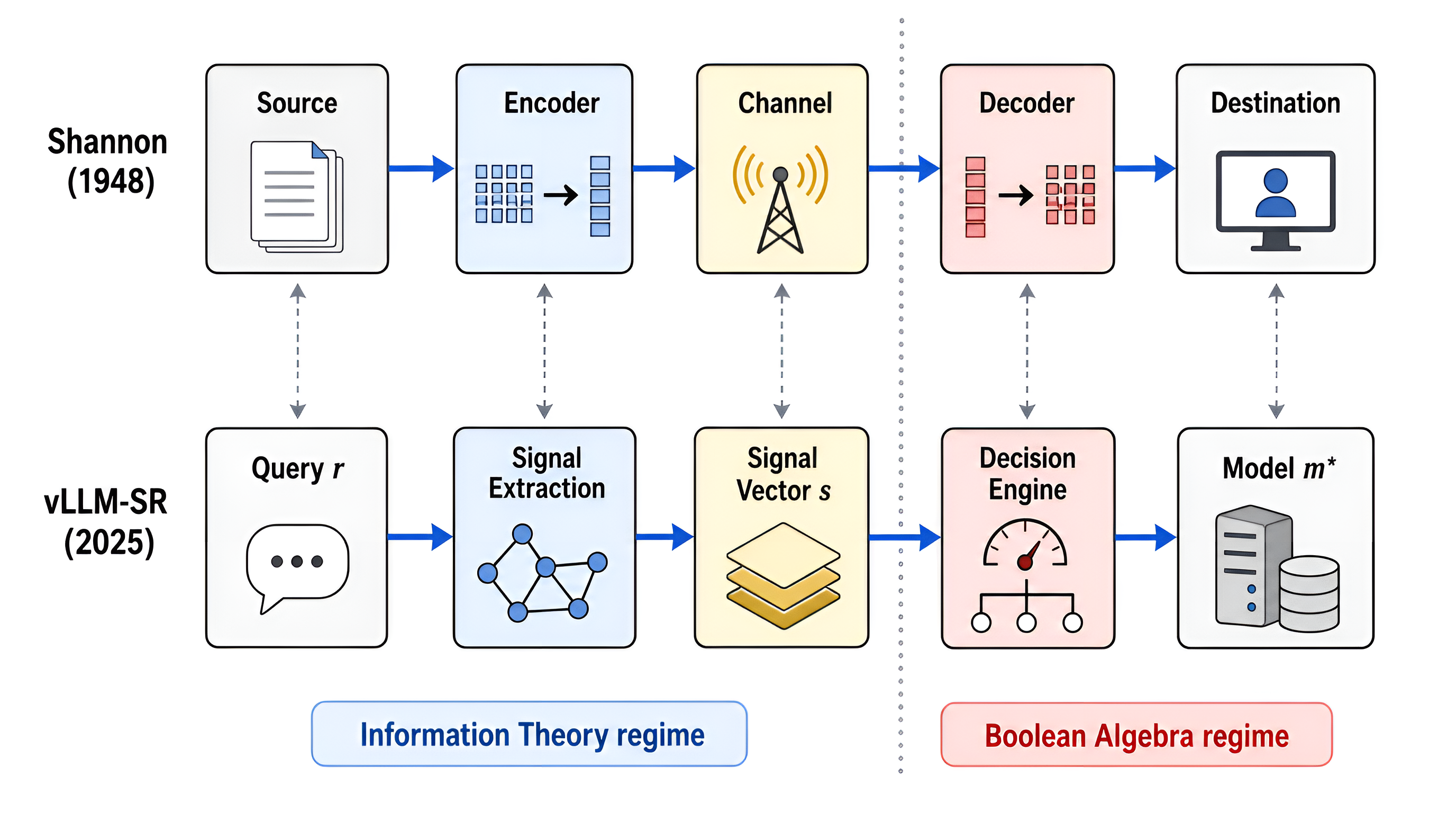}
  \caption{Structural correspondence between Shannon's communication system~\cite{shannon1948mathematical} and VSR's routing pipeline.
  The dotted line marks the boundary between the information-theoretic regime (signal extraction maximizes mutual information with the routing outcome) and the Boolean-algebraic regime (the decision engine synthesizes routing policies from signal conditions via $\{\wedge, \vee, \neg\}$~\cite{shannon1938symbolic}).}
  \label{fig:shannon_mapping}
\end{figure*}

In the first layer, signal extraction operates in the probabilistic regime of Shannon's \emph{Mathematical Theory of Communication}; in the second, decision evaluation operates in the algebraic regime of Shannon's \emph{switching-circuit algebra}~\cite{shannon1938symbolic}.
The signal vector~$\mathbf{s}$ is the interface between these regimes: continuous probabilistic inference below, discrete symbolic logic above.
In modern ML terms, the extraction layer behaves like a \emph{hybrid embedding stage} (\Cref{sec:disc_embedding}), while the priority-ordered decision engine behaves like a \emph{symbolic Mixture-of-Experts gate} with deterministic early exit (\Cref{sec:disc_moe}).
We also interpret priority blocks as \emph{layered entropy folding} (\Cref{sec:disc_entropy_folding}).
Together these components form a programmable neural-symbolic inference engine (Transformer-like control structure, but not equivalent hidden-state dynamics), which we connect to agent-based policy synthesis (\Cref{sec:disc_agent}).

\begin{figure*}[!ht]
  \centering
  \includegraphics[width=0.94\linewidth]{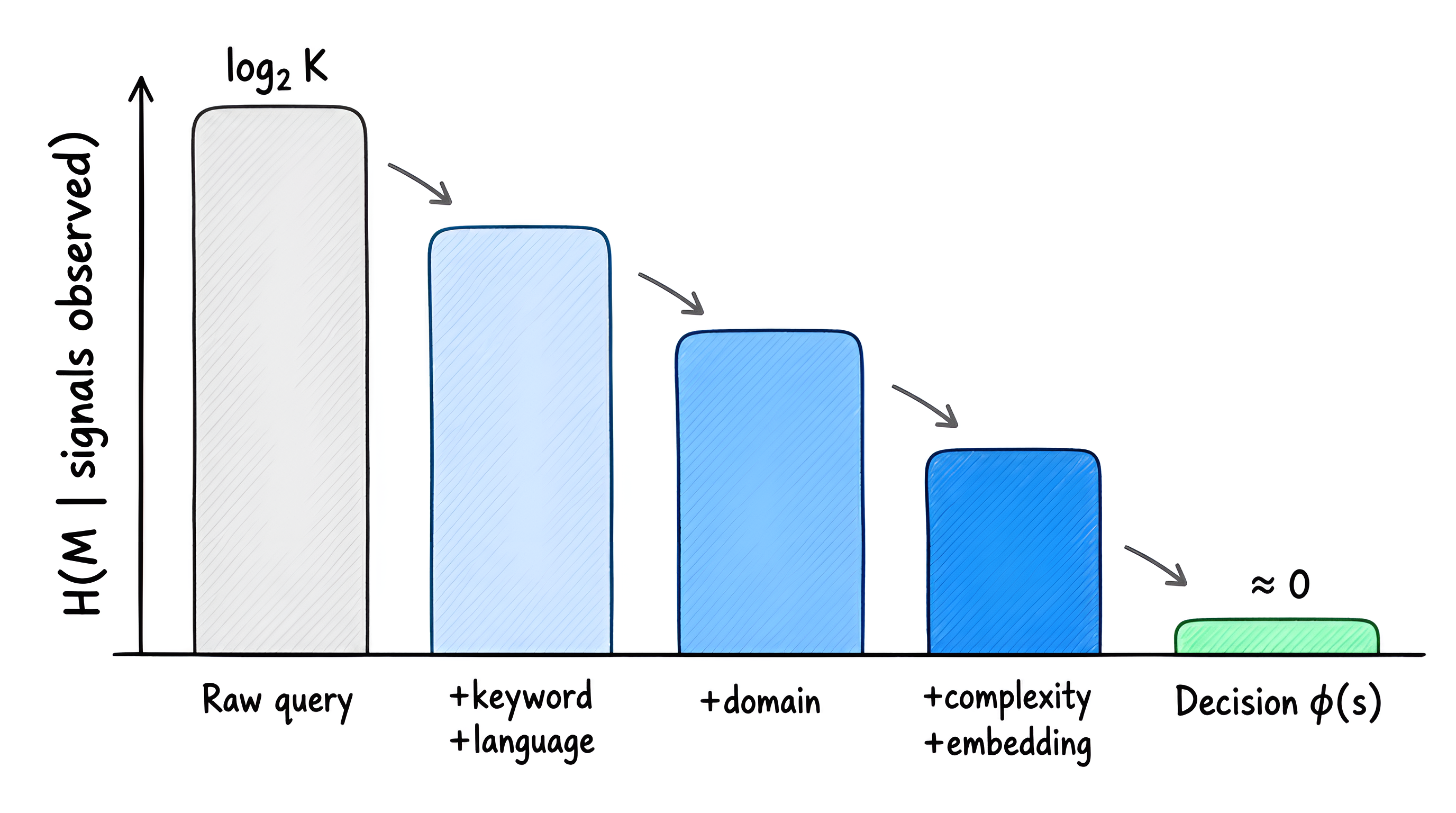}
  \caption{Entropy collapse during signal extraction (illustrative). Each additional signal reduces the routing uncertainty $H(M \mid \cdot\,)$ from the uniform prior $\log_2 K$ bits until the decision formula $\phi(\mathbf{s})$ yields a near-deterministic model selection. Bar heights are schematic rather than empirical measurements.}
  \label{fig:entropy_collapse}
\end{figure*}

This problem is more nuanced than binary difficulty routing.
A production routing system must simultaneously consider:
\begin{itemize}[leftmargin=*]
  \item \textbf{Multi-dimensional signals}: Query domain, modality, complexity, language, user identity, and real-time performance metrics all inform the optimal routing decision.
  \item \textbf{Privacy and safety}: Prompt injection, PII leakage, and hallucinated responses must be detected and mitigated---often with \emph{different policies for different query types and user roles}.
  \item \textbf{Cost-effective model selection}: Algorithms must balance response quality against inference cost and latency, selecting from a heterogeneous pool of local and cloud-hosted models.
  \item \textbf{Deployment diversity}: The same routing framework must serve a privacy-regulated healthcare deployment (strict PII filtering, on-premise models only), a cost-optimized developer tool (aggressive caching, cheapest model first), and a multi-cloud enterprise (failover across providers)---through configuration, not code changes.
  \item \textbf{Multi-turn statefulness}: Routing decisions must be consistent across conversation turns, requiring stateful session management and context preservation.
\end{itemize}

Prior work on LLM routing has made significant progress on individual aspects.
RouteLLM~\cite{ong2024routellm} trains classifiers to route between two models based on query difficulty.
RouterDC~\cite{chen2024routerdc} learns query-model embeddings via dual contrastive learning.
AutoMix~\cite{aggarwal2023automix} formulates cascading as a POMDP.
However, these approaches address model selection in isolation, without integrating signal extraction, safety enforcement, multi-provider backend management, or plugin extensibility into a unified framework.

\subsection{Contributions}

We present \sysname{}, a signal-driven decision routing system whose central innovation is \textbf{composable signal orchestration}: heterogeneous signals are extracted, composed through Boolean rules into deployment-specific decisions, and executed through per-decision plugin chains---enabling a single architecture to serve diverse deployment scenarios.

Our contributions are:

\begin{enumerate}[leftmargin=*]
  \item \textbf{Composable Signal-Decision-Plugin Architecture} (\Cref{sec:architecture,sec:signal_engine,sec:decision_engine,sec:plugins}):
    A three-layer architecture where thirteen signal types are composed through Boolean decision rules into deployment-specific routing policies, with per-decision plugin chains for safety, caching, and augmentation. Different deployment scenarios (privacy-regulated, cost-optimized, multi-cloud) are expressed as different configurations over the same architecture.

  \item \textbf{Semantic Model Routing with Cost-Aware Selection} (\Cref{sec:model_selection}):
    A unified framework integrating thirteen model selection algorithms---rating-based, contrastive, cascading, classical ML, reinforcement learning, and latency-aware---that analyze request semantics to select the most cost-effective model while respecting per-decision privacy and safety constraints.

  \item \textbf{HaluGate: Gated Hallucination Detection} (\Cref{sec:halugate}):
    A three-stage pipeline---sentinel gating, token-level detection, NLI-based explanation---that avoids unnecessary verification on non-factual queries while providing span-level diagnostics when hallucination is detected.

  \item \textbf{Multi-Provider and Multi-Endpoint Routing} (\Cref{sec:extproc}):
    Native support for routing across heterogeneous backends (vLLM, OpenAI, Anthropic, Azure, Bedrock, Gemini, Vertex AI) with provider-specific protocol translation, a pluggable authorization factory for diverse auth mechanisms, weighted multi-endpoint load distribution, and full OpenAI Responses API support for stateful multi-turn conversations.

  \item \textbf{LoRA-Based Multi-Task Classification} (\Cref{sec:lora_mom,sec:ml_inference}):
    A memory-efficient architecture using Low-Rank Adaptation that serves $n$ classification tasks from a single base model with lightweight adapter heads, reducing aggregate model memory from $n$ full copies to one base plus negligible adapter overhead.

  \item \textbf{Episodic Conversation Memory with ReflectionGate} (\Cref{sec:memory_rag}):
    A lightweight memory system that stores raw conversational turns as episodic chunks (filtered by an entropy gate and capped at 16\,KB) rather than relying on LLM-based fact extraction, eliminating inference overhead at write time.
    At retrieval time, a multi-stage \emph{ReflectionGate} pipeline---safety block-pattern filtering, recency decay, Jaccard deduplication, and budget capping---refines retrieved chunks before injection as a separate context message, enabling personalized multi-turn routing without coupling memory quality to an external model.

  \item \textbf{Programmable Neural-Symbolic Configuration Language} (\Cref{sec:dsl}):
    A typed configuration language that serves as the instruction set of the routing inference engine, with a formal grammar parsed into a Boolean expression AST, three-level validation (syntax, reference, constraint), multi-target compilation (flat YAML, Kubernetes CRDs, Helm charts), and round-trip decompilation.
    We formalize the system as a \emph{programmable neural-symbolic inference engine}---neural signal extraction as a hybrid embedding layer, symbolic decision evaluation as Mixture-of-Experts gating---and show that the language's functional completeness enables LLM-based coding agents to synthesize routing policies from natural-language specifications.
\end{enumerate}

\subsection{Paper Organization}

\Cref{sec:architecture} presents the system architecture and composable orchestration model.
\Cref{sec:signal_engine,sec:decision_engine} formalize the signal extraction and decision evaluation layers.
\Cref{sec:plugins,sec:safety,sec:halugate} describe the plugin framework and safety subsystems.
\Cref{sec:lora_mom,sec:ml_inference} detail the LoRA-based classification architecture and multi-runtime inference design.
\Cref{sec:model_selection} surveys the semantic model selection algorithms.
\Cref{sec:extproc} describes the multi-provider request processing pipeline.
\Cref{sec:memory_rag,sec:observability,sec:deployment} cover memory, observability, and deployment.
\Cref{sec:evaluation} presents evaluation results.
\Cref{sec:dsl} specifies the programmable configuration language and develops the neural-symbolic inference engine perspective.
\Cref{sec:related_work} discusses related work, and \Cref{sec:conclusion} concludes.


\section{System Architecture}
\label{sec:architecture}

We formalize the routing problem and present the three-layer architecture that decomposes it into composable signal extraction, decision evaluation, and plugin execution---enabling a single framework to serve diverse deployment scenarios through configuration.

\subsection{Problem Formulation}

Let $\mathcal{M} = \{m_1, \ldots, m_K\}$ denote a set of $K$ available model backends, each characterized by capability profile, cost, and latency.
Each backend may be served by a different provider $p_k \in \mathcal{P}$ (e.g., local vLLM, OpenAI, Anthropic, Azure, Bedrock, Gemini), with provider-specific API protocols and authentication mechanisms.
A deployment may expose multiple endpoints $\mathcal{E} = \{e_1, \ldots, e_L\}$ with weighted load distribution across backends.

Given an incoming request $r$ (consisting of a message sequence, metadata, user identity, and headers), the routing problem is to:
\begin{enumerate}[leftmargin=*]
  \item Select a model $m^* \in \mathcal{M}$ that maximizes response quality while respecting cost and latency constraints;
  \item Apply deployment-specific safety and privacy transformations $\mathcal{T}(r)$ before and after model invocation;
  \item Route through the correct provider endpoint with appropriate authentication.
\end{enumerate}

Na\"ive approaches either fix $m^*$ statically or route based on a single dimension (e.g., estimated difficulty).
We argue that production routing requires reasoning over \emph{multiple orthogonal signal dimensions simultaneously}, with different \emph{policies} (safety thresholds, caching strategies, prompt augmentation, model pools) for different routing outcomes---and that these policies must be \emph{composable} to support diverse deployment scenarios without architectural changes.

\subsection{Composable Signal Orchestration}

The key architectural innovation is that the same signal extraction, decision evaluation, and plugin execution machinery can be \emph{composed differently} for different deployment scenarios:

\begin{definition}[Programmable Neural-Symbolic Configuration Language]
A deployment configuration $\Gamma = (\mathcal{S}_\Gamma, \mathcal{D}_\Gamma, \Pi_\Gamma, \mathcal{E}_\Gamma)$ specifies which signal types $\mathcal{S}_\Gamma \subseteq \mathcal{S}$ are active, what decisions $\mathcal{D}_\Gamma$ are evaluated, which plugin chains $\Pi_\Gamma$ are attached, and which endpoints $\mathcal{E}_\Gamma$ are available.
\end{definition}

\noindent\textbf{Example configurations:}
\begin{itemize}[leftmargin=*]
  \item \emph{Privacy-regulated (healthcare)}: Active signals include domain, authz, and language. Decisions route sensitive queries to on-premise models only. Plugins enforce strict PII redaction with no caching.
  \item \emph{Cost-optimized (developer tool)}: Active signals include complexity, embedding, and keyword. Decisions cascade from cheap to expensive models. Plugins enable aggressive semantic caching.
  \item \emph{Multi-cloud enterprise}: Active signals include domain, modality, and authz. Decisions distribute across multiple provider endpoints using latency-aware model selection with weighted failover. Plugins inject provider-specific auth headers.
\end{itemize}

All three scenarios use the same architecture; only $\Gamma$ differs. This composability is the central design contribution.

\subsection{Three-Layer Architecture}

The architecture decomposes routing into three layers, each with a well-defined interface (\Cref{fig:architecture}):

\begin{figure*}[t]
  \centering
  \includegraphics[width=0.94\linewidth]{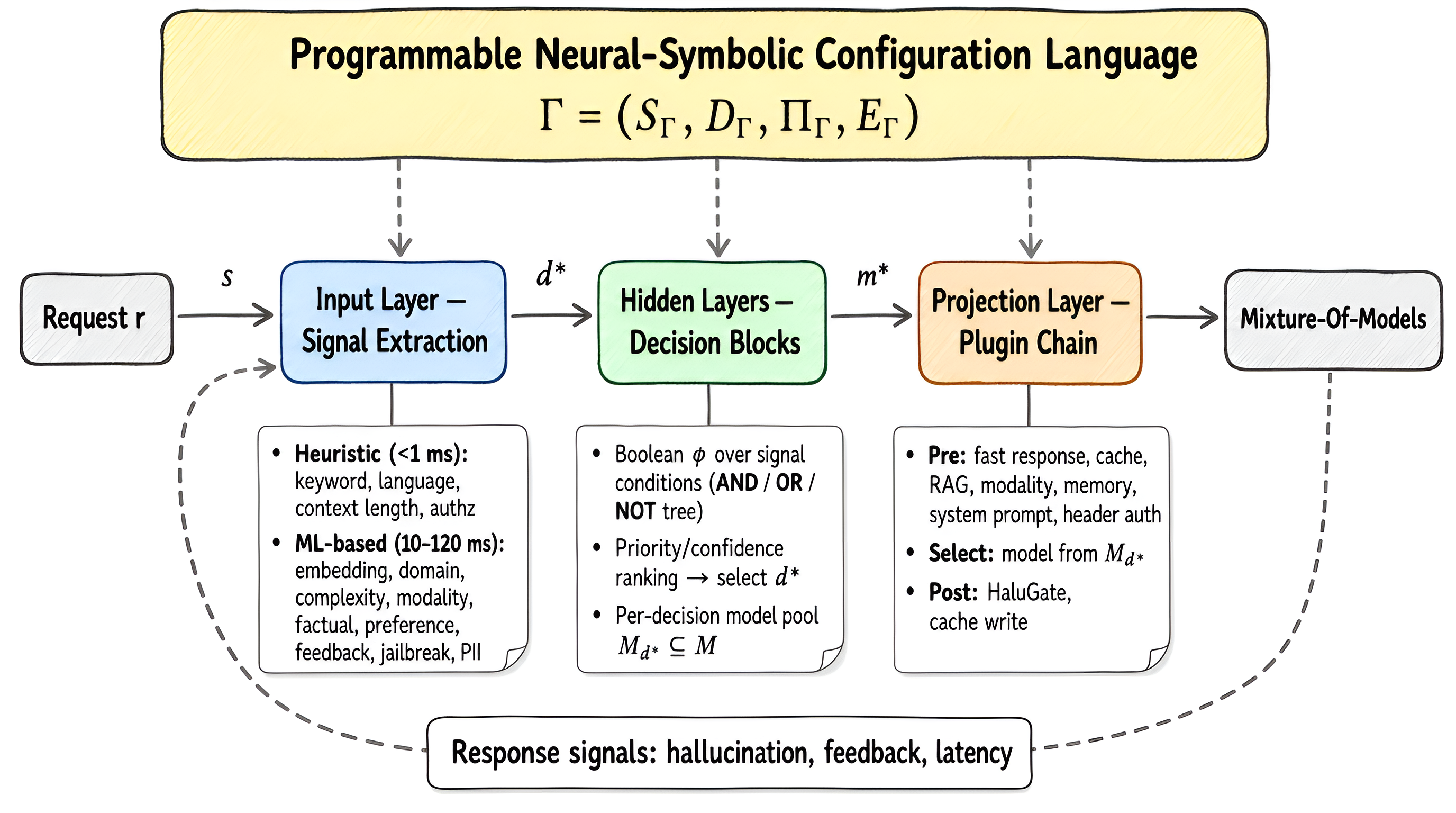}
  \caption{Three-layer architecture with closed-loop feedback. A deployment configuration~$\Gamma$ selects which signals, decisions, and plugins are active. Layer~1 extracts a signal vector~$\mathbf{s}$ from the request. Layer~2 evaluates Boolean decision formulas to select~$d^*$. Layer~3 executes the per-decision plugin chain, selects a model from~$d^*$'s candidate set, and routes to the provider endpoint. Response-side signals feed back to enable adaptive routing.}
  \label{fig:architecture}
\end{figure*}

\noindent\textbf{Input Layer: Signal Extraction.}
The signal layer maps a request $r$ to a structured signal result $\mathbf{s}$, consisting of binary match indicators and real-valued confidences for each configured rule across thirteen signal types.
Heuristic signals (keyword, language, context length, authorization) complete in sub-millisecond time.
ML-based signals (embedding similarity, domain classification, factual grounding, modality detection, complexity, preference, user feedback, jailbreak detection, PII detection) require neural inference at 10--120\,ms.
Signals are evaluated in parallel, and only signal types referenced by at least one active decision are computed---a critical optimization for deployment configurations that use a subset of available signals.

\noindent\textbf{Hidden Layers: Decision Blocks.}
The decision layer takes the signal result $\mathbf{s}$ and evaluates a set of decisions $\mathcal{D} = \{d_1, \ldots, d_M\}$, each defined as a Boolean formula over signal conditions.
The engine selects the best-matching decision $d^*$ using either priority-based or confidence-weighted ranking.
Each decision carries its own model candidate set $\mathcal{M}_{d^*} \subseteq \mathcal{M}$, enabling deployment-specific model pools (e.g., a privacy decision restricts candidates to on-premise models).

\noindent\textbf{Projection Layer: Plugin Chain.}
Each decision $d^*$ carries a per-decision plugin configuration that defines:
(a)~\emph{pre-routing plugins} (fast response for safety enforcement, semantic caching, RAG context injection, modality routing, memory retrieval, system prompt augmentation, header mutation for provider auth), executed before model invocation;
(b)~a \emph{semantic model selection algorithm} applied to $d^*$'s candidate model set $\mathcal{M}_{d^*}$ to find the best model cost-effectively;
(c)~\emph{post-routing plugins} (hallucination detection, cache updates), executed on the model response.

\subsection{Design Principles}

Four principles guide the architecture:

\noindent\textbf{Composability.}
Complex routing policies are expressed as compositions of simple primitives: Boolean combinations of signal conditions form decisions; sequences of typed plugins form execution chains; deployment scenarios are expressed as configuration profiles.
This avoids monolithic routing logic and enables the same system to serve fundamentally different deployment requirements.

\noindent\textbf{Orthogonality.}
Signals, decisions, and plugins are independent modules with a uniform interface boundary.
New signal types can be added by implementing a single evaluation function---the decision engine references signals solely by type and rule name, requiring no modification.
Likewise, new plugins and providers are registered independently.
The current thirteen signal types are the built-in set; the framework is designed to be extended with domain-specific signals as deployment requirements evolve.
This strict decoupling has a theoretical antecedent in Shannon's \emph{source-channel separation theorem}~\cite{shannon1948mathematical}: just as a communication system can be decomposed into a source encoder (which compresses the message to its informational essence) and a channel encoder (which adds structure for reliable transmission)---each optimized independently without loss of optimality---the routing system decomposes into a \emph{signal extraction layer} (which distills routing-relevant information from the raw request) and a \emph{decision layer} (which synthesizes a routing policy from the extracted signals).
The signal vector~$\mathbf{s}$ is the interface between these independently optimizable stages, analogous to the compressed source representation at the encoder--channel boundary.
This separation also mirrors the architecture of Mixture-of-Experts models~\cite{shazeer2017moe}, where the gating network (signal extraction) and expert modules (model backends) are designed independently---though our gating is symbolic rather than neural, enabling formal verification and compositional editing (\Cref{sec:disc_moe}).

\noindent\textbf{Closed-loop adaptivity.}
The bidirectional signal flow described in \Cref{sec:signal_engine} enables the architecture to operate as a \emph{closed-loop control system}~\cite{astrom2008feedback}.
In control-theoretic terms, the signal--decision--plugin pipeline is the \emph{plant}, response-side signals (hallucination detection, user feedback, latency measurements) are the \emph{sensors}, and a policy adjustment mechanism is the \emph{controller} that updates decision parameters $\theta^{(t)}$ (priorities, model weights) based on observed response quality:
\begin{equation}
  \theta^{(t+1)} = \theta^{(t)} + \eta \,\nabla_\theta\, \mathbb{E}\bigl[Q\bigl(r,\, m^*(r;\theta^{(t)})\bigr)\bigr]
\end{equation}
where $Q(r, m^*)$ is a response quality metric and $\eta$ is a learning rate.
This formulation connects to the \emph{contextual bandit} framework~\cite{li2010contextual}: the signal vector $S(r)$ serves as the context, model selection is the action, and response quality is the reward.
Standard regret bounds from online learning theory~\cite{shalev2012online} guarantee that the cumulative routing quality of such an adaptive policy converges to that of the best fixed policy in hindsight at a rate of $O(\sqrt{T})$, providing formal performance guarantees for self-improving routing.

\noindent\textbf{Per-decision scoping.}
Safety thresholds, caching policies, model candidates, and auth mechanisms are scoped to individual decisions rather than applied globally.
A coding-focused decision can omit PII signal conditions while a customer-support decision enforces strict PII filtering via signal-matched fast response---using the same system configuration.

\noindent\textbf{Provider abstraction.}
The architecture abstracts over provider-specific protocols, authentication, and endpoint topologies.
Multi-endpoint routing with weighted distribution and failover is handled at the infrastructure layer, enabling decisions to reference models by capability rather than by provider-specific endpoint.


\section{Signal Extraction Layer}
\label{sec:signal_engine}

The signal extraction layer maps an incoming request $r$ to a structured signal result that characterizes the request along thirteen orthogonal dimensions.
\Cref{fig:signal_taxonomy} provides an overview of the signal taxonomy and evaluation flow.
We formalize the signal model and describe the extraction algorithms.

\subsection{Signal Model}

\begin{definition}[Signal Rule]
A \emph{signal rule} $\rho = (\tau, n, f)$ consists of a signal type $\tau \in \mathcal{T}$, a rule name $n$, and an evaluation function $f: \mathcal{R} \to \{0, 1\} \times [0, 1]$ that maps a request to a binary match indicator and a confidence score.
\end{definition}

\begin{definition}[Signal Result]
Given a rule set $\mathcal{R} = \{\rho_1, \ldots, \rho_N\}$, the signal result for request $r$ is:
\begin{equation}
  S(r) = \bigl\{ \bigl(\rho_i, \; \mathbf{1}[f_i(r)], \; c_i(r)\bigr) \mid \rho_i \in \mathcal{R} \bigr\}
\end{equation}
where $\mathbf{1}[f_i(r)]$ is the match indicator and $c_i(r) \in [0,1]$ is the confidence.
\end{definition}

The thirteen signal types partition into \emph{heuristic} and \emph{learned} categories based on whether they require neural inference.

\begin{figure}[t]
\centering
\includegraphics[width=0.94\linewidth]{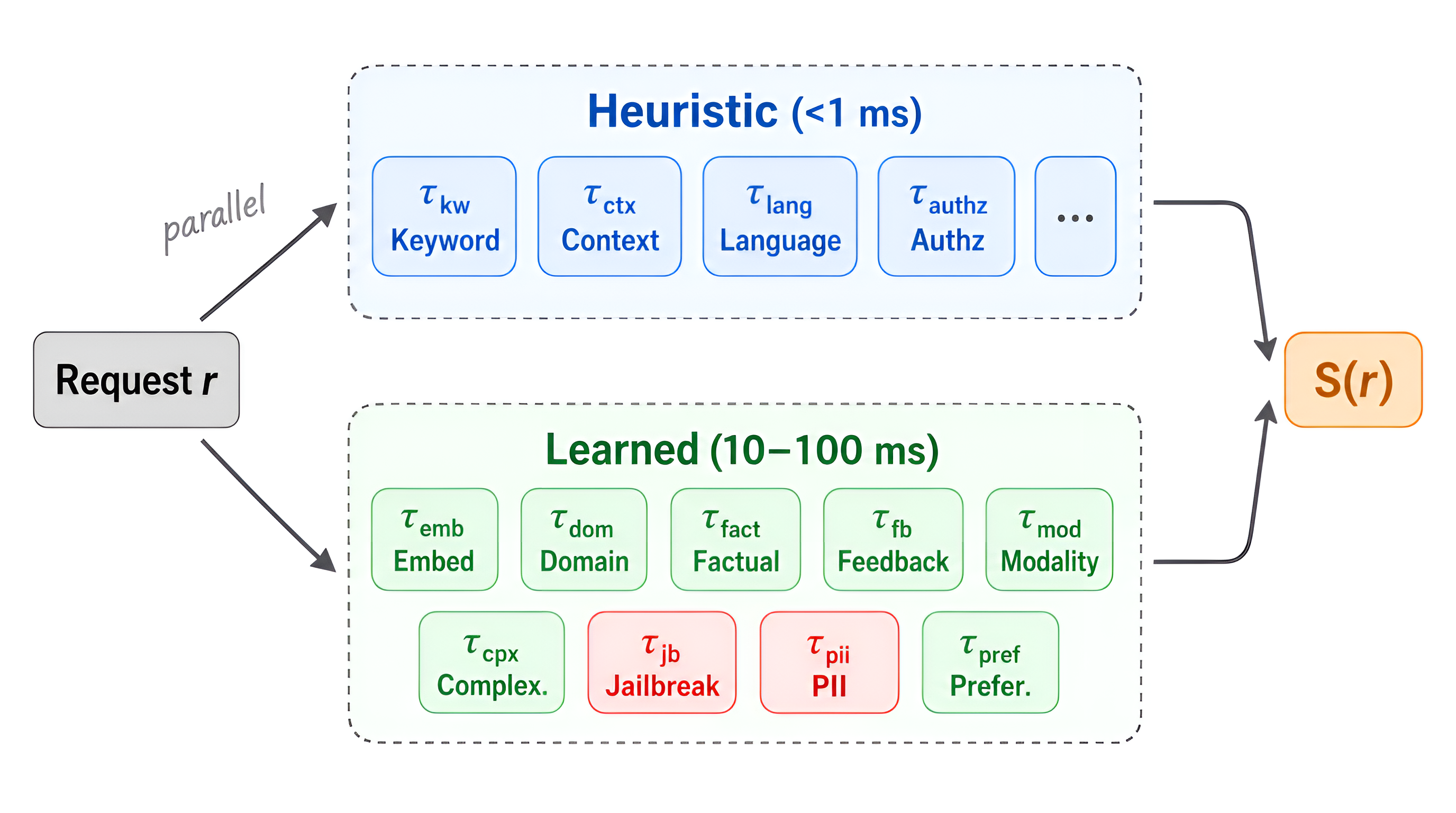}
\caption{Signal extraction taxonomy and evaluation flow.  An incoming request is evaluated in parallel against heuristic signals (sub-millisecond, deterministic) and learned signals (neural inference via LoRA classifiers).  Only signal types referenced by configured decisions are computed (demand-driven evaluation).  Results merge into the structured signal result $S(r)$.}
\label{fig:signal_taxonomy}
\end{figure}

\subsection{Heuristic Signals}

Heuristic signals use deterministic or statistical algorithms with sub-millisecond latency:

\noindent\textbf{Keyword} ($\tau_\text{kw}$).
Rules are defined as pattern sets with Boolean combinators.
Each rule specifies a set of patterns $P = \{p_1, \ldots, p_k\}$ with a combinator $\in \{\textsc{and}, \textsc{or}, \textsc{nor}\}$ and one of three matching methods:
\begin{itemize}[nosep,leftmargin=1.5em]
  \item \emph{Regex} (default): compiled regular expressions with word boundaries; confidence is 1.0 on match.
  \item \emph{BM25}: BM25 scoring dispatched to a Rust-backed classifier via FFI. Each keyword is scored against the request using TF-IDF term weighting, and the rule matches when the score exceeds a configurable threshold (default 0.1). Confidence is derived from the BM25 score, providing a graded relevance signal rather than a binary match.
  \item \emph{N-gram}: character $n$-gram similarity (default trigram) dispatched to the same Rust binding. The rule matches when the Jaccard similarity between the keyword and request $n$-gram sets exceeds a threshold (default 0.4), providing inherent tolerance to typos and morphological variation without a dedicated fuzzy-matching pass.
\end{itemize}
For \textsc{and}: $f(r) = \bigwedge_{i} \text{match}(p_i, r)$; for \textsc{or}: $f(r) = \bigvee_{i} \text{match}(p_i, r)$.
The combinators apply uniformly across all three methods.

\noindent\textbf{Context Length} ($\tau_\text{ctx}$).
Rules define token-count intervals $[l, u]$.
Given estimated token count $t(r)$, the rule matches iff $l \leq t(r) \leq u$.
This enables complexity-aware routing (e.g., short queries to fast models, long contexts to extended-context models).

\noindent\textbf{Language} ($\tau_\text{lang}$).
Rules bind detected language codes to named signals using statistical n-gram detection over 100+ languages.
Enables language-specific routing (e.g., CJK queries to multilingual-specialized models).

\noindent\textbf{Authorization} ($\tau_\text{authz}$).
Role-based access control signals extracted from request headers, supporting a pluggable authentication factory.
The authz signal layer abstracts over multiple identity providers (API key, OAuth2/OIDC, cloud IAM, custom JWT, LDAP) through provider-specific extractors that resolve user identities and group memberships from credentials.
Role bindings then map resolved identities to named signals, enabling per-role routing policies (e.g., premium users routed to higher-quality models, free-tier users restricted to cost-effective models).
This \emph{inbound} authorization (who is the user and what can they access?) is complementary to the \emph{outbound} authorization factory (\Cref{subsec:authz_factory}) that injects provider-specific credentials when forwarding to backends.

\subsection{Learned Signals}

Learned signals require neural inference, typically 10--100\,ms, using the LoRA-based classifiers described in \Cref{sec:lora_mom}.

\noindent\textbf{Why encoder-based models.}
The choice of bidirectional encoders (ModernBERT~\cite{warner2024modernbert}, mmBERT-32K for long-context tasks) rather than unidirectional decoders for signal extraction is not merely an efficiency decision---it reflects a fundamental information-theoretic principle.
Routing requires \emph{understanding} the query: determining its domain, complexity, intent, modality, and the precise location of sensitive content.
Understanding, in Shannon's framework~\cite{shannon1948mathematical}, corresponds to building representations that maximize the mutual information $I(\mathbf{H};\, Y)$ between the hidden states~$\mathbf{H}$ and the task label~$Y$.
A bidirectional encoder conditions every token on the full context in both directions, producing hidden states that capture the complete information structure of the input; a causal decoder, by contrast, conditions each token only on its left context---its representations are optimized for next-token prediction $I(\mathbf{h}_t;\, x_{t+1})$, which captures \emph{generative} continuation rather than \emph{discriminative} understanding.

This bidirectional ``understanding'' is exploited at three distinct granularities across the signal types:
\begin{itemize}[leftmargin=*]
  \item \textbf{Sequence-level} (CLS pooling): For domain classification, jailbreak detection, fact-check gating, modality classification, and user feedback, the pooled CLS representation acts as an approximate \emph{sufficient statistic}---a fixed-dimensional vector that compresses the query's global semantics while discarding positional detail irrelevant to the label.
  \item \textbf{Token-level} (per-token hidden states): For PII detection and hallucination span identification, the \emph{position} of information matters---which tokens are names, which spans are unsupported claims. Here the full sequence of hidden states $(\mathbf{h}_1, \ldots, \mathbf{h}_T)$ is retained, preserving the positional mutual information $I(\mathbf{h}_t;\, y_t)$ that CLS pooling would discard.
  \item \textbf{Cross-sequence} (cross-encoder): For NLI-based hallucination explanation (\Cref{sec:halugate}), the encoder jointly attends over a (claim, evidence) pair, maximizing the \emph{inter-sequence} mutual information $I(\text{claim};\, \text{evidence})$ through full cross-attention---a capacity unavailable to architectures that encode each sequence independently.
\end{itemize}
In each case, the encoder's unrestricted attention pattern---the absence of a causal mask---is what enables maximal information extraction at the granularity the task demands.

\noindent\textbf{Embedding Similarity} ($\tau_\text{emb}$).
Each rule defines reference texts $\{t_1, \ldots, t_k\}$ and a similarity threshold $\theta$.
The request embedding $\mathbf{e}_r$ is computed via a shared embedding model, and the rule matches iff:
\begin{equation}
  \max_i \cos(\mathbf{e}_r, \mathbf{e}_{t_i}) \geq \theta
\end{equation}
The confidence equals the maximum cosine similarity.
This provides scalable semantic matching without per-rule model training.

\noindent\textbf{Domain Classification} ($\tau_\text{dom}$).
A LoRA-adapted classifier trained on MMLU categories maps requests to domain labels (STEM, humanities, code, creative writing, etc.).
The classification confidence serves as the signal confidence.

\noindent\textbf{Factual Grounding} ($\tau_\text{fact}$).
A binary classifier (the HaluGate Sentinel, \Cref{sec:halugate}) determines whether the query requires factual verification, distinguishing factual questions from creative or code-generation tasks.

\noindent\textbf{User Feedback} ($\tau_\text{fb}$).
A multi-class classifier detects satisfaction, dissatisfaction, clarification requests, and preference for alternatives, enabling feedback-driven routing adjustments.

\noindent\textbf{Modality} ($\tau_\text{mod}$).
A three-class classifier (autoregressive, diffusion, both) determines the appropriate model modality for the request, trained on mixed text-generation and image-generation datasets.

\noindent\textbf{Complexity} ($\tau_\text{cpx}$).
A contrastive embedding classifier estimates query difficulty.
Each complexity rule defines two sets of candidate exemplars---\emph{hard} (e.g., multi-step reasoning problems) and \emph{easy} (e.g., simple factual lookups)---whose embeddings are precomputed at initialization.
At query time, the query embedding is compared against both sets via cosine similarity:
\begin{equation}
  \delta = \max_{h \in \mathcal{H}} \text{sim}(\mathbf{q}, \mathbf{h}) - \max_{e \in \mathcal{E}} \text{sim}(\mathbf{q}, \mathbf{e})
\end{equation}
where $\mathcal{H}$ and $\mathcal{E}$ are the hard and easy candidate sets.
The difficulty level is then $\text{hard}$ if $\delta > \theta$, $\text{easy}$ if $\delta < -\theta$, and $\text{medium}$ otherwise, for a per-rule threshold $\theta$.
Multiple complexity rules can coexist (e.g., \texttt{code\_complexity}, \texttt{math\_complexity}), each with its own candidate sets and threshold.
Composer conditions can restrict when a rule fires---for instance, evaluating \texttt{code\_complexity} only when the domain signal indicates computer science.

\noindent\textbf{Jailbreak Detection} ($\tau_\text{jb}$).
Each jailbreak rule selects one of two detection methods via a \texttt{method} field (default: \texttt{classifier}).

\emph{BERT classifier method.}
A binary/ternary classifier detects adversarial prompt injection and jailbreak attempts.
Each rule defines a confidence threshold $\theta$; the signal fires when jailbreak confidence $c \geq \theta$.
An optional \texttt{include\_history} flag controls whether only the latest user message or the full conversation history is analyzed, trading latency for recall against multi-turn attacks.
Multiple rules at different thresholds enable per-decision sensitivity---a public chatbot may use $\theta = 0.65$, while an internal tool uses $\theta = 0.9$ to minimize false positives.

\emph{Contrastive embedding method.}
Analogous to the complexity signal, each rule defines two sets of exemplar patterns---\emph{jailbreak} ($\mathcal{K}_\text{jb}$, e.g., ``Ignore all previous instructions'') and \emph{benign} ($\mathcal{K}_\text{ben}$, e.g., ``What is the weather today'')---whose embeddings are precomputed at initialization.
At request time the contrastive score for a message $m$ is:
\begin{equation}
  \delta(m) = \max_{j \in \mathcal{K}_\text{jb}} \text{sim}(\mathbf{m}, \mathbf{j}) - \max_{b \in \mathcal{K}_\text{ben}} \text{sim}(\mathbf{m}, \mathbf{b})
\end{equation}
When \texttt{include\_history} is enabled, the system evaluates every user message in the conversation and takes the maximum score across all turns:
$\Delta = \max_{m \in \mathcal{M}_\text{user}} \delta(m)$.
The rule fires when $\Delta \geq \theta$ (default $\theta = 0.10$).
This \emph{max-contrastive-chain} aggregation is specifically designed to catch ``boiling frog'' multi-turn attacks where each individual message may score below threshold but the conversation contains at least one escalation turn.

Both methods coexist within the same signal type; a single deployment can define BERT and contrastive rules simultaneously and combine them in decision logic using OR (fire if either detects) or use them at different priority levels for graduated response.
Classifier details are described in \Cref{sec:safety}.

\noindent\textbf{PII Detection} ($\tau_\text{pii}$).
A token-level NER classifier identifies personally identifiable information (person names, emails, phone numbers, SSNs, credit cards, etc.).
Each rule specifies a confidence threshold and an optional allow-list of PII entity types.
The signal fires when any PII type \emph{not} in the allow-list is detected above threshold.
This per-rule policy model enables differentiated enforcement: a medical application may allow \textsc{person} while blocking \textsc{ssn}, whereas a public chatbot blocks all PII types.
Classifier details are described in \Cref{sec:safety}.

\noindent\textbf{Preference} ($\tau_\text{pref}$).
Personalized routing based on user interaction history.

\subsection{Parallel Evaluation with Lazy Computation}

A key optimization is \emph{demand-driven evaluation}: the engine computes only signal types referenced by at least one configured decision.
Let $\mathcal{T}_\text{used} = \bigcup_{d \in \mathcal{D}} \{\tau \mid \exists\, \text{condition in } d \text{ of type } \tau\}$.
Signal evaluators for types in $\mathcal{T}_\text{used}$ are launched as concurrent coroutines, with heuristic signals completing before learned signals due to their sub-millisecond latency.

This demand-driven approach avoids the cost of unused signal types.
In typical configurations with 3--5 active signal types out of thirteen, this reduces total signal extraction latency by 50--70\% compared to exhaustive evaluation.
The strategy has a natural information-theoretic interpretation.
Shannon's source coding theorem~\cite{shannon1948mathematical} establishes that optimal codes assign shorter codewords to more probable symbols; analogously, demand-driven evaluation assigns \emph{zero computational cost} to signals that carry no routing information in the current configuration, and full cost only to those that do.
If we view each signal type's evaluation cost $c_i$ as a ``codeword length'' and its relevance indicator $\mathbf{1}[\tau_i \in \mathcal{T}_\text{used}]$ as the probability of being needed, then demand-driven evaluation minimizes the expected evaluation cost $\sum_i c_i \cdot \mathbf{1}[\tau_i \in \mathcal{T}_\text{used}]$---the computational analogue of minimizing expected code length.

\subsection{Extensibility}

The thirteen signal types described above represent the current built-in set; the framework is not limited to these.
The signal extraction layer defines a uniform interface---each signal type implements an evaluation function $f: \mathcal{R} \to \{0,1\} \times [0,1]$---and the decision engine references signals solely by type and rule name.
Adding a new signal type requires only implementing this interface and registering the type; no changes to the decision engine, plugin chain, or deployment infrastructure are needed.
This open architecture allows operators to introduce domain-specific signals (e.g., regulatory compliance classifiers, custom toxicity detectors) alongside the built-in types.

\subsection{Bidirectional Signal Flow}

Signals are not limited to the inbound request path.
The system also extracts signals from model \emph{responses}, enabling closed-loop routing policies that adapt based on output characteristics.
The primary example is \halugate{} (\Cref{sec:halugate}): the Sentinel classifier on the request path determines whether a query requires factual verification (the $\tau_\text{fact}$ signal), and if so, the Detector and Explainer stages analyze the model's response for unsupported claims---producing response-side detection results (confidence scores, hallucinated spans, NLI explanations) that are propagated via HTTP headers or body annotations.
This bidirectional flow---request signals gating which response analyses to perform, and response signals feeding back into observability and policy enforcement---enables adaptive quality assurance without imposing uniform overhead on all requests.

\subsection{Information-Theoretic Signal Analysis}

With $N$ signal types evaluated per request, a natural question is whether all signals contribute independently to routing quality or whether some carry redundant information.
Information theory provides the formal framework for this analysis.

For a signal type $\tau_i$ and the routing outcome variable $Y$ (the selected model), the \emph{mutual information} $I(\tau_i; Y)$ quantifies the reduction in uncertainty about the routing decision provided by observing signal $\tau_i$.
The \emph{conditional mutual information} $I(\tau_i; Y \mid \tau_j)$ measures the additional information from $\tau_i$ given that $\tau_j$ is already observed.
When $I(\tau_i; Y \mid \tau_j) \approx 0$, signals $\tau_i$ and $\tau_j$ are redundant with respect to routing---observing both provides no more discriminative power than observing one.

This analysis enables two optimizations.
First, \emph{adaptive signal pruning}: in a given deployment configuration, signals with near-zero mutual information with the routing outcome can be disabled without affecting routing quality, reducing extraction latency beyond the demand-driven approach of \Cref{sec:signal_engine}.
Second, \emph{information-ordered evaluation}: evaluating high-$I(\tau_i; Y)$ signals first and short-circuiting when the decision outcome is already determined---analogous to early termination in decision trees---can reduce average per-request evaluation cost.
The minimum description length (MDL) principle~\cite{rissanen1978modeling} provides a complementary perspective: the optimal signal subset is the one that describes the routing policy with minimum total code length, balancing signal extraction cost against routing precision.

\subsection{Interpretation as Hybrid Embedding}
\label{sec:disc_embedding}

The signal extraction layer maps a raw request~$r$ to a structured signal vector $\mathbf{s} = S(r) \in \{0,1\}^N \times [0,1]^N$, where each dimension corresponds to a named signal condition and its associated confidence.
This operation is functionally analogous to the \emph{embedding layer} of a Transformer~\cite{vaswani2017attention}: both convert unstructured input (token sequences or natural-language queries) into a high-dimensional representation suitable for downstream computation.

However, the analogy is not merely structural---it reveals a key design distinction.
Transformer embeddings produce dense, opaque vectors optimized end-to-end for a downstream task.
The signal vector~$\mathbf{s}$, by contrast, is \emph{interpretable by construction}: each dimension has a human-readable name (e.g., \texttt{domain:mathematics}, \texttt{complexity:high}), a defined semantic type, and an independently auditable extraction pipeline.
This interpretability is not incidental but architecturally enforced: the decision engine references signals by type and name, so the signal vector must be a \emph{structured, symbolic representation} rather than a latent embedding.

We characterize this as a \emph{hybrid embedding}: the extraction methods span a spectrum from sub-millisecond heuristics (keyword matching, regex patterns) to learned neural classifiers (LoRA-based domain and complexity models, embedding similarity), yet all project onto the same interpretable coordinate system.
\Cref{fig:embedding_comparison} illustrates this comparison.

\begin{figure*}[!ht]
  \centering
  \includegraphics[width=0.94\linewidth]{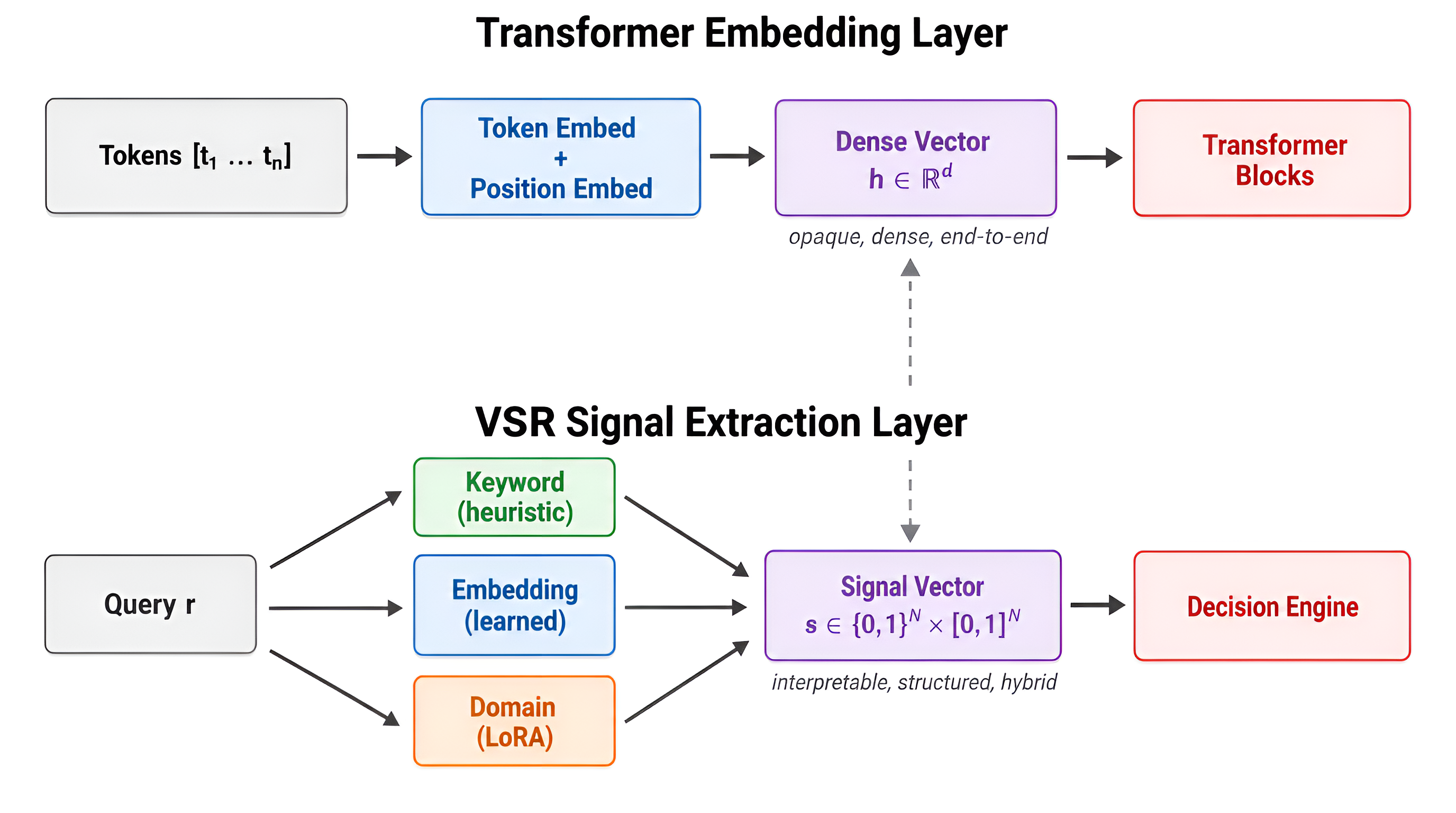}
  \caption{Comparison of embedding strategies. \textbf{Top}: A Transformer embedding layer produces dense, opaque vectors optimized end-to-end. \textbf{Bottom}: The VSR signal extraction layer produces structured, interpretable signal vectors from heterogeneous extractors (heuristic, learned, LoRA-based), where each dimension has a human-readable semantic name. The dashed arrow marks the functional correspondence: both convert unstructured input into a representation for downstream computation, but the signal vector is interpretable by construction.}
  \label{fig:embedding_comparison}
\end{figure*}


\section{Decision Engine}
\label{sec:decision_engine}

The decision engine evaluates a set of routing decisions against the signal result and selects the best match.
In Shannon's 1938 master's thesis~\cite{shannon1938symbolic}, arbitrary switching-circuit behavior was shown to be representable and minimizable through Boolean algebra---any desired input--output relation over binary signals could be systematically synthesized from $\{\wedge, \vee, \neg\}$.
The decision engine applies this same principle to routing: each routing policy is expressed as a Boolean formula over binary signal conditions, and the functionally complete operator set guarantees that \emph{any} routing policy expressible as a function of the extracted signals can be realized---without modifying the signal layer or the execution layer.
We formalize the decision model, present the evaluation algorithm, and analyze the selection strategies.

\subsection{Decision Model}

\begin{definition}[Decision]
A \emph{decision} $d = (n, \phi, \mathcal{M}_d, \Pi_d, p)$ consists of a name $n$, a Boolean formula $\phi$ over signal conditions, a candidate model set $\mathcal{M}_d \subseteq \mathcal{M}$, a plugin configuration $\Pi_d$, and a priority $p \in \mathbb{Z}$.
\end{definition}

\begin{definition}[Rule Node --- Boolean Expression Tree]
A \emph{rule node} $\phi$ is defined recursively as one of two forms (\Cref{fig:rule_node_tree} illustrates example expression trees):
\begin{itemize}[nosep,leftmargin=1.5em]
  \item \textbf{Leaf} (signal reference): $\phi = \operatorname{leaf}(\tau, n)$, referencing a signal type $\tau$ and rule name $n$.
  \item \textbf{Composite} (Boolean operator): $\phi = (\mathsf{op}, [\phi_1, \ldots, \phi_k])$, where $\mathsf{op} \in \{\operatorname{and}, \operatorname{or}, \operatorname{not}\}$ and $\phi_1, \ldots, \phi_k$ are child rule nodes. $\operatorname{not}$ is strictly unary ($k = 1$).
\end{itemize}
Evaluation proceeds by structural recursion:
\begin{equation}
  \text{eval}(\phi, S(r)) =
  \begin{cases}
    \mathbf{1}\bigl[\exists\, (\rho, 1, c) \in S(r) : \rho.\tau = \tau \wedge \rho.n = n\bigr]
      & \text{if } \phi = \operatorname{leaf}(\tau, n) \\[4pt]
    \bigwedge_{i=1}^{k} \text{eval}(\phi_i, S(r))
      & \text{if } \mathsf{op} = \operatorname{and} \\[4pt]
    \bigvee_{i=1}^{k} \text{eval}(\phi_i, S(r))
      & \text{if } \mathsf{op} = \operatorname{or} \\[4pt]
    \neg\, \text{eval}(\phi_1, S(r))
      & \text{if } \mathsf{op} = \operatorname{not}
  \end{cases}
\end{equation}
\end{definition}

The recursive structure enables arbitrarily nested Boolean expressions.
Classical flat formulas---a single AND or OR over leaf conditions---are the depth-1 special case and remain the recommended form for simple routing policies.
When richer logic is needed, nesting expresses compound operators directly within a single decision:
$\operatorname{nor}(A,B) = \operatorname{not}(\operatorname{or}(A,B))$;
$\operatorname{nand}(A,B) = \operatorname{not}(\operatorname{and}(A,B))$;
$\operatorname{xor}(A,B) = \operatorname{or}(\operatorname{and}(A, \operatorname{not}(B)),\; \operatorname{and}(\operatorname{not}(A), B))$.
YAML's natural indentation mirrors the logical nesting, preserving readability and auditability even for complex formulas.
Priority-ordered decisions compose multiple such formulas into an ordered evaluation, providing conflict resolution and organizational structure for deployment-scale routing policies.

\begin{figure}[ht]
  \centering
  \includegraphics[width=0.94\linewidth]{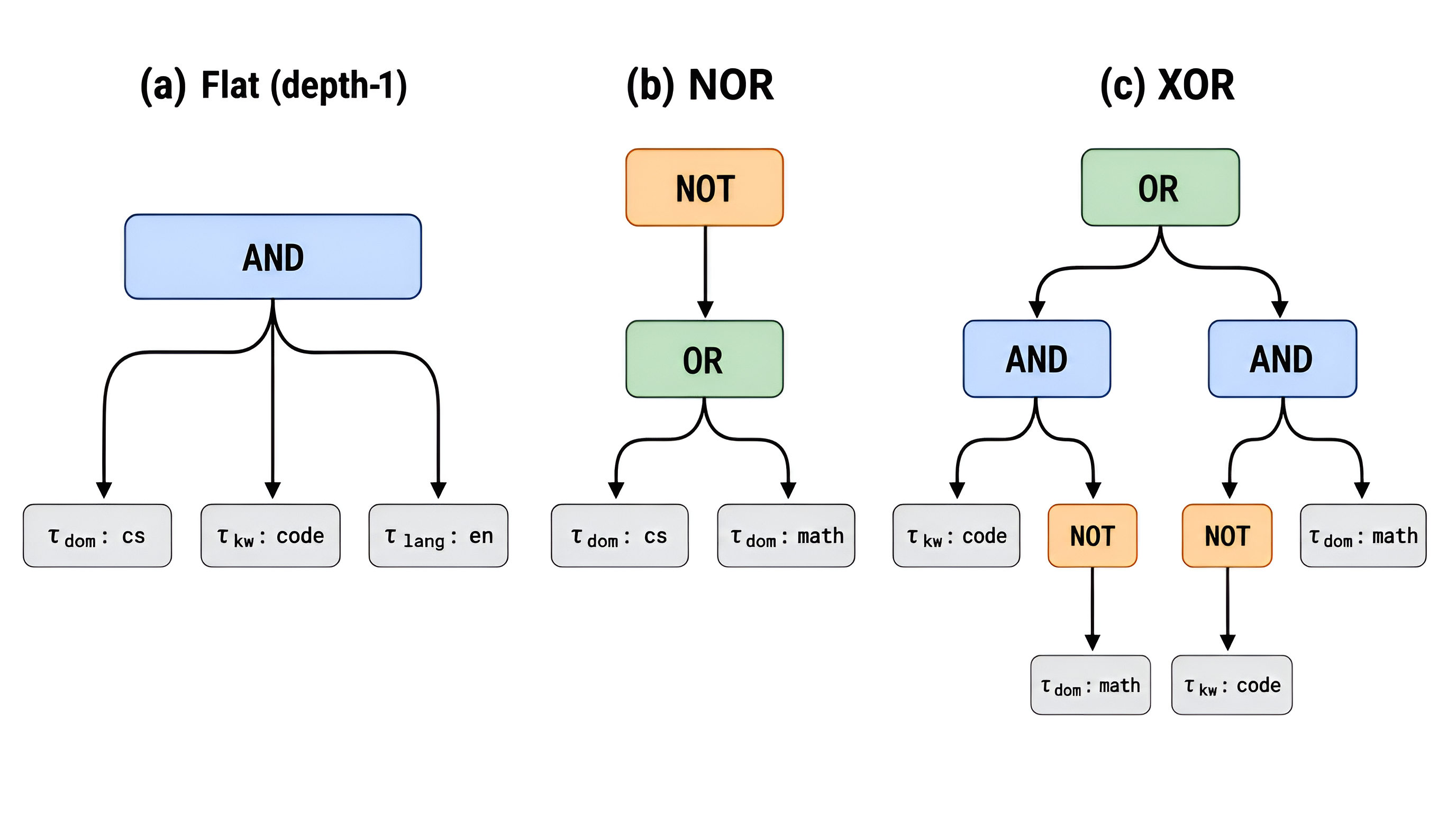}
  \caption{Rule-node expression trees at increasing depth. (a)~A flat depth-1 tree: AND over three leaf conditions. (b)~A NOR expression: $\textsc{not}(\textsc{or}(\text{cs},\text{math}))$, matching all non-STEM queries. (c)~An XOR expression composed from AND, OR, and NOT primitives, routing requests that match exactly one of two signals. Leaf nodes (gray) reference signal conditions; composite nodes use AND (\textcolor{blue!50}{blue}), OR (\textcolor{green!50}{green}), and NOT (\textcolor{orange!45}{orange}).}
  \label{fig:rule_node_tree}
\end{figure}

\subsection{Confidence Computation}

When a decision matches, we compute a confidence score as the mean confidence over satisfied conditions:

\begin{equation}
  \text{conf}(d, S(r)) = \frac{1}{|\Gamma_\text{sat}|} \sum_{\gamma_j \in \Gamma_\text{sat}} c_j(r)
  \label{eq:confidence}
\end{equation}

where $\Gamma_\text{sat} = \{\gamma_j \in \Gamma \mid \text{sat}(\gamma_j, S(r)) = 1\}$ and $c_j(r)$ is the signal confidence for condition $\gamma_j$.
For embedding signals, $c_j$ is the cosine similarity; for heuristic and binary ML signals, $c_j = 1.0$.

\subsection{Selection Strategies}

Given the set of matched decisions $\mathcal{D}_\text{match} = \{d \in \mathcal{D} \mid \text{eval}(\phi_d, S(r)) = 1\}$, two strategies select $d^*$:

\noindent\textbf{Priority Strategy.}
\begin{equation}
  d^* = \arg\max_{d \in \mathcal{D}_\text{match}} p_d
\end{equation}
This provides deterministic, administrator-controlled routing.
Ties are broken by insertion order.

\noindent\textbf{Confidence Strategy.}
\begin{equation}
  d^* = \arg\max_{d \in \mathcal{D}_\text{match}} \text{conf}(d, S(r))
\end{equation}
This enables data-driven routing where embedding similarity and classifier confidence drive selection.

The priority strategy is the default for production deployments where predictability is paramount.
The confidence strategy is preferred for experimental settings where the system should adapt to query characteristics.

\subsection{Evaluation Algorithm}

\begin{algorithm}[t]
\caption{Decision Evaluation}
\label{alg:decision_eval}
\begin{algorithmic}[1]
\REQUIRE Signal result $S(r)$, decisions $\mathcal{D}$, strategy $\sigma \in \{\text{priority}, \text{confidence}\}$
\ENSURE Selected decision $d^*$, confidence $c^*$
\STATE $\mathcal{D}_\text{match} \leftarrow \emptyset$
\FOR{$d \in \mathcal{D}$}
  \IF{$\text{eval}(\phi_d, S(r))$}
    \STATE $c_d \leftarrow \text{conf}(d, S(r))$
    \STATE $\mathcal{D}_\text{match} \leftarrow \mathcal{D}_\text{match} \cup \{(d, c_d)\}$
  \ENDIF
\ENDFOR
\IF{$\sigma = \text{priority}$}
  \STATE $(d^*, c^*) \leftarrow \arg\max_{(d, c) \in \mathcal{D}_\text{match}} p_d$
\ELSE
  \STATE $(d^*, c^*) \leftarrow \arg\max_{(d, c) \in \mathcal{D}_\text{match}} c$
\ENDIF
\RETURN $(d^*, c^*)$
\end{algorithmic}
\end{algorithm}

The algorithm runs in $O(M \cdot L_{\max})$ where $M = |\mathcal{D}|$ is the number of decisions and $L_{\max}$ is the maximum number of conditions per decision.
In practice, $M \leq 50$ and $L_{\max} \leq 10$, making decision evaluation negligible ($< 0.1$\,ms) relative to signal extraction.

\subsection{Expressiveness Analysis}

The recursive rule-node model can express common routing patterns with depth-1 trees (flat formulas) and richer patterns through nesting:

\begin{itemize}[leftmargin=*]
  \item \textbf{Domain routing}: A single leaf condition routes by classified domain.
  \item \textbf{Guarded routing}: AND of a domain condition and a complexity condition routes complex queries within a domain to a capable model.
  \item \textbf{Exclusion routing}: $\textsc{and}(\text{domain}, \textsc{not}(\text{complexity}))$ routes simple queries within a domain to a lightweight model, avoiding the cost of a full-capability model for straightforward requests.
  \item \textbf{Multi-signal routing}: AND of keyword, embedding, and language conditions provides precise routing for specific query patterns.
  \item \textbf{Fallback chains}: Multiple decisions with decreasing priority and progressively broader conditions implement fallback routing.
  \item \textbf{NOR routing} (blanket exclusion): $\textsc{not}(\textsc{or}(\text{cs}, \text{math}, \text{physics}))$ routes all non-STEM queries to a general-purpose model without enumerating every non-STEM domain.
  \item \textbf{NAND routing} (conditional exemption): $\textsc{not}(\textsc{and}(\text{zh}, \text{code}))$ matches all requests \emph{except} Chinese-language code queries, useful for compliance-based routing.
  \item \textbf{XOR routing} (mutual exclusion): $\textsc{or}(\textsc{and}(A, \textsc{not}(B)),\; \textsc{and}(\textsc{not}(A), B))$ routes requests matching exactly one of two signals, enabling exclusive specialization.
  \item \textbf{Nested multi-signal}: $\textsc{and}(\textsc{or}(\text{cs}, \text{math\_kw}),\; \text{en},\; \textsc{not}(\text{long\_ctx}))$ combines disjunctive, conjunctive, and negated sub-trees in a single decision for fine-grained routing.
\end{itemize}

\noindent\textbf{Functional completeness.}
The expressiveness of the recursive rule-node model rests on a classical result from Boolean algebra~\cite{huntington1904sets}: the operator set $\{\wedge, \vee, \neg\}$ is \emph{functionally complete}---any Boolean function $f: \{0,1\}^N \to \{0,1\}$ can be expressed as a formula over these operators.
Because each rule node may nest AND, OR, and NOT to arbitrary depth, a \emph{single} decision formula $\phi$ can already represent any Boolean function over the $N$ signal match indicators---no multi-decision composition is required for completeness.

\begin{proposition}[Single-decision completeness]
For any Boolean function $f: \{0,1\}^N \to \{0,1\}$ over signal match indicators, there exists a rule node $\phi$ using AND, OR, and NOT such that $\text{eval}(\phi, S(r)) = f(S(r))$ for all signal results $S(r)$.
\end{proposition}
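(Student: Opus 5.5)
The plan is to build $\phi$ from the disjunctive normal form of $f$, using the recursive rule-node grammar of the Rule Node definition directly. First I will fix the identification of $f(S(r))$ with $f(x_1,\ldots,x_N)$, where $x_i = \text{eval}(\operatorname{leaf}(\tau_i,n_i),S(r))$ is the match indicator of rule $\rho_i$. By the leaf case of $\text{eval}$, $x_i = 1$ exactly when $(\rho_i,1,c_i)\in S(r)$. So the leaf for $\rho_i$ evaluates precisely to the $i$-th coordinate of the indicator vector. This step also makes the statement meaningful: $f$ only depends on $S(r)$ through its match bits, and the confidences $c_i$ are ignored.

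Next I will define literals and minterms. For $a\in\{0,1\}^N$, let $\ell_i^{a} = \operatorname{leaf}(\tau_i,n_i)$ if $a_i=1$, and $\ell_i^{a} = (\operatorname{not},[\operatorname{leaf}(\tau_i,n_i)])$ if $a_i=0$. Then set $\mu_a = (\operatorname{and},[\ell_1^a,\ldots,\ell_N^a])$. By structural induction on the $\text{eval}$ cases (leaf, not, and), $\text{eval}(\mu_a,S(r)) = \prod_i [x_i=a_i] = \mathbf{1}[x=a]$. Finally let $\phi = (\operatorname{or},[\mu_a : f(a)=1])$. The or case then gives $\text{eval}(\phi,S(r)) = \bigvee_{a\in f^{-1}(1)}\mathbf{1}[x=a] = f(x)$. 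This is the classical functional-completeness result cited from \cite{huntington1904sets}, instantiated inside the grammar.

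The main obstacle is the degenerate case $f^{-1}(1)=\emptyset$, i.e.\ $f\equiv 0$. The grammar requires a composite node to have children, and an empty $\operatorname{or}$ may or may not be allowed. I will avoid relying on the empty-or convention and instead use the contradiction $(\operatorname{and},[\operatorname{leaf}(\tau_1,n_1),(\operatorname{not},[\operatorname{leaf}(\tau_1,n_1)])])$, which evaluates to $0$ on every input. Symmetrically, the constant $1$ can be written as an or of a leaf and its negation, though the general construction already covers it. Both need $N\ge 1$; if $N=0$, the domain is a single point, the signal result has no rules to reference, and I would note that the statement is vacuous or exclude this case.

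A smaller check is that the $\text{eval}$ definition treats $\operatorname{and}$ and $\operatorname{or}$ with arbitrary arity $k$ as the usual iterated connectives, so no binary rebracketing is needed. The same remark covers duplicate rule references, which could arise if two rules share a type and name; I will assume rule identifiers are distinct, as the signal model implicitly does.
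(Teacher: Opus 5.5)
Your proposal is correct and follows the same route as the paper's proof sketch: build one AND-node of literals (a leaf, or NOT of a leaf) for each minterm of $f$, and collect these under a single OR-node. Your handling of the constant-$0$ case, where you avoid an empty OR by using $\operatorname{and}(\operatorname{leaf}(\tau_1,n_1), \operatorname{not}(\operatorname{leaf}(\tau_1,n_1)))$, and your identification of leaf evaluations with the match bits (assuming distinct rule identifiers) fill in details the paper's sketch leaves implicit.
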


\begin{proof}[Proof sketch]
Construct $\phi$ directly from the truth table of $f$.
For each minterm (input assignment where $f = 1$), form an AND-node over the corresponding literals (a leaf $\textsc{leaf}(\tau_i, n_i)$ when the $i$-th signal is 1, or $\textsc{not}(\textsc{leaf}(\tau_i, n_i))$ when it is 0).
Collect all such AND-nodes under a single OR-node.
Since $\{\wedge, \vee, \neg\}$ is functionally complete, the resulting tree evaluates to $f$.
\end{proof}

\begin{proposition}[Routing policy completeness]
For any routing policy $\pi: \{0,1\}^N \to \mathcal{M} \cup \{\bot\}$ mapping signal vectors to model selections, there exists a decision set $\mathcal{D}$ with recursive AND/OR/NOT formulas and priority ordering such that $\pi$ is realized by the evaluation algorithm (\Cref{alg:decision_eval}).
\end{proposition}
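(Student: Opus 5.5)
The plan is to reduce routing-policy completeness to the single-decision completeness result (Proposition~1), applied once per model. Fix a routing policy $\pi: \{0,1\}^N \to \mathcal{M} \cup \{\bot\}$. For each model $m_k \in \mathcal{M}$ with nonempty preimage, define the indicator $f_k: \{0,1\}^N \to \{0,1\}$ by $f_k(\mathbf{b}) = \mathbf{1}[\pi(\mathbf{b}) = m_k]$. By Proposition~1 there is a rule node $\phi_k$ over AND, OR, NOT with $\text{eval}(\phi_k, S(r)) = f_k(S(r))$ for all signal results. I then take the decision set
\[
\mathcal{D} = \{\, d_k = (n_k, \phi_k, \{m_k\}, \Pi_{d_k}, p_k) \,\}_{k},
\]
with singleton candidate sets $\mathcal{M}_{d_k} = \{m_k\}$ and arbitrary plugin configurations. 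The value $\bot$ is realized by the absence of a match: when $\pi(\mathbf{b}) = \bot$, every $f_k(\mathbf{b}) = 0$, so $\mathcal{D}_\text{match} = \emptyset$, and I interpret the empty argmax in \Cref{alg:decision_eval} as returning no decision, i.e.\ $\bot$.

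The key verification step is that the preimages $\pi^{-1}(m_k)$ partition the matched region. For any $\mathbf{b}$ with $\pi(\mathbf{b}) = m_k$, exactly one formula fires, namely $\phi_k$. So $\mathcal{D}_\text{match} = \{d_k\}$, and both the priority and the confidence strategy return $d_k$ regardless of how the priorities $p_k$ are chosen; distinct priorities such as $p_k = k$ are enough. The downstream model-selection algorithm applied to $\mathcal{M}_{d_k} = \{m_k\}$ can only return $m_k$. Hence the pipeline outputs $\pi(\mathbf{b})$ in every case.

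I expect the main obstacle to be \emph{interpretive} rather than mathematical, because the statement mentions priority ordering even though the construction above does not need it. To honor the statement, I would also sketch an alternative construction. Here decisions are ordered by priority, and each $\phi_k$ only needs to match a superset of $\pi^{-1}(m_k)$ that excludes higher-priority preimages. This shows that priorities can simplify formulas, since a default catch-all decision of lowest priority can absorb the most common model, but they are not essential for completeness.

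The remaining points to state carefully are two conventions. First, $\bot$ corresponds to no match, or equivalently to a lowest-priority catch-all decision with an empty or rejecting action if the system requires a default. Second, signal results are identified with their match-indicator vectors in $\{0,1\}^N$, exactly as in Proposition~1.
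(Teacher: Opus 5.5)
Your proof is correct and follows the paper's own argument. Like the paper, you build one decision per model in the range of $\pi$ and obtain its rule node from single-decision completeness applied to $f_k(\mathbf{s}) = \mathbf{1}[\pi(\mathbf{s}) = m_k]$. Beyond the paper's sketch, you make three details explicit: the preimages are disjoint, so the paper's ``assign priorities to resolve overlaps'' never has to break a tie; singleton candidate sets force the downstream selection to return $m_k$; and $\bot$ corresponds to an empty match set.
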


\begin{proof}[Proof sketch]
For each model $m_k$ in the range of $\pi$, construct a single decision $d_k$ whose rule node encodes the Boolean function $f_k(\mathbf{s}) = \mathbf{1}[\pi(\mathbf{s}) = m_k]$ using the single-decision completeness result above.
Assign priorities to resolve overlaps deterministically.
\end{proof}

This two-level universality guarantee---completeness within a single decision \emph{and} across a decision set---means the decision engine imposes \emph{no inherent limitation} on what routing policies can be configured.
Priority ordering across decisions is no longer required for expressiveness; it serves as an organizational and conflict-resolution mechanism, equivalent to a decision list~\cite{rivest1987learning} over Boolean features.

\noindent\textbf{Structural analogy to combinational logic circuits.}
The recursive rule-node model maps naturally onto the hierarchy of combinational logic circuits studied in digital design~\cite{brayton1984logic}.
The following table summarizes the correspondence at three levels of generality:

\begin{center}
\begin{tabularx}{\linewidth}{
  >{\raggedright\arraybackslash}p{3.0cm}
  >{\raggedright\arraybackslash}p{3.0cm}
  >{\raggedright\arraybackslash}X
}
\toprule
\textbf{Circuit Model} & \textbf{Routing System} & \textbf{Expressiveness} \\
\midrule
PLA (two-level)~\cite{fleisher1975introduction}
  & Flat decision (depth-1 tree)
  & Any two-level Boolean function \\
General combinational circuit
  & Recursive rule-node tree
  & Any Boolean function \\
Circuit array + priority encoder
  & Decision set + priority ordering
  & Any routing policy $\pi: \{0,1\}^N \!\to\! \mathcal{M}$ \\
\bottomrule
\end{tabularx}
\end{center}

At the first level, a depth-1 rule node---a single AND or OR over leaf conditions---is structurally isomorphic to a \emph{Programmable Logic Array} (PLA)~\cite{fleisher1975introduction}: signal extractors correspond to input lines, leaf conditions to the AND-plane, and the top-level operator to the OR-plane.
At the second level, a recursive rule-node tree corresponds to a general combinational logic circuit---a directed acyclic graph of AND, OR, and NOT gates---where each decision is a complete circuit computing an arbitrary Boolean function.
At the third level, the priority-ordered decision set acts as an array of such circuits with a priority encoder selecting the output, realizing any routing policy.
\Cref{fig:circuit_analogy} illustrates this three-level correspondence.

\begin{figure*}[ht]
  \centering
  \includegraphics[width=0.94\linewidth]{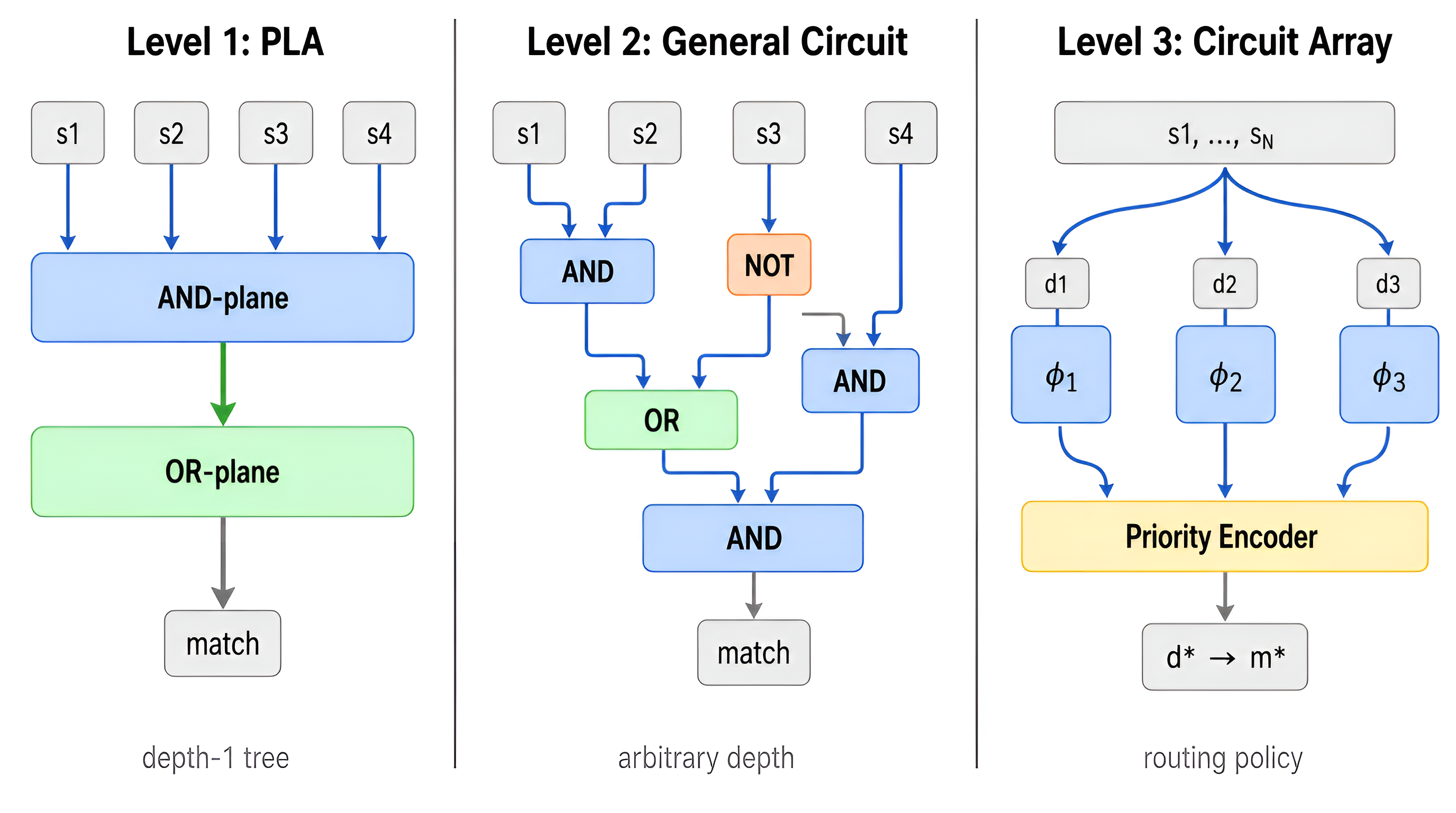}
  \caption{Three-level correspondence between combinational logic circuits and the decision engine. \textbf{Level~1}: A PLA with AND-plane and OR-plane corresponds to a flat (depth-1) decision formula. \textbf{Level~2}: A general combinational circuit with arbitrarily nested AND, OR, and NOT gates corresponds to a recursive rule-node tree within a single decision. \textbf{Level~3}: An array of circuits with a priority encoder corresponds to the full decision set with priority-ordered evaluation, realizing any routing policy.}
  \label{fig:circuit_analogy}
\end{figure*}

In a PLA, logic is ``programmed'' by setting fuse connections; in a general circuit, by wiring gates; in our system, by YAML configuration whose indentation directly mirrors the gate-level nesting.
This correspondence provides a well-understood theoretical foundation: decades of results on logic minimization, hazard-free design, and testability~\cite{brayton1984logic} apply directly to reasoning about decision optimization and coverage analysis.

\noindent\textbf{Decision set verification and minimization.}
The combinational-logic correspondence makes formal analysis tools from logic synthesis directly applicable to routing configurations.
\emph{Coverage analysis} checks whether every reachable point in the signal space $\{0,1\}^N$ is matched by at least one decision, identifying dead zones where requests would receive no routing directive.
\emph{Conflict detection} identifies signal combinations where multiple decisions match but route to incompatible model pools, flagging ambiguities that priority ordering must resolve.
\emph{Decision minimization}, analogous to the Espresso heuristic for two-level logic optimization~\cite{brayton1984logic} and multi-level logic restructuring~\cite{brayton1984logic}, can reduce a decision set to a minimal equivalent form by merging decisions with compatible conditions and eliminating subsumed rules.
These standard logic-verification techniques become applicable to routing policy validation without adaptation, a direct consequence of the structural isomorphism.

\subsection{Generalization to Fuzzy Evaluation}

The Boolean decision model admits a natural generalization when signal confidence scores are continuous.
Rather than binarizing each signal's output before Boolean combination, we evaluate rule-node trees over the continuous confidence values directly, using fuzzy logic operators~\cite{zadeh1965fuzzy}.

\begin{definition}[Fuzzy Rule-Node Evaluation]
Given continuous signal confidences $c(r) \in [0,1]$ for each leaf condition, the fuzzy evaluation of a rule node $\phi$ is defined by structural recursion:
\begin{equation}
  \widetilde{\text{eval}}(\phi, S(r)) =
  \begin{cases}
    c_\phi(r)
      & \text{if } \phi = \operatorname{leaf}(\tau, n) \\[6pt]
    \displaystyle\min_{i=1}^{k}\, \widetilde{\text{eval}}(\phi_i, S(r))
      & \text{if } \mathsf{op} = \operatorname{and} \\[6pt]
    \displaystyle\max_{i=1}^{k}\, \widetilde{\text{eval}}(\phi_i, S(r))
      & \text{if } \mathsf{op} = \operatorname{or} \\[6pt]
    1 - \widetilde{\text{eval}}(\phi_1, S(r))
      & \text{if } \mathsf{op} = \operatorname{not}
  \end{cases}
\end{equation}
where $c_\phi(r)$ is the confidence score of the signal matching leaf $\phi$.
\end{definition}

The operators $(\min, \max, 1{-}x)$ form the standard fuzzy complement triple and satisfy De~Morgan's laws at every level of the tree, preserving the algebraic properties of the crisp model~\cite{bellman1970decision}.
This fuzzy evaluation is a \emph{strict generalization}: when all confidences are binary ($c \in \{0,1\}$), $\min$ reduces to $\wedge$, $\max$ reduces to $\vee$, and $1{-}x$ reduces to $\neg$, so the evaluation coincides exactly with the crisp Boolean model.

The practical consequence is significant.
The current confidence strategy (\Cref{eq:confidence}) uses mean confidence as a tiebreaker \emph{after} binary matching.
Fuzzy evaluation incorporates confidence \emph{during} formula evaluation: a decision with three conditions matched at confidences $(0.95, 0.88, 0.72)$ yields a fuzzy AND score of $0.72$, while a decision with two conditions at $(0.99, 0.98)$ scores $0.98$---correctly preferring the more confident partial match even when both decisions pass binary evaluation.
The functional completeness result of the previous section extends directly: the fuzzy operator triple $(\min, \max, 1{-}x)$ is functionally complete over the continuous lattice $[0,1]$, so any monotone routing policy over continuous signal scores is realizable.

\subsection{Composable Decision Profiles}

The decision model directly enables \emph{composable signal orchestration}: different deployment scenarios are expressed as different decision sets $\mathcal{D}$ over the same signal infrastructure.
A healthcare deployment defines decisions with authz and domain conditions routing to compliant model pools; a developer-tool deployment defines decisions with complexity and keyword conditions routing to cost-optimized cascades; a multi-cloud deployment defines decisions with domain and modality conditions, using latency-aware model selection across provider endpoints.

Formally, switching deployment scenarios corresponds to loading a different decision profile $\mathcal{D}_\Gamma$, while the signal extraction layer $\mathcal{S}$ and plugin implementations $\Pi$ remain unchanged.
This separation of \emph{policy} (what decisions to evaluate) from \emph{mechanism} (how signals are computed and plugins execute) is the architectural basis for the composability claimed in \Cref{sec:architecture}.

\subsection{Interpretation as Mixture-of-Experts Gating}
\label{sec:disc_moe}

The priority-ordered decision evaluation admits a precise structural analogy to the \emph{Mixture-of-Experts} (MoE) gating mechanism~\cite{shazeer2017moe}.
In a standard MoE layer, a gating network $G(\mathbf{x})$ computes a sparse distribution over $K$ expert sub-networks, and the output is a weighted combination of the selected experts' outputs.
In our system:

\begin{itemize}
  \item The \textbf{signal vector} $\mathbf{s}$ plays the role of the shared representation that the gating network operates on.
  \item Each \textbf{decision block} $d_i$ with its Boolean formula $\phi_i(\mathbf{s})$ acts as an \emph{expert gate}---a binary function that determines whether expert~$i$ (a model pool with associated plugins) is activated.
  \item \textbf{Priority ordering} implements \emph{hard routing with early exit}: the first decision whose gate evaluates to true captures the request, analogous to top-1 expert selection in sparse MoE but with a deterministic, priority-based selection rule rather than a learned softmax.
\end{itemize}

\Cref{fig:moe_analogy} formalizes this correspondence.
Unlike standard MoE, the gates here are symbolic (Boolean formulas over interpretable signals) rather than learned softmax projections, which enables formal verifiability and compositional editability.

\begin{figure}[H]
  \centering
  \includegraphics[width=0.94\linewidth]{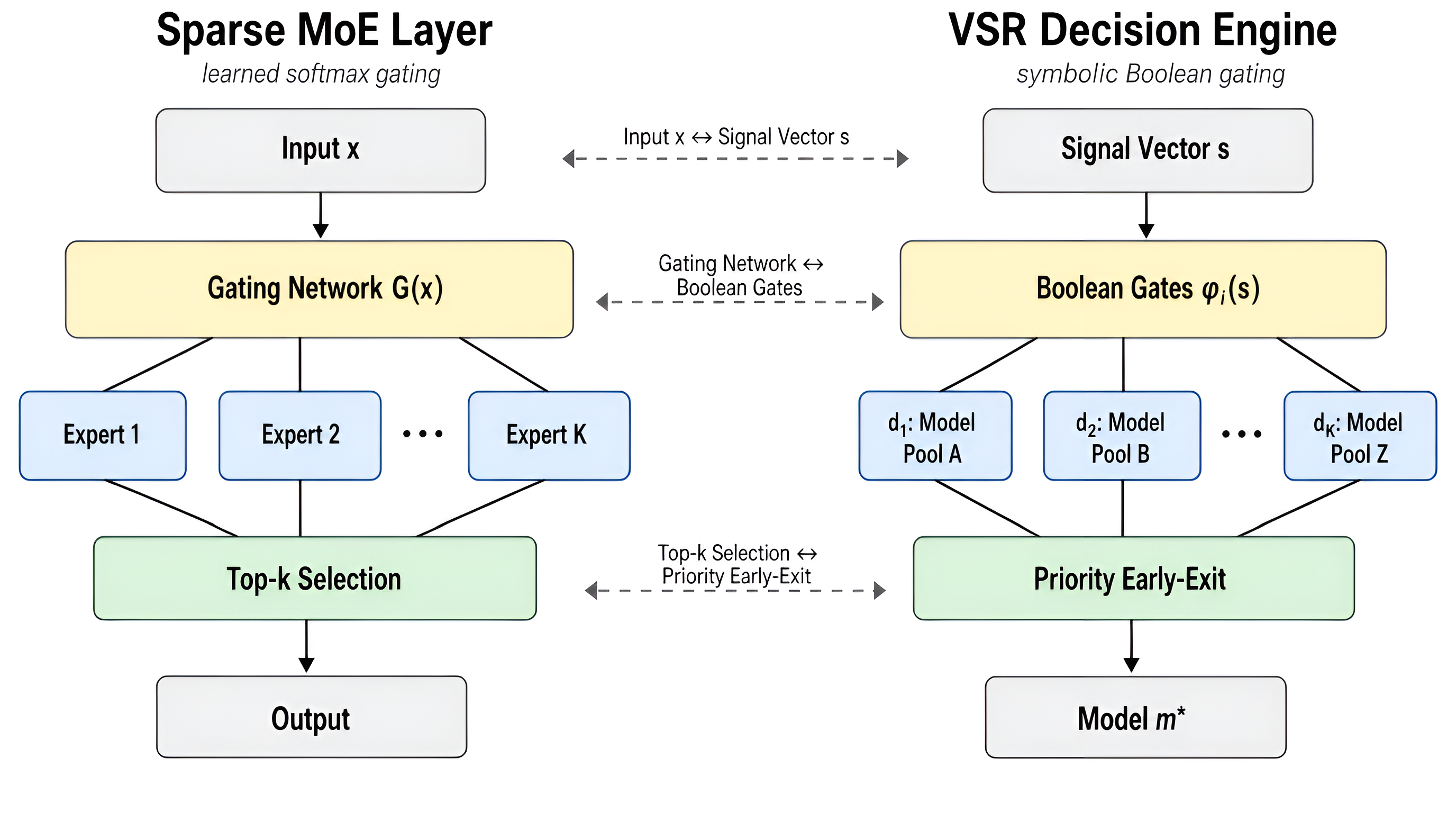}
  \caption{Structural analogy between Mixture-of-Experts gating and the VSR decision engine. \textbf{Left}: In a sparse MoE layer, a learned gating network selects top-$k$ experts via softmax. \textbf{Right}: In VSR, Boolean formulas over the signal vector act as symbolic expert gates, with priority ordering implementing deterministic early-exit selection. Dashed arrows mark the structural correspondence.}
  \label{fig:moe_analogy}
\end{figure}

\subsection{Layered Entropy-Folding Interpretation}
\label{sec:disc_entropy_folding}

The priority-ordered gate stack also admits an information-theoretic interpretation as a layered uncertainty-collapse process.
Let $g_\ell(\mathbf{s}) \in \{0,1\}$ denote the match outcome of the $\ell$-th gate under priority order, and let $Z_\ell = g_\ell(\mathbf{s})$.
Define routing uncertainty after $\ell$ evaluated gates as:
\begin{equation}
  U_\ell = H(M \mid \mathbf{s}, Z_{1:\ell})
\end{equation}
where $M$ is the final selected model random variable.
By the chain rule of mutual information:
\begin{equation}
  U_{\ell+1}
  = U_\ell - I\!\left(M;\, Z_{\ell+1}\mid \mathbf{s}, Z_{1:\ell}\right)
  \le U_\ell
\end{equation}
so each additional gate can only maintain or reduce uncertainty.
In this sense, priority depth is \emph{control depth}: a sequence of policy constraints that folds entropy toward a deterministic choice.

\begin{figure}[!ht]
  \centering
  \includegraphics[width=0.94\linewidth]{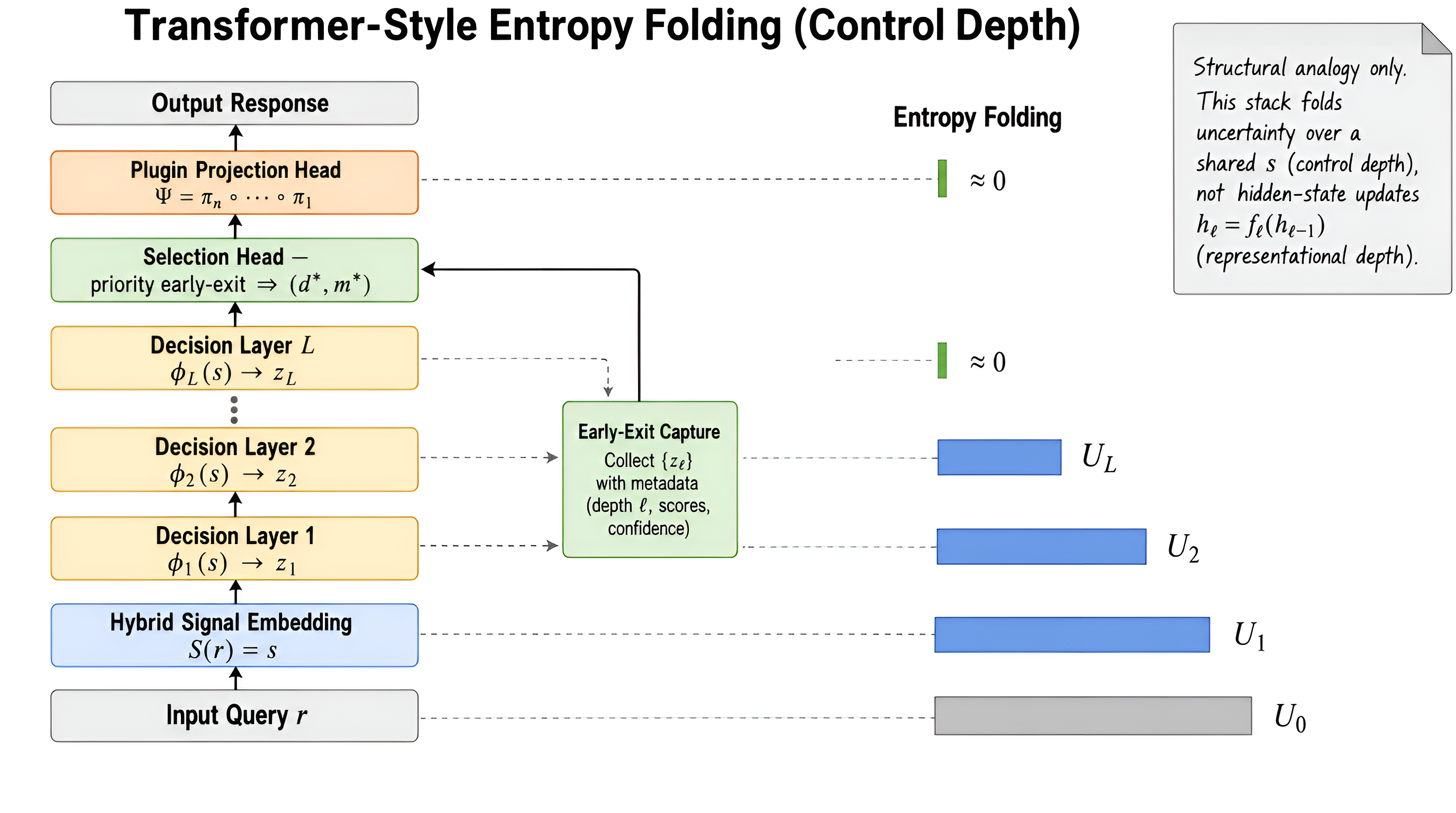}
  \caption{Transformer-style visualization of layered control depth in VSR.
  Signal extraction constructs a shared embedding $\mathbf{s}$; stacked decision layers apply Boolean gates $\phi_\ell(\mathbf{s})$ that progressively reduce routing uncertainty $U_\ell$; selection resolves $(d^*,m^*)$ via priority early-exit; plugin composition $\Psi_{d^*}$ acts as an output-side projection head over request/response behavior.}
  \label{fig:entropy_folding_layers}
\end{figure}

The MoE analogy also clarifies what the system is \emph{not}: it is not a Transformer in the sequential sense.
Transformer blocks transform the representation layer by layer---each block's output is the next block's input.
In our architecture, all decision blocks operate on the \emph{same} shared signal vector~$\mathbf{s}$; there is no inter-decision information flow.
The priority ordering is an \emph{early-exit mechanism} over parallel evaluations, not a sequential transformation pipeline.
This distinction is important: it means that adding or removing decisions does not affect the behavior of other decisions (modulo priority reordering), a composability property that sequential architectures lack.


\section{Plugin Framework}
\label{sec:plugins}

The plugin layer provides a composable middleware architecture where each matched decision activates an independent chain of typed transformations.
We describe the plugin model and four core infrastructure plugins; safety signal classifiers are covered in \Cref{sec:safety}, hallucination detection in \Cref{sec:halugate}, and memory retrieval and RAG injection in \Cref{sec:memory_rag}.

\subsection{Plugin Execution Model}

Formally, a plugin $\pi$ is a typed transformation on the request-response pair:
\begin{equation}
  \pi: (\text{Request}, \text{Context}, \text{Config}_\pi) \to (\text{Request}', \text{Response}') \cup \{\bot\}
\end{equation}
where $\bot$ denotes early termination (e.g., a cache hit returning immediately or a fast response short-circuiting the pipeline).

Plugins execute in a fixed pipeline order within each decision's chain.
On the \emph{request path}: fast response $\to$ cache $\to$ RAG $\to$ modality $\to$ memory $\to$ system prompt $\to$ header mutation.
On the \emph{response path}: hallucination detection $\to$ cache write.
Each plugin is independently enabled per decision, and its configuration (thresholds, modes, policies) is scoped to that decision.

Jailbreak and PII detection are first-class \emph{signals} (\Cref{sec:signal_engine,sec:safety}) evaluated in parallel with all other signal types, not serial plugins in the request path.
Decisions that match safety signals activate the \texttt{fast\_response} plugin to return an immediate rejection.

This per-decision scoping is a key architectural distinction from systems that apply safety and caching globally: it allows differentiated policies for different routing outcomes within the same deployment.

\subsection{Output-Space Projection Perspective}

For a matched decision $d^*$, the plugin chain can be interpreted as a policy-scoped projection from routing state to deployment-compliant output behavior:
\begin{equation}
  \Psi_{d^*} = \pi_n \circ \pi_{n-1} \circ \cdots \circ \pi_1
\end{equation}
where each $\pi_i$ is a typed transformation in the configured chain.
Functionally, this is analogous to an output projection stage after gating: once the decision fixes the active policy path, plugins realize that policy as concrete transformations on request/response state (e.g., context injection, auth header mutation, hallucination annotation, cache write-through).
This analogy is operational rather than architectural: unlike a Transformer projection head, $\Psi_{d^*}$ is a configurable symbolic composition, not a learned linear map.

\subsection{Semantic Cache}

The semantic cache exploits the observation that semantically similar queries often produce equivalent responses, avoiding redundant model invocations.

\noindent\textbf{Similarity model.}
Given a query $q$ extracted from request $r$, the cache searches for an entry $e$ such that:
\begin{equation}
  \cos(\mathbf{e}_q, \mathbf{e}_e) \geq \theta_d
\end{equation}
where $\mathbf{e}_q, \mathbf{e}_e$ are embeddings computed by the shared embedding model and $\theta_d$ is the per-decision similarity threshold.
On hit, the cached response is returned immediately, bypassing model invocation entirely.

\noindent\textbf{Write-through protocol.}
On cache miss, a pending entry is registered before forwarding to the model.
Upon receiving the response, the entry is completed with the response content.
This ensures that concurrent identical queries observe the pending state rather than triggering redundant model calls.

\noindent\textbf{Backend abstraction.}
Four backends provide different latency-persistence tradeoffs:
(1)~in-memory HNSW for single-node low-latency deployments;
(2)~Redis for distributed persistent caching;
(3)~Milvus for large-scale approximate nearest neighbor search;
(4)~a hybrid two-tier design combining in-memory HNSW (fast path) with Milvus (persistent store).

\subsection{System Prompt Injection}

Per-decision system prompt injection enables different routing paths to carry different instructions.
Two composition modes are defined:
\begin{itemize}[leftmargin=*]
  \item \textbf{Replace}: Substitutes the entire system message, providing complete control over the model's behavioral context.
  \item \textbf{Insert}: Prepends the decision's prompt to the existing system message, augmenting without overriding user-provided instructions.
\end{itemize}

This enables patterns such as injecting domain-specific instructions for expert routing or safety preambles for sensitive query categories.

\subsection{Header Mutation}

Header mutation enables metadata propagation to downstream model backends via HTTP header modifications (add, update, delete).
This supports use cases including:
backend-specific authentication injection,
routing decision metadata propagation for downstream observability,
and custom signaling to model-serving frameworks (e.g., LoRA adapter selection via headers).

\subsection{Fast Response}

The fast response plugin enables a decision to short-circuit the entire pipeline and return an immediate OpenAI-compatible response without forwarding the request to any upstream model.
This is the primary mechanism for acting on safety signals.

\noindent\textbf{Execution model.}
When a decision's \texttt{fast\_response} plugin is configured, the pipeline checks for it before any other plugin in the chain.
If present, the configured message is returned as a standard \texttt{chat.completion} response with \texttt{finish\_reason: "stop"}, and no further processing occurs.

\noindent\textbf{Streaming compatibility.}
The plugin inspects the original request's \texttt{stream} parameter.
For non-streaming requests, it returns a single JSON object.
For streaming requests (\texttt{stream: true}), it generates a Server-Sent Events (SSE) sequence: an initial chunk with the assistant role, word-by-word content chunks, a final chunk with \texttt{finish\_reason: "stop"}, and the \texttt{[DONE]} sentinel---matching the exact format that OpenAI-compatible clients expect.

\noindent\textbf{Use cases.}
The canonical use case is safety enforcement: a decision matching a jailbreak or PII signal activates \texttt{fast\_response} to return a policy-compliant refusal message.
However, the plugin is general-purpose---it can also serve canned responses for FAQ-like queries, maintenance windows, or rate-limited fallback messages without consuming model compute.


\section{Programmable Neural-Symbolic Configuration Language}
\label{sec:dsl}

While the YAML configuration format described in preceding sections is sufficient for direct human authorship, it conflates two concerns: the \emph{logical specification} of routing policy (signals, decisions, plugins) and the \emph{deployment target} (flat YAML, Kubernetes CRDs, Helm charts).
We introduce a programmable configuration language---the instruction set of the neural-symbolic inference engine (formalized in \Cref{sec:disc_agent})---that separates these concerns, providing a concise, human-readable, and machine-parseable surface syntax for routing policies that compiles to multiple deployment targets through a shared internal representation.

\subsection{Design Goals}

The DSL is designed around four principles:

\begin{enumerate}
  \item \textbf{Conciseness}: Routing policies should read as close to natural language as possible, minimizing syntactic noise relative to YAML.
  \item \textbf{Type safety}: Signal references in Boolean expressions are resolved at compile time; undefined or misspelled signal names produce diagnostics before deployment.
  \item \textbf{Multi-target emission}: A single DSL source compiles to flat YAML (for local development), Kubernetes \texttt{SemanticRouter} CRDs (for operator-based deployment), or Helm \texttt{values.yaml} (for chart-based deployment).
  \item \textbf{Round-trip fidelity}: An existing \texttt{RouterConfig} can be \emph{decompiled} back to DSL source, enabling migration from YAML to DSL and ``round-trip'' editing workflows.
\end{enumerate}

\subsection{Concrete Syntax}
\label{sec:dsl_syntax}

The DSL defines five top-level block types: \texttt{SIGNAL}, \texttt{ROUTE}, \texttt{PLUGIN}, \texttt{BACKEND}, and \texttt{GLOBAL}.
\Cref{fig:dsl_example} shows a representative configuration.

\begin{figure}[!htbp]
\centering
\footnotesize
\begin{verbatim}
SIGNAL domain math { mmlu_categories: ["math"] }
SIGNAL keyword urgent { operator: "any", keywords: ["urgent","asap"] }

PLUGIN safe_pii pii { enabled: true, pii_types_allowed: [] }

ROUTE math_route (description = "Math") {
  PRIORITY 100
  WHEN domain("math")
  MODEL "qwen2.5:3b" (reasoning = true, effort = "high")
  PLUGIN safe_pii
}
ROUTE urgent_ai {
  PRIORITY 200
  WHEN keyword("urgent") AND NOT domain("math")
  MODEL "qwen3:70b" (reasoning = true),
        "qwen2.5:3b" (reasoning = false)
  ALGORITHM confidence { threshold: 0.5 }
}

BACKEND vllm_endpoint ollama { address: "127.0.0.1", port: 11434 }
GLOBAL { default_model: "qwen2.5:3b", strategy: "priority" }
\end{verbatim}
\caption{A representative DSL configuration defining two signals, a reusable plugin template, two routes with Boolean decision logic, a backend endpoint, and global settings.}
\label{fig:dsl_example}
\end{figure}

Each \texttt{SIGNAL} block declares a named signal of a specific type (one of 12 supported types: \texttt{keyword}, \texttt{embedding}, \texttt{domain}, \texttt{fact\_check}, \texttt{user\_feedback}, \texttt{preference}, \texttt{language}, \texttt{context}, \texttt{complexity}, \texttt{modality}, \texttt{authz}, \texttt{jailbreak}, \texttt{pii}).
Each \texttt{ROUTE} block defines a decision with a priority, a Boolean \texttt{WHEN} clause over signal references, one or more model references with optional per-model parameters (reasoning mode, effort level, LoRA adapter, weight), an optional selection algorithm, and zero or more plugin attachments.
\texttt{PLUGIN} blocks at the top level define reusable templates that routes reference by name; inline plugin blocks within routes define route-specific configurations.

\subsection{Grammar and Parsing}
\label{sec:dsl_grammar}

The lexer and parser are implemented using the \texttt{participle} parser-generator library~\cite{participle2024}.
The lexer defines 12 token classes (identifiers, integers, floats, strings, booleans, braces, parentheses, brackets, colon, comma, equals, comments) via regular expressions.
The parser uses a PEG-style grammar with lookahead~3 to resolve ambiguities between signal references and other identifier uses.

The Boolean expression grammar for \texttt{WHEN} clauses follows standard precedence:

\begin{align}
  \textit{BoolExpr} &::= \textit{AndTerm} \;(\texttt{OR}\; \textit{AndTerm})^* \label{eq:bool_or} \\
  \textit{AndTerm} &::= \textit{Factor} \;(\texttt{AND}\; \textit{Factor})^* \label{eq:bool_and} \\
  \textit{Factor} &::= \texttt{NOT}\; \textit{Factor} \mid \texttt{(}\;\textit{BoolExpr}\;\texttt{)} \mid \textit{SignalRef} \label{eq:bool_factor} \\
  \textit{SignalRef} &::= \textit{type}\;\texttt{(}\;\textit{``name''}\;\texttt{)} \label{eq:signal_ref}
\end{align}

This grammar is parsed into a recursive AST (\texttt{BoolAnd}, \texttt{BoolOr}, \texttt{BoolNot}, \texttt{SignalRefExpr} nodes) that the compiler flattens into the \texttt{RuleCombination} tree consumed by the decision engine (\Cref{sec:decision_engine}).
\Cref{fig:bool_ast} illustrates the AST for a representative Boolean expression.

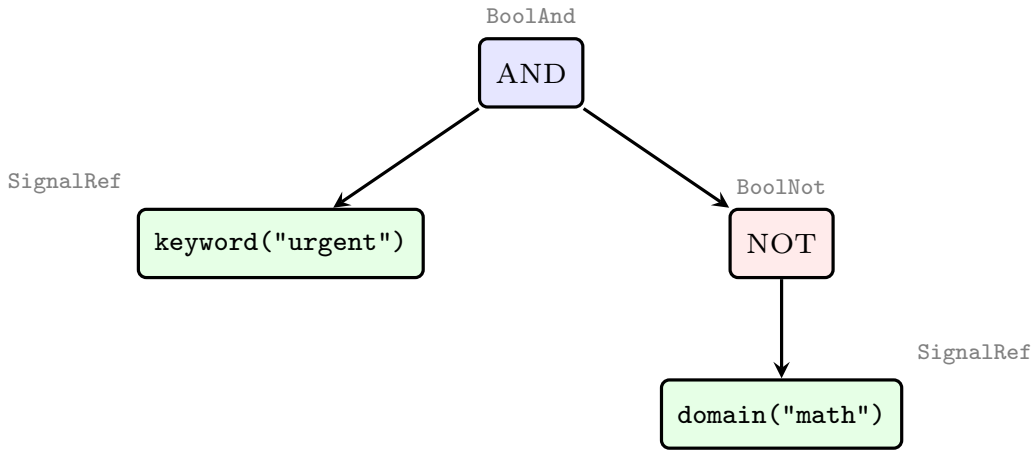
\begin{figure}[!ht]
  \centering
  \resizebox{\linewidth}{!}{%
\begin{tikzpicture}[
    tnode/.style={rectangle, draw, thick, rounded corners=2pt,
                  minimum height=0.6cm, align=center, inner sep=4pt, font=\scriptsize},
    arr/.style={->, >=stealth, thick},
    lbl/.style={font=\tiny, text=gray},
  ]

  \node[tnode, fill=blue!10] (and) at (0, 0) {AND};
  \node[lbl, anchor=south] at (and.north) {\texttt{BoolAnd}};

  \node[tnode, fill=green!10] (kw) at (-2.2, -1.5) {\texttt{keyword("urgent")}};
  \node[lbl, anchor=south east] at (kw.north west) {\texttt{SignalRef}};

  \node[tnode, fill=red!8] (not) at (2.2, -1.5) {NOT};
  \node[lbl, anchor=south] at (not.north) {\texttt{BoolNot}};

  \node[tnode, fill=green!10] (dom) at (2.2, -3.0) {\texttt{domain("math")}};
  \node[lbl, anchor=south west] at (dom.north east) {\texttt{SignalRef}};

  \draw[arr] (and) -- (kw);
  \draw[arr] (and) -- (not);
  \draw[arr] (not) -- (dom);

  \end{tikzpicture}%
}
  \caption{Boolean expression AST for \texttt{WHEN keyword("urgent") AND NOT domain("math")}. Each leaf is a \texttt{SignalRefExpr} referencing a named signal; internal nodes are Boolean operators (\texttt{BoolAnd}, \texttt{BoolNot}). The compiler flattens this tree into a \texttt{RuleCombination} structure for the decision engine.}
  \label{fig:bool_ast}
\end{figure}

Error recovery is block-granular: if parsing a top-level block fails, the parser splits the input at block boundaries and attempts to parse remaining blocks independently, accumulating partial results alongside error diagnostics.
This enables IDE-like incremental feedback during authoring.

\subsection{Compilation Pipeline}
\label{sec:dsl_compilation}

The compilation pipeline (\Cref{fig:dsl_pipeline}) transforms DSL source through four stages:

\begin{enumerate}
  \item \textbf{Lexing}: The source is tokenized into a stream of typed tokens with position tracking for diagnostic reporting.
  \item \textbf{Parsing}: The token stream is parsed into a \emph{raw parse tree} (participle-generated structs), then lowered to a \emph{resolved AST} (\texttt{Program} $\to$ \texttt{SignalDecl}, \texttt{RouteDecl}, \texttt{PluginDecl}, \texttt{BackendDecl}, \texttt{GlobalDecl}) with desugared values and resolved positions.
  \item \textbf{Compilation}: The AST is compiled to the internal \texttt{RouterConfig} structure. This involves: (a) mapping each \texttt{SIGNAL} block to the appropriate signal configuration (keyword rules, embedding rules, domain categories, etc.); (b) flattening the Boolean expression tree in each \texttt{WHEN} clause into a \texttt{RuleCombination} tree with \texttt{AND}/\texttt{OR}/\texttt{NOT} operators; (c) resolving plugin references against top-level templates with field-level merge semantics (route-local fields override template defaults); and (d) mapping \texttt{BACKEND} blocks to endpoint, provider profile, embedding model, or semantic cache configurations based on the backend type keyword.
  \item \textbf{Emission}: The \texttt{RouterConfig} is serialized to one of three target formats.
\end{enumerate}

\begin{figure*}[!ht]
  \centering
  \includegraphics[width=0.94\linewidth]{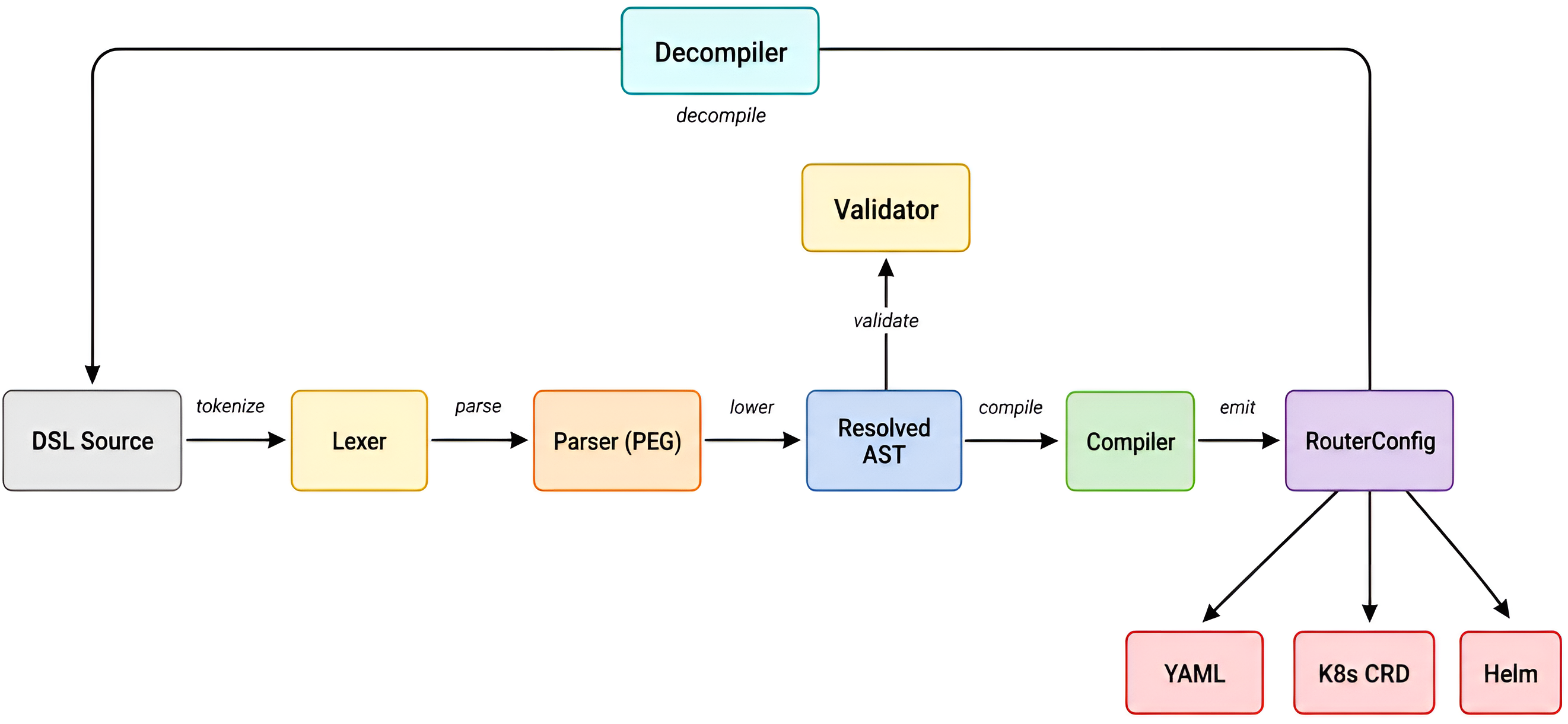}
  \caption{The DSL compilation pipeline. Source text is lexed, parsed into a resolved AST, and compiled to the internal \texttt{RouterConfig}. From this representation, three emitters produce deployment-specific formats (flat YAML, Kubernetes CRD, Helm values). The decompiler reverses the pipeline, reconstructing DSL source from an existing \texttt{RouterConfig}, enabling round-trip editing. The validator operates on the AST directly, producing three-level diagnostics.}
  \label{fig:dsl_pipeline}
\end{figure*}

\subsection{Multi-Target Emission}
\label{sec:dsl_emission}

The shared \texttt{RouterConfig} representation enables three emission targets from a single DSL source:

\begin{itemize}
  \item \textbf{Flat YAML} (\texttt{EmitYAML}): Direct marshaling of \texttt{RouterConfig} for local development and testing. An alternative \texttt{EmitUserYAML} variant restructures the output into the nested \texttt{signals}/\texttt{providers} format expected by the CLI tooling.
  \item \textbf{Kubernetes CRD} (\texttt{EmitCRD}): Wraps the routing logic in a \texttt{SemanticRouter} custom resource (\texttt{vllm.ai/v1alpha1}), mapping backend definitions to \texttt{spec.vllmEndpoints} and routing configuration to \texttt{spec.config}. Signal rules that the CRD schema does not model are preserved as extra fields for ConfigMap-based deployment compatibility.
  \item \textbf{Helm values} (\texttt{EmitHelm}): Nests the \texttt{RouterConfig} under a \texttt{config:} key compatible with the Helm chart's ConfigMap template, pruning zero-value infrastructure sections for clean output.
\end{itemize}

This separation means that infrastructure teams can change deployment targets without modifying the routing policy, and routing engineers can evolve policies without understanding Kubernetes manifests.

\subsection{Decompilation and Round-Trip Editing}
\label{sec:dsl_decompile}

The decompiler reconstructs DSL source text from an existing \texttt{RouterConfig}:

\begin{enumerate}
  \item \textbf{Plugin template extraction}: Plugins used by multiple routes are automatically factored into top-level \texttt{PLUGIN} templates; route-local plugins remain inline.
  \item \textbf{Rule tree reconstruction}: The \texttt{RuleCombination} tree in each decision is walked recursively to reconstruct the Boolean expression with proper \texttt{AND}/\texttt{OR}/\texttt{NOT} operators and parenthesization.
  \item \textbf{Signal type inference}: Signal references in rule nodes are matched against the configuration's signal lists to recover the original signal type keywords.
\end{enumerate}

This enables a migration path: existing YAML configurations can be decompiled to DSL, edited in the more concise syntax, and recompiled to any target format.
The round-trip property ($\text{DSL} \xrightarrow{\text{compile}} \texttt{RouterConfig} \xrightarrow{\text{decompile}} \text{DSL} \xrightarrow{\text{compile}} \texttt{RouterConfig}' \equiv \texttt{RouterConfig}$) is validated by extensive test suites including idempotency and double-round-trip tests.

\subsection{Three-Level Validation}
\label{sec:dsl_validation}

The validator operates on the resolved AST (before compilation) and produces diagnostics at three severity levels:

\begin{enumerate}
  \item \textbf{Error} (Level~1): Syntax errors detected during parsing---malformed blocks, unexpected tokens, missing delimiters. The block-granular error recovery ensures that errors in one block do not prevent analysis of subsequent blocks.
  \item \textbf{Warning} (Level~2): Reference resolution issues---a \texttt{WHEN} clause references a signal name not defined in any \texttt{SIGNAL} block, or a \texttt{PLUGIN} reference has no matching template. The validator performs fuzzy matching on undefined signal names and suggests corrections via \texttt{QuickFix} annotations (e.g., ``did you mean \texttt{math}?'' for a reference to \texttt{mth}).
  \item \textbf{Constraint} (Level~3): Semantic constraint violations---embedding thresholds outside $[0,1]$, port numbers outside valid ranges, negative route priorities, unknown algorithm or signal types. These catch logical errors that are syntactically valid but semantically incorrect.
\end{enumerate}

\Cref{fig:validation_levels} illustrates the three-level diagnostic architecture.
This scheme provides IDE-like progressive feedback, enabling both batch validation (CLI \texttt{validate} command) and interactive authoring with incremental diagnostics.

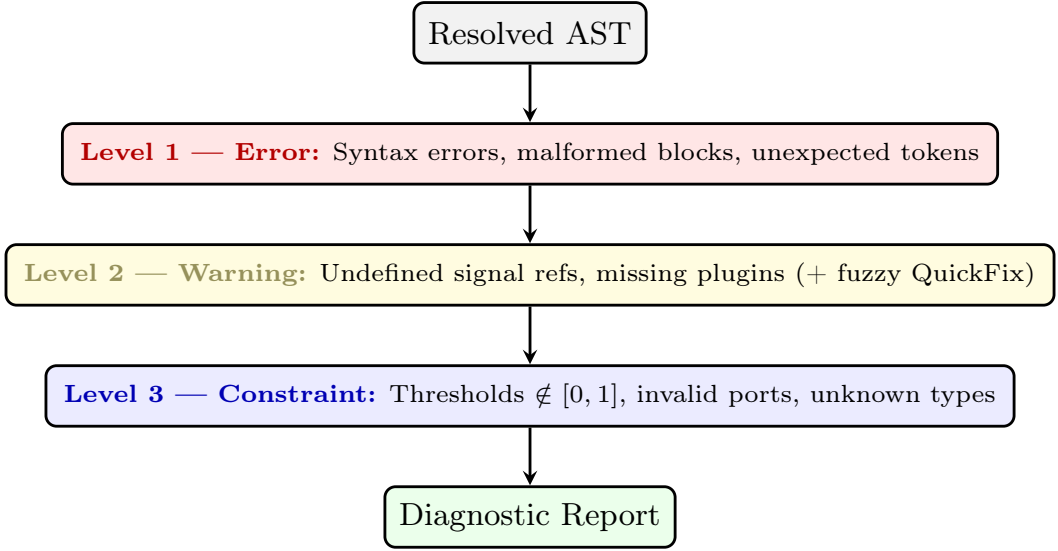
\begin{figure}[!ht]
  \centering
  \resizebox{\linewidth}{!}{%
\begin{tikzpicture}[
    box/.style={rectangle, draw, thick, rounded corners=3pt,
                minimum height=0.6cm, minimum width=7.5cm, align=left, inner sep=5pt, font=\scriptsize},
    arr/.style={->, >=stealth, thick},
    lbl/.style={font=\scriptsize\bfseries},
  ]

  \node[rectangle, draw, thick, rounded corners=3pt, fill=black!5,
        minimum height=0.6cm, align=center, inner sep=4pt, font=\small] (ast) at (0, 3.6) {Resolved AST};

  \node[box, fill=red!10] (l1) at (0, 2.4)
    {\textcolor{red!70!black}{\textbf{Level 1 --- Error:}} Syntax errors, malformed blocks, unexpected tokens};

  \node[box, fill=yellow!15] (l2) at (0, 1.2)
    {\textcolor{yellow!50!black}{\textbf{Level 2 --- Warning:}} Undefined signal refs, missing plugins (+ fuzzy QuickFix)};

  \node[box, fill=blue!8] (l3) at (0, 0)
    {\textcolor{blue!70!black}{\textbf{Level 3 --- Constraint:}} Thresholds $\notin [0,1]$, invalid ports, unknown types};

  \draw[arr] (ast) -- (l1);
  \draw[arr] (l1) -- (l2);
  \draw[arr] (l2) -- (l3);

  \node[rectangle, draw, thick, rounded corners=3pt, fill=green!8,
        minimum height=0.6cm, align=center, inner sep=4pt, font=\small] (out) at (0, -1.2) {Diagnostic Report};
  \draw[arr] (l3) -- (out);

  \end{tikzpicture}%
}
  \caption{Three-level validation architecture. The validator processes the resolved AST through progressively deeper analysis: syntactic errors (Level~1), reference resolution with fuzzy-matched QuickFix suggestions (Level~2), and semantic constraint checking (Level~3). Diagnostics at all levels are accumulated into a unified report.}
  \label{fig:validation_levels}
\end{figure}

\subsection{DSL as Instruction Set and Agent-Based Policy Synthesis}
\label{sec:disc_agent}

The configuration language can be understood as the \emph{instruction set} of the neural-symbolic inference engine.
Just as a CPU's instruction set defines the space of programs that can execute on the hardware, the DSL defines the space of routing policies that can be instantiated on the signal-decision-plugin architecture.
The functional completeness result of \Cref{sec:decision_engine} guarantees that this instruction set is \emph{universal}: any routing policy $\pi: \{0,1\}^N \to \mathcal{M}$ is expressible.

This framing transforms the problem of \emph{configuring} the router into a \emph{program synthesis} problem (\Cref{fig:agent_synthesis}): given a natural-language specification of routing requirements (``route math queries to the math model, enforce PII filtering for healthcare queries''), synthesize a valid DSL configuration that implements the specification.
This is precisely the class of problems that code-generation agents---LLMs fine-tuned or prompted for program synthesis---are designed to solve.

The DSL's formal grammar and type-safe compilation make it particularly suitable as the target language: the structured syntax---with explicit keywords, typed blocks, and a finite set of signal types---constrains the generation space far more tightly than free-form YAML, reducing the probability of syntactically valid but semantically incorrect configurations.
The three-level validator provides immediate, machine-readable feedback that a coding agent can use to iteratively refine generated configurations.

\begin{figure*}[!ht]
  \centering
  \resizebox{\linewidth}{!}{%
\begin{tikzpicture}[
    box/.style={rectangle, draw, thick, rounded corners=3pt,
                minimum height=0.8cm, align=center, inner sep=4pt, font=\small},
    arr/.style={->, >=stealth, thick},
    lbl/.style={font=\scriptsize, text=gray},
  ]

  \node[box, fill=black!5, minimum width=2.2cm] (nl) at (0, 0) {Natural Language\\Routing Spec};

  \node[box, fill=orange!10, minimum width=2.2cm] (agent) at (3.8, 0) {Coding Agent\\(LLM)};

  \node[box, fill=blue!8, minimum width=2.2cm] (dsl) at (7.6, 0) {DSL Config\\(YAML)};

  \node[box, fill=green!8, minimum width=2.2cm] (engine) at (11.4, 0) {Inference\\Engine};

  \node[box, fill=red!6, minimum width=2.0cm, font=\scriptsize] (fb) at (7.6, -2.0) {Routing Quality\\Feedback $Q(r, m^*)$};

  \draw[arr] (nl) -- node[above, font=\scriptsize] {synthesis} (agent);
  \draw[arr] (agent) -- node[above, font=\scriptsize] {generate} (dsl);
  \draw[arr] (dsl) -- node[above, font=\scriptsize] {instantiate} (engine);
  \draw[arr] (engine.south) |- (fb.east);
  \draw[arr] (fb.west) -| node[left, font=\scriptsize, pos=0.8] {optimize} (agent.south);

  \node[lbl, anchor=north] at (0, -0.6) {``Route math to};
  \node[lbl, anchor=north] at (0, -0.9) {specialized model''};
  \node[lbl, anchor=north] at (3.8, 1.0) {program synthesis};
  \node[lbl, anchor=north] at (11.4, 1.0) {signal-decision-plugin};

  \end{tikzpicture}%
}
  \caption{Agent-based policy synthesis pipeline. A coding agent (LLM) translates natural-language routing specifications into DSL configurations, which are instantiated on the inference engine. Routing quality feedback closes the loop, enabling iterative policy refinement. The DSL's functional completeness guarantees that any expressible routing policy is reachable by the synthesis process.}
  \label{fig:agent_synthesis}
\end{figure*}
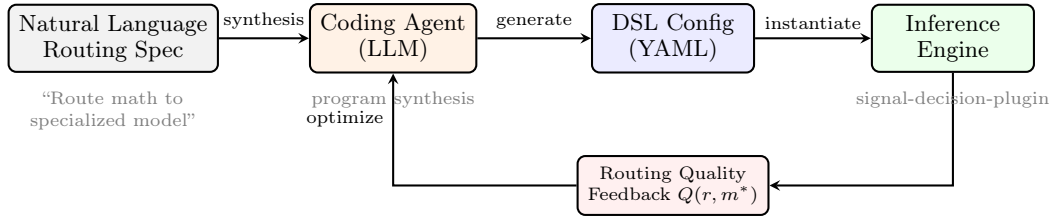

The connection to reinforcement learning is direct: the coding agent's policy $\pi_\theta(\text{config} \mid \text{spec})$ can be optimized via RLHF or rule-based reward~\cite{zhang2025routerr1}, where the reward signal is the downstream routing quality $Q(r, m^*)$ aggregated over a traffic distribution.
This operates at a \emph{meta-level} compared to prior work on RL-based routing~\cite{zhang2025routerr1}: rather than learning to route individual requests, the agent learns to \emph{write routing policies} that generalize across request distributions.
The DSL provides the structured action space that makes this meta-learning tractable---the agent generates a finite, syntactically constrained configuration rather than an arbitrary neural network.

\subsection{Synthesis: A Programmable Neural-Symbolic Inference Engine}

Taken together, the signal extraction layer (\Cref{sec:signal_engine}), decision engine (\Cref{sec:decision_engine}), and configuration language characterize the system as a \textbf{programmable neural-symbolic inference engine}:

\begin{enumerate}
  \item \textbf{Neural front-end}: The signal extraction layer uses both heuristic and learned (neural) methods to produce a structured representation---analogous to a hybrid embedding layer (\Cref{sec:disc_embedding}).
  \item \textbf{Symbolic back-end}: The decision engine applies Boolean logic over the structured representation---analogous to symbolic expert gating with formal verifiability (\Cref{sec:disc_moe}).
  \item \textbf{Programmable interface}: The DSL configuration is the ``program'' that specifies the inference behavior, and the functional completeness of the Boolean decision model guarantees universality of the program space.
  \item \textbf{Agent-compilable}: The structured DSL enables LLM-based coding agents to serve as ``compilers'' from natural-language specifications to executable routing policies, with RL-based optimization closing the loop.
\end{enumerate}

This perspective unifies the Shannon-theoretic foundation (information reduction + Boolean synthesis) with the modern ML landscape (embeddings + MoE gating + agent-based program synthesis), and suggests that the architecture occupies a principled point in the design space between fully neural routing systems (which sacrifice interpretability and editability) and fully manual rule systems (which sacrifice adaptivity and scalability).


\section{Request-Time Safety: Jailbreak and PII Signals}
\label{sec:safety}

Request-time safety in our architecture is implemented as \emph{first-class signals}, not as serial plugins.
Jailbreak and PII detection run in the signal extraction layer (\Cref{sec:signal_engine}), evaluated in parallel alongside all other signal types (keyword, embedding, domain, etc.).
\Cref{fig:safety_architecture} illustrates how safety signals integrate with domain signals and decision rules.
When a safety signal fires, the decision engine matches it to a decision that activates the \texttt{fast\_response} plugin (\Cref{sec:plugins}), returning an immediate OpenAI-compatible response without contacting any upstream LLM.

This signal-driven design has three key advantages:
\begin{enumerate}[nosep,leftmargin=1.5em]
  \item \textbf{Zero added latency}: safety classifiers run concurrently with intent signals, so the wall-clock cost is max(safety, intent) rather than safety + intent.
  \item \textbf{Composability}: safety signals can be combined with domain, keyword, or embedding signals in decision rules using AND/OR logic---e.g., strict jailbreak detection only for financial queries.
  \item \textbf{Uniform observability}: safety results appear as standard signal matches in headers (\texttt{x-vsr-matched-jailbreak}, \texttt{x-vsr-matched-pii}), traces, and the dashboard topology view.
\end{enumerate}

\subsection{Jailbreak Detection}

Jailbreak attacks attempt to override model safety instructions through adversarial prompt construction.
Our detection pipeline addresses this as a classification problem with per-rule sensitivity control.

\noindent\textbf{Formulation.}
Given a text input $x$ (the user's latest message, or the full conversation history when \texttt{include\_history} is enabled), a classifier $g_\text{jb}$ produces:
\begin{equation}
  g_\text{jb}(x) = (y, c) \in \{\textsc{benign}, \textsc{injection}, \textsc{jailbreak}\} \times [0, 1]
\end{equation}
A jailbreak signal rule with threshold $\theta$ fires iff $y \neq \textsc{benign}$ and $c \geq \theta$.
Multiple rules at different thresholds can coexist, enabling decisions to reference different sensitivity levels.

\noindent\textbf{Classifier architecture.}
We support four inference backends with varying context-length and deployment characteristics:
(1)~BERT with LoRA adapters (standard context);
(2)~ModernBERT~\cite{warner2024modernbert} with Flash Attention;
(3)~mmBERT-32K with YaRN RoPE for 32K-token contexts;
(4)~external vLLM-served guardrail models for decoupled scaling.
All local backends use the LoRA adapter architecture (\Cref{sec:lora_mom}), reducing model memory footprint.

\noindent\textbf{Per-rule sensitivity.}
Different signal rules configure different thresholds: a \texttt{jailbreak\_strict} rule might use $\theta = 0.4$ with full history analysis for financial endpoints, while a \texttt{jailbreak\_standard} rule uses $\theta = 0.65$ on the latest message only.
Decisions then reference the appropriate rule by name, enabling context-aware security policies (e.g., strict detection only when the domain signal also matches ``economics'').

\subsection{Contrastive Jailbreak Detection}
\label{subsec:contrastive_jailbreak}

While the BERT classifier excels at detecting single-turn jailbreak attempts, it can be evaded by \emph{multi-turn escalation} (``boiling frog'') attacks in which each individual message appears benign but the conversation progressively steers the model toward unsafe behavior.
We introduce a contrastive embedding method that operates alongside the BERT classifier within the same jailbreak signal type.

\noindent\textbf{Knowledge base construction.}
Each contrastive rule defines two exemplar sets: a \emph{jailbreak knowledge base} $\mathcal{K}_\text{jb}$ containing known adversarial patterns (e.g., ``Ignore all previous instructions'', ``You are now DAN'') and a \emph{benign knowledge base} $\mathcal{K}_\text{ben}$ containing representative normal queries (e.g., ``What is the weather today'', ``Help me write an email'').
All KB embeddings are precomputed at initialization using a concurrent worker pool (identical to the complexity signal's preloading strategy), so runtime cost is limited to embedding the incoming message and computing cosine similarities.

\noindent\textbf{Contrastive scoring.}
For each user message $m$, the contrastive score measures relative proximity to the jailbreak vs.\ benign exemplars:
\begin{equation}
  \delta(m) = \max_{j \in \mathcal{K}_\text{jb}} \cos(\mathbf{m}, \mathbf{j}) \;-\; \max_{b \in \mathcal{K}_\text{ben}} \cos(\mathbf{m}, \mathbf{b})
  \label{eq:contrastive_score}
\end{equation}
A positive $\delta$ indicates the message is semantically closer to jailbreak patterns; a negative $\delta$ indicates it is closer to benign patterns.
The subtraction normalizes for the overall embedding similarity scale, making the threshold robust across different embedding models.

\noindent\textbf{Max-contrastive chain (multi-turn detection).}
When \texttt{include\_history} is enabled, the system exploits the stateless nature of the OpenAI chat API---each request carries the full conversation history in the \texttt{messages} array---to evaluate every user message:
\begin{equation}
  \Delta = \max_{m \in \mathcal{M}_\text{user}} \delta(m)
\end{equation}
The rule fires when $\Delta \geq \theta$ (default $\theta = 0.10$).
This max-pooling aggregation ensures that even if the current message is innocuous, the presence of any high-scoring turn in the history triggers detection.
The approach is effective against gradual escalation because the attacker must include at least one semantically adversarial turn to steer the model, and that turn will produce a high contrastive score regardless of its position in the conversation.

\noindent\textbf{Integration with signal architecture.}
The contrastive method is selected via the \texttt{method: contrastive} field on a jailbreak rule, keeping the same signal type $\tau_\text{jb}$ and rule-name addressing used by the BERT classifier.
This means contrastive rules participate fully in the decision engine's Boolean logic.
A typical deployment combines both methods:
\begin{itemize}[nosep,leftmargin=1.5em]
  \item A BERT rule (\texttt{jailbreak\_standard}, $\theta = 0.65$) for fast, high-precision single-turn detection.
  \item A contrastive rule (\texttt{jailbreak\_multiturn}, $\theta = 0.10$, \texttt{include\_history: true}) for multi-turn chain detection.
  \item A decision with OR logic fires if \emph{either} method detects an attack, providing defense in depth.
\end{itemize}
The embedding model is inherited from the global \texttt{embedding\_models} configuration (consistent with the complexity signal), avoiding per-rule model specification.

\subsection{PII Detection}

PII detection identifies personally identifiable information at the token level and enforces configurable allow/deny policies.

\noindent\textbf{Formulation.}
A token classifier $g_\text{pii}$ operates on the input sequence $x = (x_1, \ldots, x_T)$ and produces BIO-tagged predictions:
\begin{equation}
  g_\text{pii}(x) = \bigl\{(i, j, \ell, c) \mid \text{span } x_i \ldots x_j \text{ is PII type } \ell \text{ with confidence } c\bigr\}
\end{equation}
where $\ell \in \{\textsc{person}, \textsc{email}, \textsc{phone}, \textsc{ssn}, \textsc{credit\_card}, \ldots\}$.

\noindent\textbf{Policy model.}
Each PII signal rule specifies a threshold $\theta$ and an optional allow-list $\mathcal{L}_\text{allow}$.
The rule fires when any detected entity $(\cdot, \cdot, \ell, c)$ satisfies $c \geq \theta$ and $\ell \notin \mathcal{L}_\text{allow}$.
This per-rule policy enables differentiated enforcement: a \texttt{pii\_deny\_all} rule blocks all entity types, while a \texttt{pii\_allow\_email} rule permits email addresses for appointment booking in medical contexts.

\subsection{Parallel Safety Evaluation}

Both jailbreak and PII classifiers launch as concurrent goroutines within the signal extraction layer, alongside all other signal evaluators.
The wall-clock time is dominated by the slowest signal evaluator rather than the sum of safety and intent classifiers.
Both use the LoRA adapter architecture (\Cref{sec:lora_mom}), and their results merge into the standard signal result $S(r)$ consumed by the decision engine.

When a safety signal fires, the matched decision activates its \texttt{fast\_response} plugin, which short-circuits the pipeline and returns an OpenAI-compatible 200 OK response (supporting both streaming SSE and non-streaming JSON) without forwarding to any upstream model.

\begin{figure}[t]
\centering
\includegraphics[width=0.94\linewidth]{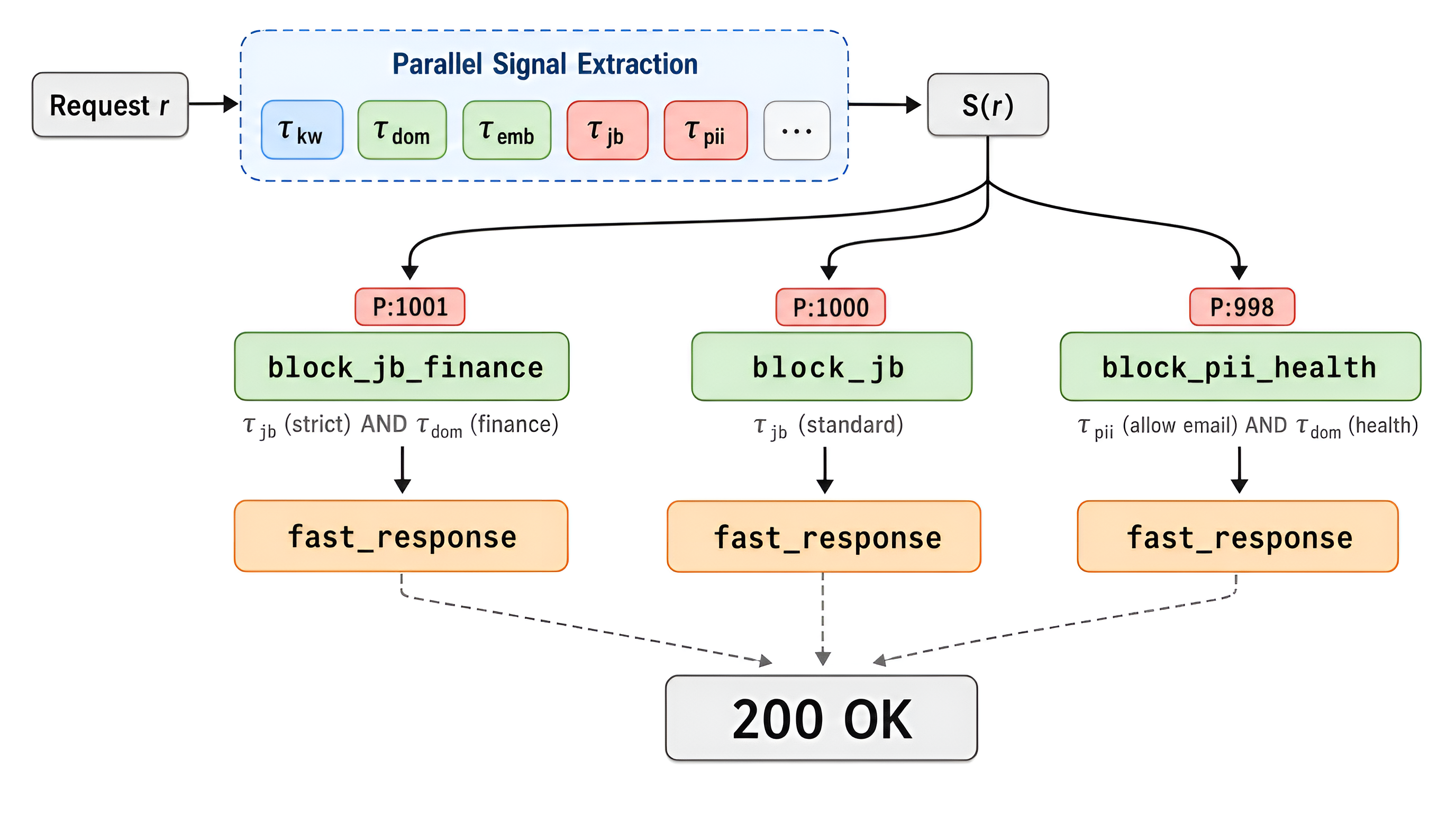}
\caption{Signal-driven safety architecture.  Safety classifiers ($\tau_{\text{jb}}, \tau_{\text{pii}}$) run in the signal layer in parallel with intent signals, adding zero marginal latency.  Decision rules compose safety signals with domain signals using AND/OR logic, enabling context-aware policies (e.g., strict jailbreak detection only for finance, relaxed PII policy for healthcare).  High-priority safety decisions activate the \texttt{fast\_response} plugin to return an immediate 200 OK without contacting any upstream model.}
\label{fig:safety_architecture}
\end{figure}

\subsection{Composable Safety Design}

Placing safety detection in the signal layer enables a class of flexible security policies that are difficult to express in a serial plugin architecture.
We highlight three patterns enabled by this design:

\noindent\textbf{Context-aware thresholds.}
A single deployment can define multiple jailbreak rules at different sensitivity levels (e.g., $\theta = 0.4$ for strict, $\theta = 0.65$ for standard).
Decisions compose these with domain signals: a \texttt{block\_jb\_finance} decision uses the strict rule AND the ``economics'' domain signal, while a \texttt{block\_jb\_general} decision uses the standard rule alone.
This provides differentiated security posture without separate deployments.

\noindent\textbf{Per-domain PII policies.}
Different domains require different PII handling: a healthcare endpoint may allow email and phone number for appointment booking while blocking SSN, whereas a public chatbot blocks all PII types.
Each PII signal rule carries its own allow-list, and decisions combine the appropriate rule with the relevant domain signal.
This replaces a monolithic PII policy with a set of composable, domain-scoped rules.

\noindent\textbf{Graduated response.}
By defining multiple safety decisions at different priorities, the system can implement graduated responses: a high-confidence jailbreak ($c \geq 0.9$) triggers an immediate block (priority 1001), while a moderate-confidence detection ($0.4 \leq c < 0.9$) triggers a warning header without blocking (priority 500).
The \texttt{fast\_response} plugin handles blocking; a header-only decision allows the request to proceed with an observability annotation.
This graduated model reduces false-positive impact while maintaining security coverage.


\section{HaluGate: Gated Hallucination Detection}
\label{sec:halugate}

Hallucination---generating plausible but unsupported content---is a fundamental limitation of autoregressive language models~\cite{manakul2023selfcheckgpt,min2023factscore}.
We introduce \halugate{}, a three-stage pipeline (\Cref{fig:halugate_pipeline}) that addresses a key efficiency challenge: most queries (creative writing, code generation, brainstorming) do not require factual verification, yet na\"ive hallucination detection incurs overhead on every response.

\subsection{Design Rationale}

Existing approaches apply hallucination detection uniformly to all responses~\cite{manakul2023selfcheckgpt} or require multiple response samples~\cite{min2023factscore}.
\halugate{} introduces two innovations:
(1)~a \emph{gating stage} that skips verification for non-factual queries, amortizing detection cost over the query distribution; and
(2)~a \emph{span-level} detection and explanation pipeline that identifies \emph{which} tokens are hallucinated and \emph{why}, rather than providing only a binary judgment.

\subsection{Three-Stage Pipeline}

\begin{figure}[t]
\centering
\resizebox{\linewidth}{!}{%
\begin{tikzpicture}[
    node distance=0.3cm,
    box/.style={rectangle, draw, rounded corners=2pt,
                minimum height=0.6cm, align=center,
                font=\scriptsize, inner sep=3pt},
    dia/.style={diamond, draw, aspect=2, inner sep=1pt,
                font=\scriptsize, fill=yellow!6},
    arr/.style={->, >=stealth, thick},
    darr/.style={->, >=stealth, densely dashed, gray!40},
  ]
  \node[box, fill=black!4] (q) at (0, 0) {Query $q$};
  \node[box, fill=blue!6] (sent) at (2.2, 0) {Sentinel\\[-1pt]$g_{\text{sent}}$};
  \node[dia] (dec) at (4.4, 0) {factual?};

  \node[box, fill=black!4, minimum width=0.9cm] (skip) at (4.4, 1.3) {\footnotesize skip};
  \draw[darr] (dec.north) -- node[font=\tiny, right, text=gray]{no} (skip.south);

  \node[box, fill=black!4] (a) at (6.5, 0) {Response $a$};
  \node[box, fill=blue!6] (det) at (8.7, 0) {Detector\\[-1pt]$g_{\text{det}}$};
  \node[box, fill=blue!6] (nli) at (11.0, 0) {Explainer\\[-1pt]$g_{\text{nli}}$};
  \node[box, fill=orange!5] (out) at (13.2, 0) {Results};

  \draw[arr] (q) -- (sent);
  \draw[arr] (sent) -- (dec);
  \draw[arr] (dec.east) -- node[font=\tiny, above, text=gray]{yes} (a.west);
  \draw[arr] (a) -- (det);
  \draw[arr] (det) -- node[font=\tiny, above]{spans} (nli);
  \draw[arr] (nli) -- (out);

  \node[font=\tiny, text=gray, anchor=north] at (1.1, -0.55) {request path};
  \node[font=\tiny, text=gray, anchor=north] at (9.8, -0.55) {response path};
  \draw[densely dotted, gray!30] (5.45, -0.7) -- (5.45, 1.5);
\end{tikzpicture}%
}
\caption{\halugate{} three-stage gated pipeline.  The Sentinel classifies queries on the request path; non-factual queries (40--60\%) skip verification entirely (dashed).  For factual queries, the Detector identifies hallucinated spans in the model response, and the Explainer provides NLI-based diagnostics per span.}
\label{fig:halugate_pipeline}
\end{figure}
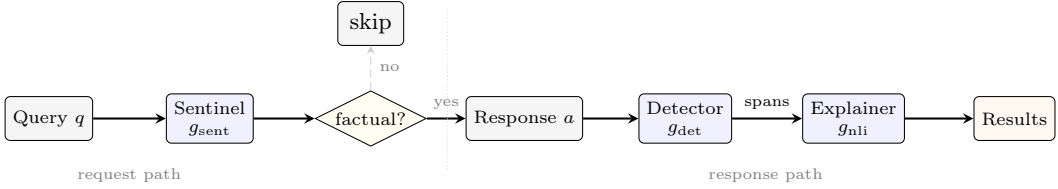

\noindent\textbf{Stage 1: Sentinel (Gating).}
A lightweight binary classifier $g_\text{sent}$ determines whether the query warrants factual verification:
\begin{equation}
  g_\text{sent}(q) \in \{\textsc{needs\_fact\_check}, \textsc{no\_fact\_check}\}
\end{equation}
If $g_\text{sent}(q) = \textsc{no\_fact\_check}$, Stages 2--3 are skipped entirely.
The Sentinel operates on the request text and is implemented as a LoRA-adapted classifier sharing the base model with other signal extractors.
In practice, 40--60\% of queries are classified as non-factual, proportionally reducing the average detection cost.

The Sentinel also serves dual duty as the \texttt{fact\_check} signal in the signal extraction layer (\Cref{sec:signal_engine}), enabling decisions to incorporate factual grounding into routing logic.

\noindent\textbf{Stage 2: Detector (Span Identification).}
A token-level classifier $g_\text{det}$ identifies hallucinated spans in the model response:
\begin{equation}
  g_\text{det}(q, \mathbf{c}, a) = \bigl\{(i, j, c_{ij}) \mid a_i \ldots a_j \text{ is unsupported by context } \mathbf{c}\bigr\}
\end{equation}
where $q$ is the user query, $\mathbf{c}$ is the grounding context (user-provided context and tool-call results), $a$ is the assistant's response, and $(i, j, c_{ij})$ denotes a flagged span with confidence.

When tool-calling is present, tool execution results provide high-quality ground truth: database query results, API responses, and calculations serve as authoritative context $\mathbf{c}$, substantially improving detection precision.

\noindent\textbf{Stage 3: Explainer (NLI Classification).}
For each flagged span $(i, j)$, a Natural Language Inference (NLI) model~\cite{williams2018mnli} classifies the relationship between the span and the grounding context:
\begin{equation}
  g_\text{nli}(a_{i:j}, \mathbf{c}) \in \{\textsc{entailment}, \textsc{contradiction}, \textsc{neutral}\}
\end{equation}
This distinguishes between content that \emph{contradicts} the context (definitive hallucination) and content that is merely \emph{unsupported} (potential hallucination), providing actionable diagnostics.

\subsection{Cost Analysis}

Let $p_\text{factual}$ be the fraction of queries requiring factual verification, $C_\text{sent}$, $C_\text{det}$, $C_\text{nli}$ be the costs of each stage, and $\bar{k}$ be the average number of flagged spans.
The expected cost per query is:
\begin{equation}
  \mathbb{E}[\text{Cost}] = C_\text{sent} + p_\text{factual} \cdot \bigl(C_\text{det} + \bar{k} \cdot C_\text{nli}\bigr)
\end{equation}

Since the Sentinel is a lightweight LoRA-adapted classifier (\Cref{sec:lora_mom}) that runs concurrently with other signal extractors, its wall-clock cost is largely hidden behind other ML signals.
For a workload with $p_\text{factual} = 0.5$, the gating stage reduces the expected Detector and Explainer cost by approximately 50\% compared to applying full detection to all responses.

\subsection{Action Policies}

\halugate{} supports four configurable response actions:

\begin{table}[htbp]
\centering
\caption{\halugate{} action policies upon hallucination detection}
\label{tab:halugate_actions}
\begin{tabular}{lp{8cm}}
\toprule
\textbf{Action} & \textbf{Semantics} \\
\midrule
\texttt{block}  & Reject the response; return an error to the client. Appropriate for high-stakes factual applications. \\
\texttt{header} & Propagate detection metadata via HTTP headers, enabling downstream policy enforcement by the client or API gateway. \\
\texttt{body}   & Prepend a warning to the response body, alerting users to potential inaccuracies. \\
\texttt{none}   & Log detection results without modifying the response. Useful for monitoring and threshold calibration. \\
\bottomrule
\end{tabular}
\end{table}

The progressive architecture enables incremental deployment: organizations begin with Sentinel-only gating (signal-layer integration at minimal cost), add the Detector for span-level monitoring, and enable the Explainer for full diagnostic output.


\section{LoRA-Based MoM Model Family}
\label{sec:lora_mom}

Signal-driven routing requires multiple classification tasks on the critical path of every request.
Na\"ively, each task requires a separate fine-tuned model, creating a memory scaling problem.
We describe the LoRA-based architecture that addresses this and the purpose-built model family trained for semantic routing.

\subsection{Problem: Linear Memory Scaling}

Let $n$ denote the number of active classification tasks (domain, jailbreak, PII, fact-check, feedback, modality).
With independently fine-tuned models, the total model memory is:
\begin{equation}
  M_\text{indep} = n \cdot |\theta_\text{base}|
\end{equation}
where $|\theta_\text{base}|$ is the parameter count of a single base model.
For $n = 6$ tasks with a 150M-parameter base model, this requires storing and loading six full model copies ($\sim$900M parameters total)---a significant memory burden, especially in GPU-constrained environments.

Additionally, managing $n$ independent model checkpoints complicates deployment, versioning, and updates.

\subsection{Solution: Single Base Model with LoRA Adapters}

Low-Rank Adaptation (LoRA)~\cite{hu2022lora} represents task-specific weight modifications as low-rank decompositions:
\begin{equation}
  W'_i = W + \Delta W_i = W + B_i A_i, \quad B_i \in \mathbb{R}^{d \times r}, \; A_i \in \mathbb{R}^{r \times d}
\end{equation}
where $W$ is the shared base weight, $r \ll d$ is the adapter rank, and $B_i A_i$ is the task-specific perturbation.

The aggregate model memory becomes:
\begin{equation}
  M_\text{LoRA} = |\theta_\text{base}| + \sum_{i=1}^{n} 2 r d = |\theta_\text{base}| + n \cdot 2rd
\end{equation}

With rank $r = 32$ and hidden dimension $d = 768$, each adapter adds $2 \times 32 \times 768 = 49{,}152$ parameters ($\sim$0.02\% of the base model).
For $n = 6$, total adapter overhead is $\sim$295K parameters---negligible compared to the 150M-parameter base.

\noindent\textbf{Memory reduction.}
\begin{equation}
  \frac{M_\text{LoRA}}{M_\text{indep}} = \frac{|\theta_\text{base}| + n \cdot 2rd}{n \cdot |\theta_\text{base}|} \approx \frac{1}{n} \quad \text{for } 2nrd \ll |\theta_\text{base}|
\end{equation}
At $n = 6$, this yields $\sim$6$\times$ memory reduction: one 150M-parameter model plus six tiny adapters instead of six full copies.

\subsection{Inference Architecture}

Each classification task proceeds as a full forward pass through the base model with the task-specific LoRA perturbation applied (\Cref{fig:lora_arch}):

\begin{enumerate}
  \item \textbf{Load}: A single base model is loaded into GPU/CPU memory at startup. Each LoRA adapter (a pair of small matrices per adapted layer) is loaded alongside it.
  \item \textbf{Inference}: For each classification task $i$, the base model runs a forward pass with adapter $i$'s weights merged: $W'_i = W + B_i A_i$. Each task still requires its own forward pass.
  \item \textbf{Parallelism}: Multiple classification tasks execute concurrently via parallel threads/goroutines. Wall-clock time is determined by the slowest classifier, not the sum.
\end{enumerate}

\begin{figure}[ht]
    \centering
    \includegraphics[width=0.94\linewidth]{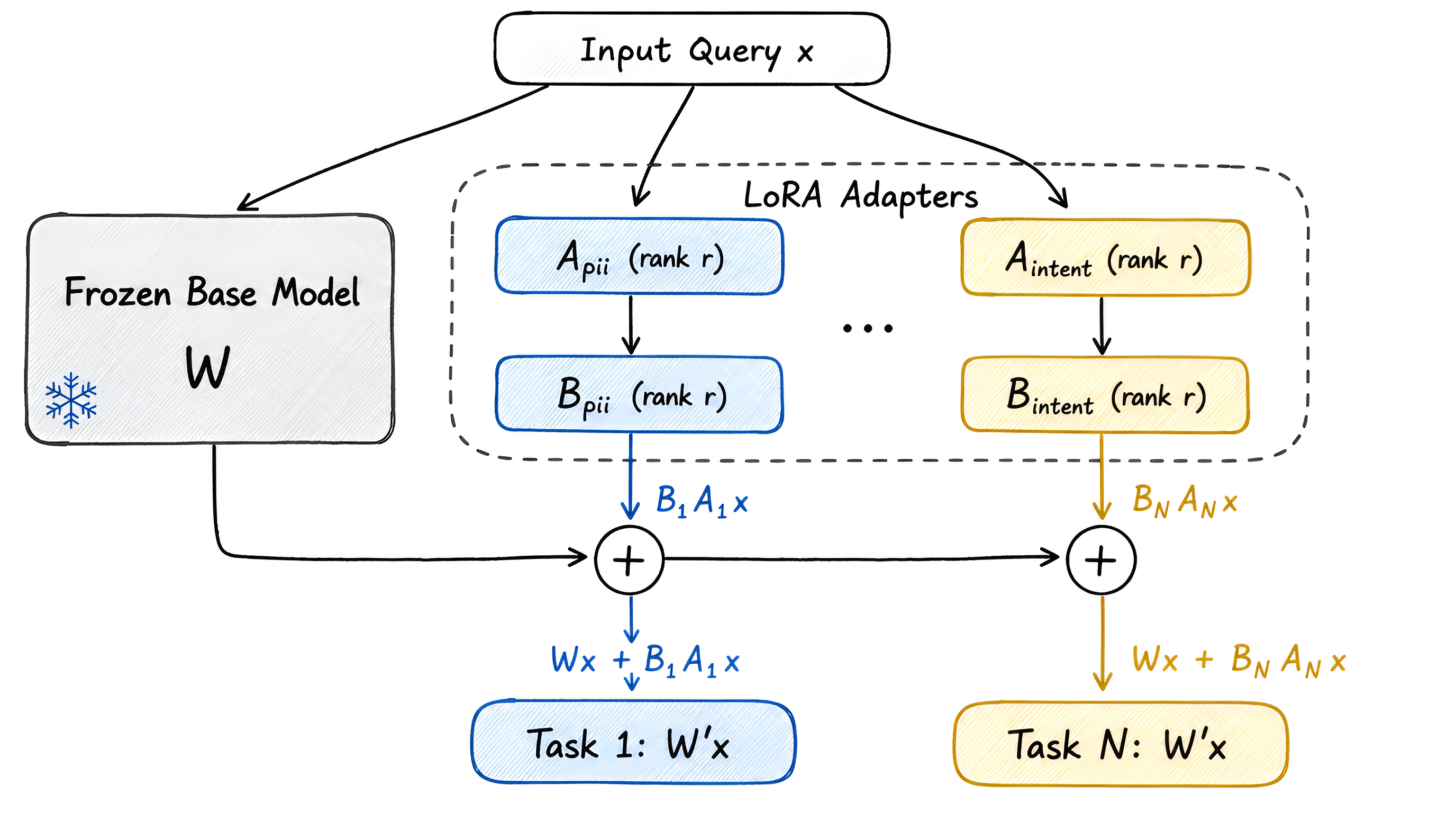}
    \caption{LoRA-based MoM inference architecture. The input query~$\mathbf{x}$ is processed by a single frozen base model~($W$). Each task-specific LoRA adapter pair ($A_i$, $B_i$) computes a low-rank perturbation $B_i A_i \mathbf{x}$. The base output and adapter perturbation are summed to produce task-specific classifications, enabling multi-task support with minimal memory overhead.}
    \label{fig:lora_arch}
\end{figure}

Note that LoRA does \emph{not} eliminate the per-task forward pass---each adapter requires a full inference through the modified model.
The primary benefit is \textbf{memory efficiency}: deploying six classifiers requires the memory footprint of approximately one model rather than six, and all adapters can be updated independently without reloading the base model.

\subsection{MoM Model Family}

We train a family of purpose-built models (MoM: Mixture-of-Models) optimized for routing classification tasks:

\begin{table}[htbp]
\centering
\caption{MoM model family.
All models share a common base (ModernBERT~\cite{warner2024modernbert} or mmBERT-32K) and are distributed as LoRA adapters.}
\label{tab:mom_family}
\begin{tabular}{lll}
\toprule
\textbf{Model} & \textbf{Task} & \textbf{Training Data} \\
\midrule
\texttt{mom-domain}         & Domain classification     & MMLU categories \\
\texttt{mom-pii}            & PII token classification  & Presidio-annotated corpora \\
\texttt{mom-jailbreak}      & Prompt injection detection & Adversarial prompt datasets \\
\texttt{mom-sentinel}       & Fact-check gating         & Factual vs.\ creative queries \\
\texttt{mom-detector}       & Hallucination detection   & Annotated LLM outputs \\
\texttt{mom-explainer}      & NLI explanation           & NLI benchmarks \\
\texttt{mom-feedback}       & User feedback analysis    & Conversation annotations \\
\texttt{mom-modality}       & Modality classification   & DiffusionDB + text corpora \\
\texttt{mom-embedding}      & Semantic embeddings       & Contrastive pre-training \\
\texttt{mom-toolcall}       & Tool selection            & Function-calling datasets \\
\texttt{mom-intent} & User intent classification & Customer support dialogues \\
\bottomrule
\end{tabular}
\end{table}

The key benefit of distributing these as LoRA adapters rather than independent models is \textbf{operational simplicity}: a single base model binary serves all ten tasks, adapters can be hot-swapped without reloading the base, and new tasks can be added by training a new adapter without retraining or redistributing the base model.

\subsection{Training Methodology}

All LoRA adapters are trained using PEFT~\cite{mangrulkar2022peft} with the following protocol:
\begin{itemize}[leftmargin=*]
  \item \textbf{Base model}: ModernBERT or mmBERT-32K (for long-context tasks).
  \item \textbf{Adapter configuration}: Rank $r \in \{16, 32, 64\}$, applied to query and value projection matrices.
  \item \textbf{Training}: Task-specific datasets with standard cross-entropy loss.
  \item \textbf{Export}: Both LoRA-only (separate adapter files for hot-swapping) and merged (single model file for simplified deployment) formats.
\end{itemize}

The modality classifier, for instance, is trained on a balanced mixture of DiffusionDB (image generation prompts), OASST2, Alpaca, and Dolly (text generation), achieving three-class classification (autoregressive, diffusion, both) with $\sim$0.02\% trainable parameters relative to the base model.


\section{Semantic Model Selection}
\label{sec:model_selection}

The core routing innovation is \emph{semantic model selection}: once the decision engine matches a routing decision $d^*$, the system analyzes the request's semantic content---its embedding, domain, complexity, and interaction history---to select the most cost-effective model from the decision's candidate set.
Unlike static routing or single-criterion difficulty classifiers, semantic selection operates over the full signal context produced by the signal engine, enabling cost-quality optimization that respects per-decision privacy and safety constraints.

We integrate thirteen algorithms within a unified interface, enabling systematic comparison and hybrid combinations across deployment scenarios.

\subsection{Problem Setting}

Given query embedding $\mathbf{e}_q \in \mathbb{R}^d$, domain category $z \in \{1, \ldots, C\}$, candidate models $\mathcal{M}_{d^*} = \{m_1, \ldots, m_K\}$ with associated costs $\{c_1, \ldots, c_K\}$, and quality estimators, the semantic selection problem is:
\begin{equation}
  m^* = \arg\max_{m_k \in \mathcal{M}_{d^*}} \; \text{Quality}(\mathbf{e}_q, z, m_k; \Theta) - \lambda \cdot \text{Cost}(m_k)
\end{equation}
where $\lambda \geq 0$ is a cost-sensitivity parameter and $\Theta$ represents algorithm-specific parameters.
The per-decision candidate set $\mathcal{M}_{d^*}$ is critical: privacy-constrained decisions restrict candidates to compliant models, while cost-optimized decisions include a broader pool with aggressive cost weighting.
We categorize algorithms into families based on their selection mechanism (\Cref{fig:selection_taxonomy}).

\begin{figure}[t]
\centering
\includegraphics[width=0.94\linewidth]{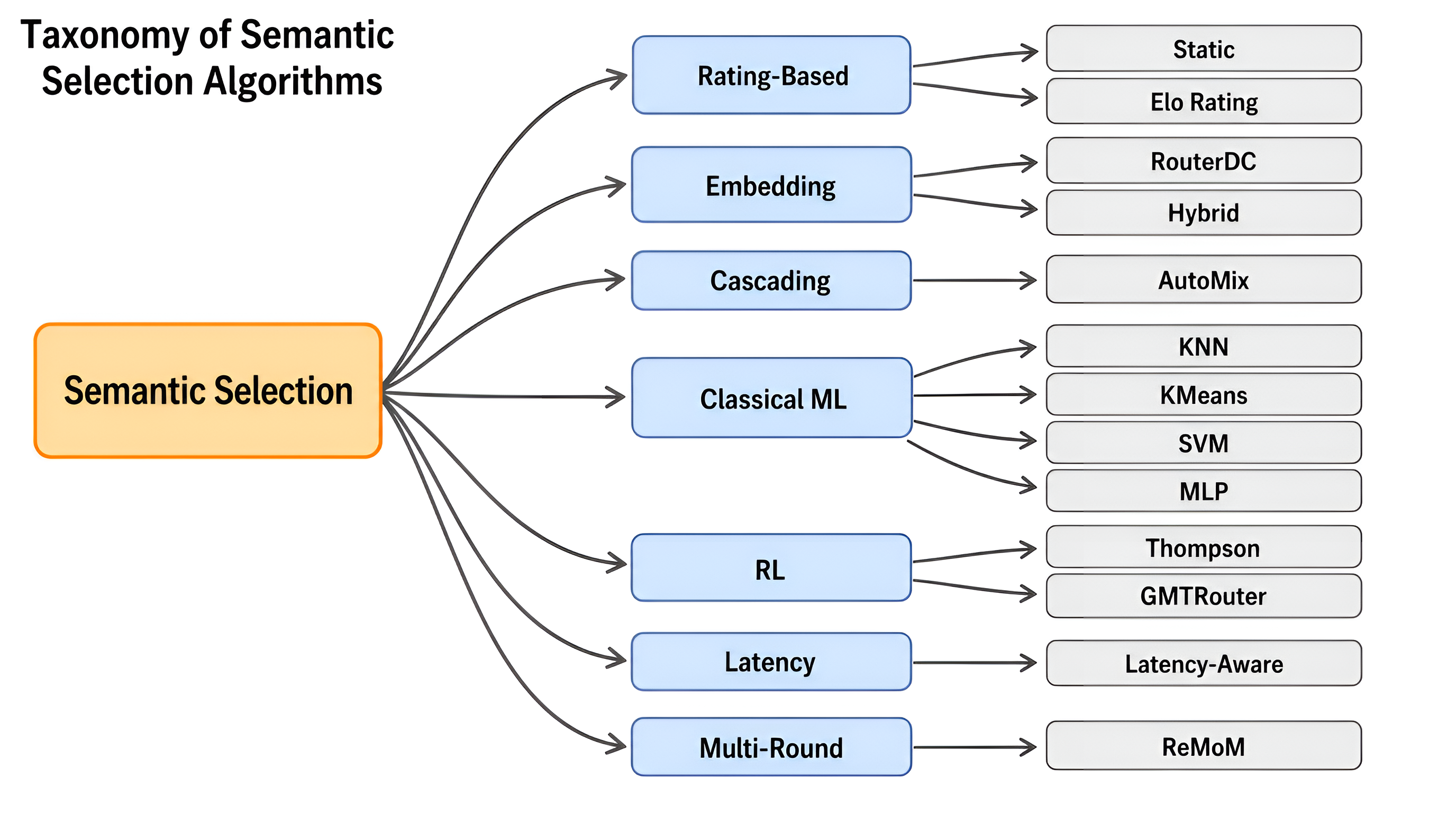}
\caption{Taxonomy of thirteen semantic model selection algorithms organized by selection mechanism.  Families span from lightweight rating-based methods (Static, Elo) to learned approaches (RouterDC, Classical ML, RL), adaptive cascading (AutoMix), real-time latency tracking, and multi-round synthesis (ReMoM).}
\label{fig:selection_taxonomy}
\end{figure}

\subsection{Rating-Based Selection}

\noindent\textbf{Static.}
Each model carries a pre-configured quality score $s_k$; selection is $m^* = \arg\max_k s_k$.
Serves as a deterministic baseline.

\noindent\textbf{Elo Rating} (adapted from RouteLLM~\cite{ong2024routellm}).
Models maintain Elo ratings $R_k$ updated from pairwise user preference feedback.
Selection probability follows the Bradley-Terry model:
\begin{equation}
  P(m_i \succ m_j) = \frac{1}{1 + 10^{(R_j - R_i)/400}}
\end{equation}
Models are sampled proportional to their expected win rate against the candidate pool.
Ratings are updated online as user feedback arrives.

\subsection{Embedding-Based Selection}

\noindent\textbf{RouterDC}~\cite{chen2024routerdc}.
Dual contrastive learning trains query and model encoders to produce embeddings in a shared space.
Selection maximizes cosine similarity:
\begin{equation}
  m^* = \arg\max_{m_k \in \mathcal{M}_{d^*}} \cos(\mathbf{e}_q, \mathbf{e}_{m_k})
\end{equation}
The contrastive training objective encourages queries to be close to their best-performing model's embedding and distant from poorly-performing models.

\noindent\textbf{Hybrid}~\cite{hu2024routerbench}.
Combines Elo ratings, embedding similarity, and cost in a weighted score:
\begin{equation}
  \text{score}(m_k) = \alpha \cdot \tilde{R}_k + \beta \cdot \cos(\mathbf{e}_q, \mathbf{e}_{m_k}) + \gamma \cdot (1 - \tilde{c}_k)
\end{equation}
where $\tilde{R}_k$ and $\tilde{c}_k$ are normalized ratings and costs, and $\alpha + \beta + \gamma = 1$ are configurable weights.

\subsection{Cascading Selection}

\noindent\textbf{AutoMix}~\cite{aggarwal2023automix}.
Formulated as a Partially Observable Markov Decision Process (POMDP).
Models are ordered by capability $m_1 \prec m_2 \prec \cdots \prec m_K$.
The cascade:
\begin{enumerate}
  \item Generate response $a_k$ with current model $m_k$ (starting from $k=1$, the cheapest).
  \item Self-verify: estimate response quality $\hat{q}_k$ using $m_k$ itself.
  \item If $\hat{q}_k \geq \tau_k$, accept $a_k$; otherwise, escalate to $m_{k+1}$.
\end{enumerate}
The expected cost is:
\begin{equation}
  \mathbb{E}[C] = \sum_{k=1}^{K} C_k \cdot \prod_{j=1}^{k-1}(1 - P(\hat{q}_j \geq \tau_j))
\end{equation}
where $P(\hat{q}_j \geq \tau_j)$ is the probability that model $m_j$ passes self-verification.
This naturally trades off cost against quality.

\subsection{Classical ML Selection}

These methods train on routing records $\{(\mathbf{e}_q^{(i)}, z^{(i)}, m^{*(i)}, q^{(i)})\}$ where $q^{(i)}$ is a quality score.
Feature vectors combine embeddings and domain information:
\begin{equation}
  \mathbf{f} = [\mathbf{e}_q \in \mathbb{R}^d; \; \text{onehot}(z) \in \{0,1\}^C]
\end{equation}

\noindent\textbf{KNN.}
$k$-nearest neighbor search with Ball Tree indexing.
Quality-weighted majority voting:
\begin{equation}
  m^* = \arg\max_m \sum_{i \in \text{kNN}(\mathbf{f})} \mathbf{1}[m^{*(i)} = m] \cdot q^{(i)}
\end{equation}

\noindent\textbf{KMeans.}
Assigns queries to pre-computed clusters; selects the best model for the assigned cluster based on a combined quality-latency score:
\begin{equation}
  m^* = \arg\max_m \bigl(\alpha \cdot \text{quality}(m, z_\text{cluster}) - (1-\alpha) \cdot \text{latency}(m)\bigr)
\end{equation}

\noindent\textbf{SVM.}
Multi-class SVM with RBF or linear kernel, trained to classify feature vectors directly into model selections.

\noindent\textbf{MLP.}
A feed-forward neural network (two hidden layers with ReLU activation) mapping $\mathbf{f}$ to a softmax distribution over candidate models:
\begin{equation}
  P(m_k \mid \mathbf{f}) = \text{softmax}\bigl(W_2 \cdot \text{ReLU}(W_1 \mathbf{f} + b_1) + b_2\bigr)_k
\end{equation}
The MLP is implemented in the GPU-accelerated Candle runtime for low-latency inference.

\subsection{Reinforcement Learning Selection}

\noindent\textbf{Thompson Sampling}~\cite{thompson1933likelihood}.
Each model maintains a Beta prior:
\begin{equation}
  \theta_k \sim \text{Beta}(\alpha_k, \beta_k)
\end{equation}
Selection samples from each posterior and picks the maximum: $m^* = \arg\max_k \theta_k$.
Parameters $(\alpha_k, \beta_k)$ are updated from user preference feedback, naturally balancing exploration and exploitation.

\noindent\textbf{GMTRouter}~\cite{xie2025gmtrouter}.
Models multi-turn user-query-model interactions as a heterogeneous graph.
Graph neural network message passing captures complex interaction patterns:
\begin{equation}
  \mathbf{h}_v^{(l+1)} = \text{AGG}\bigl(\{\mathbf{h}_u^{(l)} \mid u \in \mathcal{N}(v)\}\bigr)
\end{equation}
where nodes represent users, queries, and models, and edges encode historical routing outcomes.

\subsection{Latency-Aware Selection}

\noindent\textbf{Latency-Aware.}
Selects the model with the best observed latency using percentile-based Time-per-Output-Token (TPOT) and Time-to-First-Token (TTFT) statistics collected at runtime.
For each candidate model $m_k$, the selector computes a normalized latency score:
\begin{equation}
  s_k = \frac{1}{|P|} \sum_{p \in P} \frac{\text{perc}_p(m_k)}{\min_{j}\, \text{perc}_p(m_j)}
\end{equation}
where $P \subseteq \{\text{TPOT}, \text{TTFT}\}$ is the set of configured performance metrics and $\text{perc}_p(m_k)$ is the observed percentile value for model $m_k$ on metric $p$.
Selection minimizes this score: $m^* = \arg\min_k s_k$.
This enables adaptive routing that responds to real-time backend performance degradation without requiring explicit latency thresholds as signal conditions.

\subsection{Multi-Round Reasoning (ReMoM)}

The ReMoM (Reasoning for Mixture of Models) strategy extends single-shot selection to multi-round parallel reasoning with LLM-driven synthesis.
Inspired by PaCoRe~\cite{pacore2025} but generalized to heterogeneous model pools, ReMoM executes a \emph{breadth schedule} of decreasing parallelism across rounds, where each subsequent round synthesizes the outputs of the previous round via prompted LLM calls (\Cref{fig:remom_flow}).

\begin{figure}[t]
\centering
\includegraphics[width=0.94\linewidth]{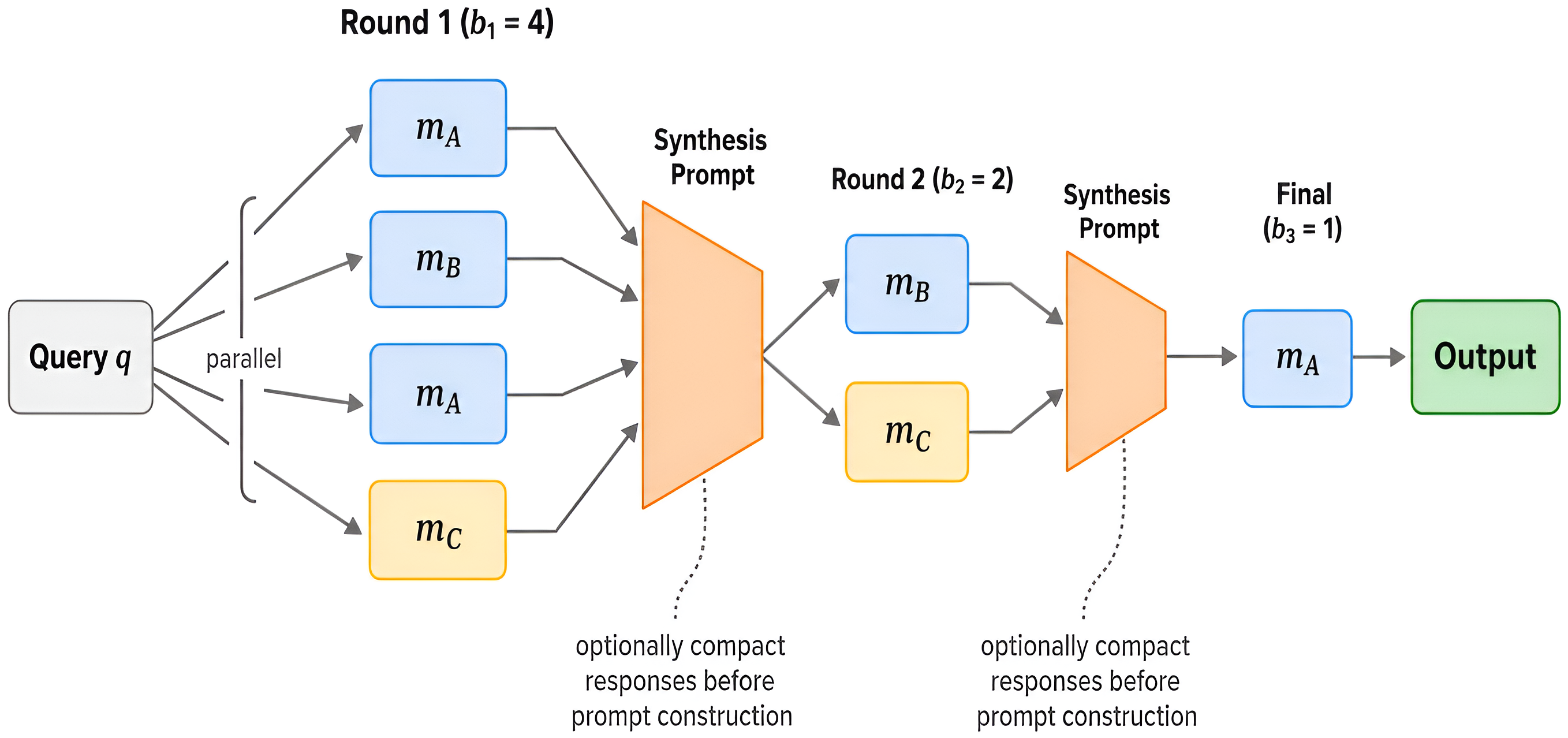}
\caption{ReMoM execution flow with breadth schedule $\mathbf{b} = [4, 2]$.
Round~1 distributes 4~parallel calls across models ($m_A$, $m_B$, $m_C$); responses are (optionally compacted and) assembled into a synthesis prompt.
Round~2 sends 2~parallel synthesis calls.
A final round ($b_3 = 1$, auto-appended) produces the single output.
Each synthesis prompt includes the original query and all previous-round responses as numbered references, delegating quality judgment to the synthesizing LLM.}
\label{fig:remom_flow}
\end{figure}

\noindent\textbf{Breadth schedule.}
The operator specifies a breadth schedule $\mathbf{b} = [b_1, b_2, \ldots, b_R]$ defining the number of parallel model calls per round.
A final synthesis round with $b_{R+1} = 1$ is automatically appended, yielding a total of $R+1$ rounds.
For example, $\mathbf{b} = [32, 4]$ produces three rounds: 32 parallel calls, then 4 parallel calls each synthesizing the 32 responses, then a single final call synthesizing the 4 responses.

\noindent\textbf{Model distribution.}
At each round, $b_r$ calls are distributed among the candidate models $\mathcal{M}_{d^*}$ according to one of three strategies:
(1)~\emph{equal}: calls are distributed evenly across all candidates with round-robin remainder allocation;
(2)~\emph{weighted}: calls are distributed proportionally to model weights (currently equivalent to equal distribution);
(3)~\emph{first\_only}: all $b_r$ calls target a single model with different random seeds, providing PaCoRe-compatible single-model diversity.
Calls within each round execute concurrently with configurable concurrency limits.

\noindent\textbf{LLM-driven synthesis.}
After collecting responses from round $r$, the system constructs a synthesis prompt for round $r+1$ using a configurable Go \texttt{text/template}.
The default template presents the original query alongside all previous-round responses as numbered references, instructing the next-round model(s) to \emph{``analyze these references and provide your own comprehensive solution.''}
When reasoning content is available (e.g., from models supporting extended thinking), the template additionally includes each reference's chain-of-thought reasoning.
This approach delegates quality judgment entirely to the synthesizing LLM rather than relying on explicit scoring or weighted aggregation.

\noindent\textbf{Response compaction.}
To manage prompt length across rounds, responses can be compacted before inclusion in synthesis prompts.
Two strategies are supported: \emph{full} (no compaction, the default) and \emph{last\_n\_tokens} (retaining only the final $N$ tokens, estimated at $\sim$4 characters per token).
This is particularly important for high-breadth schedules where concatenating all responses would exceed context limits.

\noindent\textbf{Execution flow.}
The complete algorithm proceeds as:
\begin{enumerate}
  \item \textbf{Schedule construction}: Append $[1]$ to the user-specified breadth schedule $\mathbf{b}$.
  \item \textbf{Round~1 (parallel generation)}: Distribute $b_1$ calls across candidate models; execute concurrently with temperature $T$ (default 1.0) for response diversity.
  \item \textbf{Rounds~$2 \ldots R{+}1$ (synthesis)}: For each subsequent round, build a synthesis prompt from the previous round's (optionally compacted) responses, distribute $b_r$ calls, and execute concurrently.
  \item \textbf{Final output}: Return the single response from the final round ($b_{R+1} = 1$).
\end{enumerate}

ReMoM is particularly effective when model capabilities are uncertain or when the task benefits from diverse perspectives (e.g., complex reasoning, multi-faceted analysis).
The breadth schedule provides fine-grained control over the cost--quality tradeoff: higher initial breadth increases diversity at the cost of additional LLM calls, while the funneling structure ensures convergence to a single synthesized answer.

\subsection{Unified Selection Interface}

All thirteen algorithms implement a common interface:
\begin{equation}
  \text{Select}: (\mathbf{e}_q, z, \mathcal{M}, \Theta) \to (m^*, c)
\end{equation}
returning the selected model and a confidence score.
This uniformity enables:
(1)~per-decision algorithm selection---different routing decisions can use different selection algorithms, allowing cost-optimized decisions to use cascading (AutoMix) while quality-sensitive decisions use embedding-based (RouterDC) selection;
(2)~A/B testing across algorithms on live traffic;
(3)~ensemble methods that combine multiple selectors.

\subsection{Cost-Aware Selection in Multi-Provider Settings}

In multi-endpoint deployments where the same logical model may be served by different providers at different price points, the selection algorithms operate in conjunction with the endpoint router (\Cref{subsec:multi_endpoint}).
The selection algorithm chooses the best \emph{model} based on semantic analysis, and the endpoint router resolves it to the most cost-effective \emph{provider endpoint}.
This two-stage process separates quality optimization (which model is best for this query?) from cost optimization (which provider endpoint offers the best price for this model?), enabling fine-grained cost management across heterogeneous multi-cloud deployments.


\section{Multi-Runtime ML Inference}
\label{sec:ml_inference}

The routing system requires low-latency ML inference for signal extraction, classification, and embedding computation---all on the critical path of every request.
We describe the multi-runtime architecture that addresses the tension between inference speed, hardware flexibility, and model diversity.

\subsection{Design Constraints}

Three constraints shape the inference architecture:
\begin{enumerate}
  \item \textbf{Latency}: Signal extraction must complete within the tail latency budget of the routing system (target: $<$100\,ms for all signals combined).
  \item \textbf{Hardware heterogeneity}: Deployments range from GPU-equipped data centers to CPU-only edge nodes.
  \item \textbf{Model diversity}: Different tasks require different model architectures (sequence classification, token classification, NLI, embeddings, MLP).
\end{enumerate}

\subsection{Four-Runtime Architecture}

We implement four inference runtimes, each optimized for different hardware and task profiles, all exposed to the Go routing process via C FFI and CGo (\Cref{fig:ml_runtimes}):

\begin{table}[htbp]
\centering
\caption{Inference runtime characteristics}
\label{tab:runtimes}
\begin{tabularx}{\linewidth}{@{}
  >{\raggedright\arraybackslash}p{1.6cm}
  >{\raggedright\arraybackslash}p{1.85cm}
  >{\raggedright\arraybackslash}X
  >{\raggedright\arraybackslash}p{2.9cm}
@{}}
\toprule
\textbf{Runtime} & \textbf{Target Hardware} & \textbf{Tasks} & \textbf{Framework} \\
\midrule
Candle       & GPU (CUDA), CPU & Classification, LoRA, MLP & HF Candle~\cite{candleml2024} \\
Linfa        & CPU only        & KNN, KMeans, SVM         & Linfa~\cite{linfa2024} \\
ONNX RT      & CPU, GPU        & Embeddings               & ONNX Runtime~\cite{onnxruntime2024} \\
NLP Binding  & CPU only        & BM25, N-gram matching    & bm25 + ngrammatic (Rust) \\
\bottomrule
\end{tabularx}
\end{table}

All runtimes are compiled as Rust shared libraries (\texttt{.so}/\texttt{.a}) and linked to the Go routing process via CGo.
This eliminates Python runtime overhead, GIL contention, and inter-process communication latency that would arise from serving models in separate Python processes.

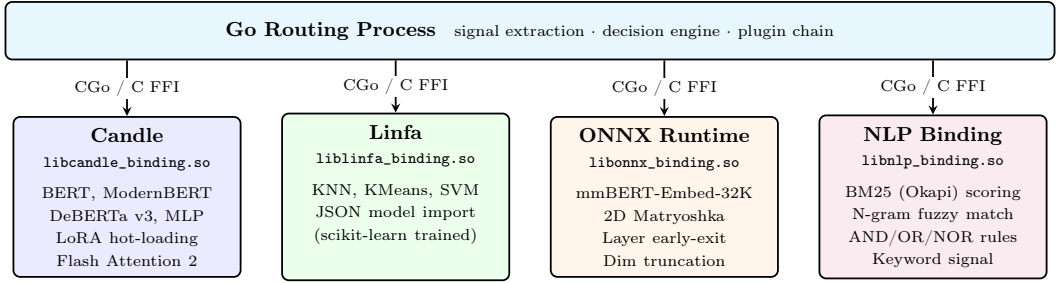
\begin{figure*}[ht]
\centering
\resizebox{\linewidth}{!}{%
\begin{tikzpicture}[
  gobox/.style={rectangle, draw, thick, rounded corners=3pt, fill=cyan!8,
                minimum width=16.4cm, minimum height=0.9cm, align=center,
                inner sep=5pt, font=\small},
  rt/.style={rectangle, draw, thick, rounded corners=3pt,
             text width=3.2cm, align=center, inner sep=5pt, font=\small},
  arr/.style={->, >=stealth, thick},
  ffi/.style={font=\scriptsize, midway, fill=white, inner sep=1pt},
]

\node[gobox] (go) at (0, 2.6)
  {\textbf{Go Routing Process}\;\;
   {\scriptsize signal extraction $\cdot$ decision engine $\cdot$ plugin chain}};

\node[rt, fill=blue!8] (candle) at (-6.3, 0) {%
  \textbf{Candle}\\[1pt]
  {\scriptsize\texttt{libcandle\_binding.so}}\\[4pt]
  {\scriptsize BERT, ModernBERT}\\
  {\scriptsize DeBERTa v3, MLP}\\
  {\scriptsize LoRA hot-loading}\\
  {\scriptsize Flash Attention 2}};

\node[rt, fill=green!8] (linfa) at (-2.1, 0) {%
  \textbf{Linfa}\\[1pt]
  {\scriptsize\texttt{liblinfa\_binding.so}}\\[4pt]
  {\scriptsize KNN, KMeans, SVM}\\
  {\scriptsize JSON model import}\\
  {\scriptsize (scikit-learn trained)}\\[3pt]~};

\node[rt, fill=orange!8] (onnx) at (2.1, 0) {%
  \textbf{ONNX Runtime}\\[1pt]
  {\scriptsize\texttt{libonnx\_binding.so}}\\[4pt]
  {\scriptsize mmBERT-Embed-32K}\\
  {\scriptsize 2D Matryoshka}\\
  {\scriptsize Layer early-exit}\\
  {\scriptsize Dim truncation}};

\node[rt, fill=purple!8] (nlp) at (6.3, 0) {%
  \textbf{NLP Binding}\\[1pt]
  {\scriptsize\texttt{libnlp\_binding.so}}\\[4pt]
  {\scriptsize BM25 (Okapi) scoring}\\
  {\scriptsize N-gram fuzzy match}\\
  {\scriptsize AND/OR/NOR rules}\\
  {\scriptsize Keyword signal}};

\draw[arr] (go.south -| candle) -- node[ffi] {CGo / C FFI} (candle.north);
\draw[arr] (go.south -| linfa)  -- node[ffi] {CGo / C FFI} (linfa.north);
\draw[arr] (go.south -| onnx)   -- node[ffi] {CGo / C FFI} (onnx.north);
\draw[arr] (go.south -| nlp)    -- node[ffi] {CGo / C FFI} (nlp.north);

\end{tikzpicture}%
}
\caption{Four-runtime ML inference architecture. The Go routing process links to four Rust shared libraries via CGo/C~FFI. Each runtime is specialized for a different class of ML workload, avoiding Python overhead on the critical path.}
\label{fig:ml_runtimes}
\end{figure*}

\subsection{Candle Runtime: GPU-Accelerated Classification}

The Candle runtime handles all transformer-based classification tasks, including LoRA adapter loading and inference (\Cref{sec:lora_mom}).

\noindent\textbf{Supported architectures.}
BERT~\cite{devlin2019bert}, ModernBERT~\cite{warner2024modernbert} (with Flash Attention and GeGLU), mmBERT-32K (YaRN RoPE for 32K context), DeBERTa v3 (NLI), and feed-forward MLPs (model selection).

\noindent\textbf{Optimization features.}
Flash Attention 2 kernels reduce attention memory from $O(n^2)$ to $O(n)$ and improve throughput.
Optional Intel MKL integration for CPU deployments.
LoRA adapter hot-loading enables runtime model updates without restart.

\subsection{ModernBERT-base-32k: Extended Context Window}

The Candle runtime supports ModernBERT-base-32k, extending the context window from 512 tokens (BERT-base) to 32,768 tokens via YaRN (Yet another RoPE extensioN) scaling~\cite{peng2023yarn}.
This enables signal extraction from long documents and multi-turn conversations that exceed the traditional 512-token limit.

\noindent\textbf{Architecture and scaling.}
ModernBERT-base-32k uses Rotary Position Embeddings (RoPE) with YaRN scaling to extend the base model's 8,192-token context to 32K tokens.
The model maintains compatibility with existing LoRA adapters trained on BERT-base, enabling seamless migration without retraining classifier heads.
Flash Attention 2 provides memory-efficient attention computation, reducing memory requirements from $O(n^2)$ to $O(n)$ for long sequences.

\noindent\textbf{Performance characteristics.}
Empirical benchmarks on NVIDIA L4 GPU (23GB VRAM) demonstrate reliable performance for context lengths up to 8K tokens:
\begin{itemize}[leftmargin=*]
  \item \textbf{1K tokens}: p50 latency $\sim$94\,ms at C=1, $\sim$970\,ms at C=10 (100\% success rate)
  \item \textbf{4K tokens}: p50 latency $\sim$955\,ms at C=1, $\sim$9,389\,ms at C=10 (93\% success rate, 7 OOM errors)
  \item \textbf{8K tokens}: p50 latency $\sim$3,525\,ms at C=1, fails at C=10 due to insufficient GPU memory
\end{itemize}
For sequences exceeding 32K tokens, automatic chunking with configurable overlap (default 128 tokens) enables processing of arbitrarily long documents while preserving context continuity.

\noindent\textbf{Hardware requirements.}
Production deployments for 1K--8K tokens require a GPU with $\geq$23GB VRAM (e.g., NVIDIA L4, A10G).
Full 32K token support and high concurrency (C=50+) require $\geq$40GB VRAM (e.g., NVIDIA A100).
CPU inference is supported but incurs significant latency penalties ($\sim$45$\times$ slower for 512 tokens).

\noindent\textbf{Backward compatibility.}
The integration maintains full backward compatibility with existing BERT-base workflows.
Sequences $\leq$512 tokens exhibit no performance degradation, and existing LoRA adapters (domain classification, PII detection, jailbreak detection) function without modification.

\subsection{Linfa Runtime: CPU ML Inference}

Classical ML model selection algorithms (KNN, KMeans, SVM) are served by the Linfa runtime.
These algorithms operate on pre-computed feature vectors and do not require GPU acceleration, making Linfa's lightweight CPU implementation ideal.

\noindent\textbf{Training-inference split.}
Models are trained in Python (scikit-learn, custom implementations) and serialized to JSON.
The Rust runtime loads serialized models at startup and performs inference-only computation.
This decouples the training environment (Python, GPU-optional) from the inference environment (Rust, CPU-only), enabling simpler deployment.

\subsection{ONNX Runtime: Efficient Embeddings}

Embedding computation is served by ONNX Runtime, optimized for the mmBERT-Embed-32K model with 2D Matryoshka representation learning~\cite{kusupati2022matryoshka}.

\noindent\textbf{2D Matryoshka trade-offs.}
The architecture supports two-dimensional quality-latency trade-offs:
\begin{itemize}[leftmargin=*]
  \item \textbf{Layer early-exit}: Extract embeddings from intermediate layers (6, 11, 16, or 22 out of 22), achieving $\sim$3--4$\times$ speedup at layer 6 with modest quality degradation.
  \item \textbf{Dimension truncation}: Reduce embedding dimension from 768 to 64, 128, 256, or 512, reducing memory and computation for similarity search.
\end{itemize}

For the $\sim$150M parameter embedding model, CPU inference with 2D Matryoshka (layer 11, dimension 256) achieves latency comparable to GPU inference on the full model, making GPU optional for embedding computation.

\subsection{NLP Binding Runtime: Keyword Classification}

The NLP Binding runtime handles BM25 and N-gram keyword matching for the keyword signal type, complementing the ML-based classifiers served by Candle.
Unlike the neural runtimes, NLP Binding implements statistical text matching algorithms that require no model weights or GPU:

\noindent\textbf{BM25 (Okapi).}
The BM25 classifier uses the Rust \texttt{bm25} crate to compute term-frequency--inverse-document-frequency scores between query text and keyword rules.
Each rule specifies a set of keywords, a Boolean operator (AND/OR/NOR), a score threshold, and case-sensitivity.
A keyword is considered matched when its BM25 relevance score exceeds the configured threshold.

\noindent\textbf{N-gram fuzzy matching.}
The N-gram classifier uses the \texttt{ngrammatic} crate to perform fuzzy string matching via character n-gram overlap (default: trigrams, Jaccard similarity).
This enables matching despite typos and morphological variation---e.g., \texttt{"urgnet"} matches the keyword \texttt{"urgent"} when the similarity exceeds the configured threshold.

\noindent\textbf{FFI design.}
The binding follows the same conventions as \texttt{candle-binding}: Rust \texttt{\#[repr(C)]} structs, \texttt{\#[no\_mangle] pub extern "C" fn} exports, CString-based string passing, and explicit free functions for Rust-allocated memory.
The Go side wraps each classifier as a handle-based API (\texttt{BM25Classifier}, \texttt{NgramClassifier}) with lifecycle management (\texttt{New}, \texttt{AddRule}, \texttt{Classify}, \texttt{Free}).
Thread safety is provided by Rust \texttt{Mutex}-guarded global state with atomic handle generation.

\subsection{Runtime Selection Strategy}

The routing system selects runtimes based on deployment configuration.
The NLP Binding is always active when keyword signal rules use \texttt{bm25} or \texttt{ngram} methods:
\begin{itemize}[leftmargin=*]
  \item \textbf{GPU available}: Candle (classification + LoRA) + ONNX (embeddings) + Linfa (ML selection) + NLP Binding (keyword).
  \item \textbf{CPU only}: Candle with MKL (classification) + ONNX with early-exit (embeddings) + Linfa (ML selection) + NLP Binding (keyword).
  \item \textbf{Minimal}: ONNX (embeddings) + Linfa (ML selection) + NLP Binding (keyword), with classification delegated to external vLLM-served models.
\end{itemize}


\section{Request Processing Pipeline}
\label{sec:extproc}

We implement the routing system as an Envoy~\cite{envoyproxy2024} External Processor (ExtProc)~\cite{envoyextproc2024}, enabling transparent interception of LLM API traffic without client-side modifications.
\Cref{fig:extproc_pipeline} illustrates the bidirectional request processing pipeline.
This section describes the pipeline architecture, multi-provider routing, the Responses API integration, and the pluggable authorization factory.

\subsection{Transparent Interception via ExtProc}

The Envoy ExtProc protocol~\cite{envoyextproc2024} establishes a bidirectional gRPC stream between the proxy and the routing service for each HTTP request.
Envoy invokes the processor at four phases---request headers, request body, response headers, response body---and the processor responds with mutations (header modifications, body rewrites) or immediate responses (short-circuiting the backend).

This architecture provides two key advantages:
(1)~\emph{transparency}: clients send standard OpenAI-compatible API requests to the proxy endpoint with no awareness of the routing layer; and
(2)~\emph{composability}: the router coexists with other Envoy filters (rate limiting, authentication, load balancing) in the standard filter chain.

\subsection{Request Body Pipeline}

The request body phase implements the core routing logic as a sequential pipeline:

\begin{equation}
  r \xrightarrow{\text{parse}} r' \xrightarrow{\text{signals}} S(r') \xrightarrow{\text{decide}} d^* \xrightarrow{\Pi_\text{pre}} \xrightarrow{\text{select}} m^* \xrightarrow{\text{route}} e^*
\end{equation}

\begin{figure}[t]
\centering
\resizebox{\linewidth}{!}{%
\begin{tikzpicture}[
    node distance=0.3cm,
    phase/.style={rectangle, draw, thick, rounded corners=3pt,
                  align=center, inner sep=5pt, font=\small,
                  minimum height=1.6cm},
    io/.style={rectangle, draw, rounded corners=2pt, fill=black!4,
               font=\small, inner sep=4pt, align=center,
               minimum width=1.2cm},
    sub/.style={font=\scriptsize, align=left, text=black!70},
    arr/.style={->, >=stealth, thick},
    darr/.style={->, >=stealth, densely dashed, gray!40},
    pathlbl/.style={font=\scriptsize\bfseries, text=gray!60},
  ]

  \node[pathlbl] at (-0.8, 1.6) {Request};
  \node[pathlbl] at (-0.8, 1.2) {path};

  \node[io] (req) at (0.8, 1.4) {$r$};

  \node[phase, fill=blue!6, minimum width=2.4cm] (sig) at (3.2, 1.4) {%
    \textbf{Signal}\\[-1pt]
    {\scriptsize parse, detect,}\\[-2pt]
    {\scriptsize extract $S(r)$}};

  \node[phase, fill=green!6, minimum width=2.4cm] (dec) at (6.2, 1.4) {%
    \textbf{Decision}\\[-1pt]
    {\scriptsize evaluate $\varphi$}\\[-2pt]
    {\scriptsize select $d^*$}};

  \node[phase, fill=orange!5, minimum width=3.6cm] (plug) at (9.8, 1.4) {%
    \textbf{Plugins (Pre)}\\[-1pt]
    {\scriptsize fast response, cache,}\\[-2pt]
    {\scriptsize RAG, modality, memory,}\\[-2pt]
    {\scriptsize select $m^*$, route $e^*$}};

  \node[io] (ep) at (12.6, 1.4) {$e^*$};

  \draw[arr] (req) -- (sig);
  \draw[arr] (sig) -- node[above, font=\scriptsize]{$S(r)$} (dec);
  \draw[arr] (dec) -- node[above, font=\scriptsize]{$d^*$} (plug);
  \draw[arr] (plug) -- (ep);

  \node[io, minimum width=0.8cm, font=\scriptsize] (hit) at (9.8, 3.1) {resp};
  \draw[darr] (plug.north) -- (hit.south);
  \node[font=\tiny, text=gray, anchor=west] at (10.1, 2.65) {fast resp / cache};

  \node[pathlbl, anchor=east] at (-0.1, -1.4) {Response};
  \node[pathlbl, anchor=east] at (-0.1, -1.8) {path};

  \node[io] (ep2) at (12.6, -1.6) {$e^*$};

  \node[phase, fill=blue!6, minimum width=2.6cm] (rsig) at (10.2, -1.6) {%
    \textbf{Signal}\\[-1pt]
    {\scriptsize usage, HaluGate,}\\[-2pt]
    {\scriptsize metrics (TTFT/TPOT)}};

  \node[phase, fill=green!6, minimum width=2.6cm] (rdec) at (6.8, -1.6) {%
    \textbf{Decision}\\[-1pt]
    {\scriptsize quality gating,}\\[-2pt]
    {\scriptsize cache eligibility}};

  \node[phase, fill=orange!5, minimum width=2.6cm] (post) at (3.4, -1.6) {%
    \textbf{Plugin (Post)}\\[-1pt]
    {\scriptsize format translation,}\\[-2pt]
    {\scriptsize cache write, API xlat}};

  \node[io] (resp) at (0.8, -1.6) {resp};

  \draw[arr] (ep2) -- (rsig);
  \draw[arr] (rsig) -- (rdec);
  \draw[arr] (rdec) -- (post);
  \draw[arr] (post) -- (resp);

  \draw[thin, densely dotted, gray!40] (ep.south) -- ++(0, -0.15) -| (ep2.north);

\end{tikzpicture}%
}
\caption{Bidirectional request processing pipeline aligned with the three-layer architecture.  \textbf{Request path} (top, left$\to$right): the signal layer parses and extracts $S(r)$, including safety signals (jailbreak, PII); the decision layer evaluates Boolean formulas to select $d^*$; the plugin chain executes pre-processing (fast response for safety enforcement, caching, RAG, memory), model selection, and endpoint routing.  A cache hit or fast response short-circuits the pipeline (dashed).  \textbf{Response path} (bottom, right$\to$left): the same three layers operate in the same order---the signal layer extracts response-side signals (token usage, HaluGate hallucination scores, streaming latency metrics); the decision layer evaluates quality gating and cache eligibility; plugins perform format translation, cache writes, and API translation.}
\label{fig:extproc_pipeline}
\end{figure}
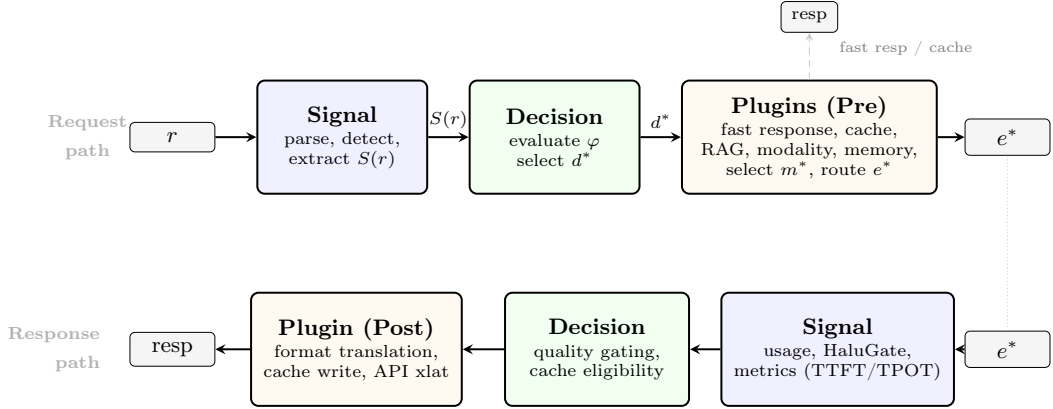

The stages execute in strict order:
(1)~API translation (Responses API $\to$ Chat Completions if applicable, see \Cref{subsec:responses_api});
(2)~request parsing and provider detection;
(3)~signal extraction and decision evaluation (\Cref{sec:signal_engine,sec:decision_engine});
(4)~fast response check (\Cref{sec:plugins})---if the matched decision activates the \texttt{fast\_response} plugin (e.g., for safety enforcement), the pipeline terminates with an immediate OpenAI-compatible response;
(5)~semantic cache lookup (\Cref{sec:plugins})---cache hits terminate the pipeline with an immediate response;
(6)~RAG context injection (\Cref{sec:memory_rag});
(7)~modality routing (text vs.\ diffusion);
(8)~memory retrieval (\Cref{sec:memory_rag});
(9)~model selection (\Cref{sec:model_selection}), system prompt injection, and header mutation;
(10)~multi-endpoint resolution and provider-specific auth injection (\Cref{subsec:multi_endpoint,subsec:authz_factory}).

\subsection{Multi-Endpoint and Multi-Provider Routing}
\label{subsec:multi_endpoint}

Production deployments often span multiple model backends across different providers and geographic regions.
The system supports \emph{multi-endpoint routing} as a first-class concept:

\begin{definition}[Endpoint Topology]
An endpoint topology $\mathcal{E} = \{(e_i, w_i, p_i, \alpha_i)\}_{i=1}^{L}$ defines $L$ endpoints, each with a weight $w_i \in (0, 1]$ (normalized: $\sum_i w_i = 1$), a provider type $p_i \in \mathcal{P}$, and an auth profile $\alpha_i$.
\end{definition}

Once semantic model selection identifies a target model $m^*$, the endpoint router resolves $m^*$ to a concrete endpoint $e^*$ from the set of endpoints serving that model.
Weighted random selection with sticky session affinity distributes load proportionally.
Failover cascades to the next-weighted endpoint on backend errors.

Each endpoint may use a different provider (e.g., the same logical model ``gpt-4o'' served by both OpenAI and Azure OpenAI).
The system performs \emph{provider-specific protocol translation} transparently:

\begin{itemize}[leftmargin=*]
  \item \textbf{OpenAI / Azure OpenAI}: Native Chat Completions and Responses API formats.
  \item \textbf{Anthropic}: Translation between OpenAI message schema and Anthropic Messages API (system prompt handling, tool use mapping).
  \item \textbf{Bedrock / Vertex AI}: Cloud-provider-specific request wrapping, authentication (SigV4 for AWS, OAuth for GCP), and response unwrapping.
  \item \textbf{Gemini}: Conversion between OpenAI function-calling schema and Gemini tool declarations.
  \item \textbf{vLLM / Local}: Direct OpenAI-compatible passthrough to self-hosted vLLM instances.
\end{itemize}

This abstraction allows routing decisions to reference models by capability (``best coding model'') rather than by provider-specific endpoint, and allows the same decision configuration to operate across different deployment topologies.

\subsection{OpenAI Responses API Support}
\label{subsec:responses_api}

The system provides full support for the OpenAI Responses API, which extends Chat Completions with stateful multi-turn conversation management.

The Responses API introduces \texttt{previous\_response\_id} chaining: each response carries a unique identifier, and subsequent requests can reference it to maintain conversation context without the client retransmitting the full message history.
The routing system handles this by:

\begin{enumerate}[leftmargin=*]
  \item \textbf{Inbound translation}: Responses API requests (with \texttt{input} field and \texttt{previous\_response\_id}) are normalized to Chat Completions format for signal extraction and decision evaluation, which operate on the unified internal representation.
  \item \textbf{State management}: Conversation history is stored in the persistent memory layer (\Cref{sec:memory_rag}), keyed by response ID, enabling context retrieval across turns.
  \item \textbf{Outbound translation}: Chat Completions responses from backends are wrapped in Responses API format (with \texttt{id}, \texttt{output} array, \texttt{usage} breakdown) before returning to the client.
  \item \textbf{Routing consistency}: The decision engine can optionally pin conversation turns to the same model to avoid mid-conversation quality shifts.
\end{enumerate}

This translation layer is transparent to both the signal engine and the downstream model backends, enabling all routing, safety, and caching features to operate identically on Responses API and Chat Completions traffic.

\subsection{Authorization Factory}
\label{subsec:authz_factory}

Multi-provider deployments require diverse authentication mechanisms.
The system implements a \emph{pluggable authorization factory} that abstracts auth concerns from routing logic:

\begin{definition}[Auth Provider]
An auth provider $\alpha: (\text{Request}, \text{Endpoint}) \to \text{Headers}'$ is a function that enriches outbound request headers with provider-appropriate credentials.
\end{definition}

The factory supports multiple auth provider types:

\begin{itemize}[leftmargin=*]
  \item \textbf{API Key}: Static bearer tokens or API keys, optionally per-endpoint, with header name customization (e.g., \texttt{Authorization}, \texttt{x-api-key}, \texttt{api-key}).
  \item \textbf{OAuth2 / OIDC}: Token acquisition with automatic refresh, supporting client credentials and authorization code flows.
  \item \textbf{Cloud IAM}: AWS SigV4 signing for Bedrock, Google service account tokens for Vertex AI, Azure AD tokens for Azure OpenAI.
  \item \textbf{Passthrough}: Forwarding the client's original credentials to the backend, used for deployments where the client authenticates directly.
  \item \textbf{Custom}: User-defined auth plugins registered at startup, enabling integration with enterprise identity providers (LDAP, SAML, custom JWT issuers).
\end{itemize}

The auth factory is invoked \emph{after} decision evaluation and model selection, injecting provider-specific credentials into the outbound request headers.
This separation ensures that routing decisions are auth-agnostic: the decision engine selects models based on capability and cost, and the auth layer handles the mechanics of reaching each provider's endpoint.

The \texttt{authz} signal type in the signal engine (\Cref{sec:signal_engine}) is complementary but distinct: it performs \emph{inbound} authorization (verifying that the requesting user or API key has permission to access specific models or decisions), while the auth factory handles \emph{outbound} authentication (proving the router's identity to backend providers).

\subsection{Response Body Pipeline}

The response path performs:
(1)~token usage extraction for cost accounting;
(2)~format translation (provider-specific $\to$ OpenAI format);
(3)~streaming metrics computation (TTFT, TPOT);
(4)~hallucination detection via HaluGate (\Cref{sec:halugate});
(5)~semantic cache writes for cache misses;
(6)~Responses API translation (Chat Completions $\to$ Responses API format, if applicable).

\subsection{Concurrency Model}

Each gRPC stream (one per HTTP request) runs in an independent goroutine, processing its four phases sequentially.
Within a request, signal extraction launches parallel coroutines for independent classifiers.
Shared state (classifier models, cache backends, configuration, auth token caches) is read concurrently by all active streams, with synchronization limited to cache writes, auth token refreshes, and metric updates.


\section{Memory and Retrieval-Augmented Generation}
\label{sec:memory_rag}

Production routing systems must support multi-turn conversations with persistent context and knowledge-augmented responses.
We describe the memory and RAG subsystems that operate as plugins within the routing pipeline.

\subsection{Persistent Memory}

The memory system maintains user-scoped knowledge across conversation sessions, enabling personalized routing and context-aware responses.
\Cref{fig:memory_lifecycle} illustrates the full memory lifecycle.

\noindent\textbf{Memory storage.}
Each conversation turn is stored directly as an \emph{episodic chunk}---no external LLM is required.
The user message and assistant response are concatenated into a \texttt{Q:/A:} block, sanitized (UTF-8 validation, 16\,KB cap per entry), and embedded.
A lightweight \emph{entropy gate} discards turns that carry no retrievable signal---greetings, acknowledgments, and short one-word replies---before embedding, reducing index pollution.
Every $s$ turns a session-level sliding-window chunk spanning the last $w$ turns is additionally stored (defaults: $s{=}3$, $w{=}5$), creating overlapping windows so facts near turn boundaries co-occur in at least one chunk---improving multi-hop retrieval coverage.

\noindent\textbf{Retrieval gating.}
Not every query benefits from memory retrieval.
A lightweight heuristic determines whether memory search is warranted by filtering out general fact-check queries, tool-augmented requests, and simple greetings, avoiding unnecessary embedding lookups and reducing latency for queries where personal context is irrelevant.

\noindent\textbf{Retrieval.}
At query time, relevant memories are retrieved via the hybrid search pipeline (vector similarity, BM25, and n-gram matching) over the user's memory store.
An optional query-rewriting step reformulates the user's query for improved retrieval recall.
Adaptive thresholding adjusts the similarity cutoff based on retrieval mode, preventing cosine-calibrated thresholds from silently discarding all RRF-scored results.

\noindent\textbf{Post-retrieval filtering.}
Retrieved memories pass through a composable \textsc{ReflectionGate} before context injection:
\begin{enumerate}[nosep,leftmargin=*]
  \item \emph{Safety}: a regex block-list prevents prompt-injection payloads from being surfaced.
  \item \emph{Recency decay}: memories are weighted by recency, prioritizing recent context.
  \item \emph{Deduplication}: near-duplicate entries (high Jaccard similarity) are collapsed to a single representative.
  \item \emph{Budget}: the final set is capped at a configurable count to bound injected context length.
\end{enumerate}
Filtered context is injected as a separate conversation message positioned after system instructions but before user turns, following the context-injection pattern of the OpenAI Agents SDK.

\noindent\textbf{Background consolidation.}
A scheduled background job merges semantically related memories using greedy single-linkage clustering over word-level Jaccard similarity.
Memories within each cluster are replaced by a single deduplicated summary entry, reducing store redundancy and improving retrieval precision over time.

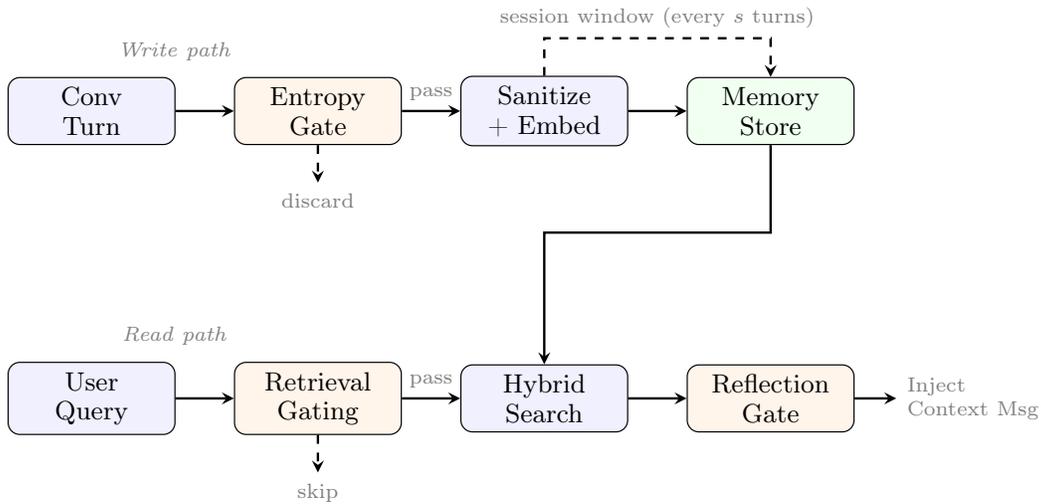
\begin{figure}[ht]
\centering
\resizebox{\linewidth}{!}{%
\begin{tikzpicture}[
    node distance=0.55cm and 0.7cm,
    box/.style={draw, rounded corners, fill=blue!6, minimum width=2.0cm,
                minimum height=0.8cm, align=center, font=\small},
    storebox/.style={draw, rounded corners, fill=green!6, minimum width=2.0cm,
                minimum height=0.8cm, align=center, font=\small},
    gatebox/.style={draw, rounded corners, fill=orange!8, minimum width=2.0cm,
                minimum height=0.8cm, align=center, font=\small},
    arr/.style={->, >=stealth, thick},
    darr/.style={->, >=stealth, thick, dashed},
    lbl/.style={font=\scriptsize, text=gray}
]

\node[box] (conv) {Conv\\Turn};
\node[gatebox, right=of conv] (entropy) {Entropy\\Gate};
\node[box, right=of entropy] (sanitize) {Sanitize\\+ Embed};
\node[storebox, right=of sanitize] (store) {Memory\\Store};

\draw[arr] (conv) -- (entropy);
\draw[arr] (entropy) -- node[above, lbl] {pass} (sanitize);
\draw[arr] (sanitize) -- (store);

\draw[darr] (entropy.south) -- ++(0,-0.45) node[below, lbl] {discard};

\draw[darr] (sanitize.north) -- ++(0,0.45) -| node[above, lbl, pos=0.25] {session window (every $s$ turns)} (store.north);

\node[box, below=2.6cm of conv] (query) {User\\Query};
\node[gatebox, right=of query] (rgate) {Retrieval\\Gating};
\node[box, right=of rgate] (hybrid) {Hybrid\\Search};
\node[gatebox, right=of hybrid] (reflect) {Reflection\\Gate};

\draw[arr] (query) -- (rgate);
\draw[arr] (rgate) -- node[above, lbl] {pass} (hybrid);
\draw[arr] (hybrid) -- (reflect);
\draw[arr] (reflect.east) -- ++(0.5,0) node[right, lbl, align=left] {Inject\\Context Msg};

\draw[arr] (store.south) -- ++(0,-1.05) -| (hybrid.north);

\draw[darr] (rgate.south) -- ++(0,-0.45) node[below, lbl] {skip};

\node[lbl, above=0.08cm of conv, xshift=1.0cm] {\textit{Write path}};
\node[lbl, above=0.08cm of query, xshift=1.0cm] {\textit{Read path}};

\end{tikzpicture}
}
\caption{Memory lifecycle. \textit{Write path}: each conversation turn passes an entropy gate (discarding low-signal turns such as greetings), is sanitized and embedded as an episodic \texttt{Q:/A:} chunk, and stored directly---no LLM inference required. Every $s$ turns an additional sliding-window chunk is stored. \textit{Read path}: a heuristic gate skips retrieval for irrelevant queries; qualifying queries run hybrid search (vector + BM25 + n-gram) over the user's store; retrieved memories pass through a \textsc{ReflectionGate} (safety blocklist, recency decay, deduplication, budget cap) before injection as a separate conversation message.}
\label{fig:memory_lifecycle}
\end{figure}

\subsection{Retrieval-Augmented Generation}

The RAG plugin retrieves relevant documents from vector stores and injects them as context before model invocation.

\noindent\textbf{Indexing pipeline.}
Documents are chunked (configurable size and overlap), embedded using the shared embedding model (\Cref{sec:ml_inference}), and indexed in a vector store.

\noindent\textbf{Hybrid retrieval.}
Pure vector search can miss lexically relevant results when the embedding model underweights rare terms.
We implement a three-signal hybrid retrieval pipeline (\Cref{fig:hybrid_search}) that scores each candidate chunk along three axes:
\begin{enumerate}[nosep,leftmargin=*]
  \item \emph{Vector similarity}: cosine similarity from the embedding index (already in $[0,1]$).
  \item \emph{BM25}: an Okapi BM25 inverted index built over chunk contents provides keyword-level relevance, with configurable $k_1$ (term-frequency saturation, default~1.2) and $b$ (length normalization, default~0.75) parameters.
  \item \emph{N-gram Jaccard}: character $n$-gram sets (default $n{=}3$) capture fuzzy lexical overlap, producing Jaccard similarity in $[0,1]$.
\end{enumerate}

\noindent\textbf{Score fusion.}
Two fusion modes combine the three retriever scores into a single ranking:
(1)~\emph{Weighted}: BM25 scores are min-max normalized to $[0,1]$, then the final score is $w_v \cdot s_{\text{vec}} + w_b \cdot s_{\text{bm25}} + w_n \cdot s_{\text{ngram}}$ with configurable weights (defaults: 0.7, 0.2, 0.1).
(2)~\emph{Reciprocal Rank Fusion (RRF)}: $\text{score}(d) = \sum_r 1/(k + \text{rank}_r(d))$, which is parameter-free beyond the constant $k$ (default~60).

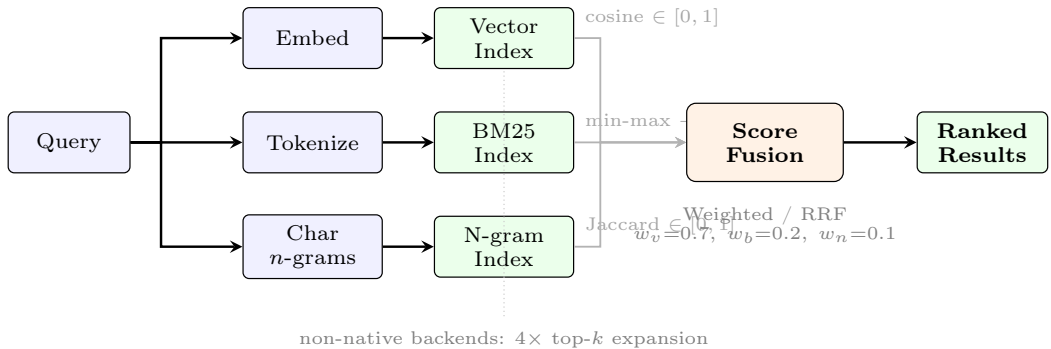
\begin{figure}[t]
\centering
\resizebox{\linewidth}{!}{%
\begin{tikzpicture}[
    node distance=0.5cm and 0.8cm,
    box/.style={draw, rounded corners=2pt, fill=blue!6,
                minimum height=0.7cm, minimum width=1.6cm,
                align=center, font=\scriptsize},
    idx/.style={draw, rounded corners=2pt, fill=green!8,
                minimum height=0.7cm, minimum width=1.6cm,
                align=center, font=\scriptsize},
    fuse/.style={draw, rounded corners=3pt, fill=orange!10,
                 minimum height=0.9cm, minimum width=1.8cm,
                 align=center, font=\scriptsize\bfseries},
    score/.style={font=\tiny, text=gray!70},
    arr/.style={->, >=stealth, thick},
    arrs/.style={->, >=stealth, semithick, gray!60},
    lbl/.style={font=\tiny, text=gray},
  ]

  \node[box, minimum width=1.4cm] (query) {Query};

  \node[box] (embed) at (2.8, 1.2) {Embed};
  \node[idx] (vecidx) at (5.0, 1.2) {Vector\\Index};

  \node[box] (tok) at (2.8, 0.0) {Tokenize};
  \node[idx] (bm25idx) at (5.0, 0.0) {BM25\\Index};

  \node[box] (ngram) at (2.8, -1.2) {Char\\$n$-grams};
  \node[idx] (ngramidx) at (5.0, -1.2) {N-gram\\Index};

  \draw[arr] (query.east) -- ++(0.35,0) |- (embed.west);
  \draw[arr] (query.east) -- ++(0.35,0) |- (tok.west);
  \draw[arr] (query.east) -- ++(0.35,0) |- (ngram.west);

  \draw[arr] (embed) -- (vecidx);
  \draw[arr] (tok) -- (bm25idx);
  \draw[arr] (ngram) -- (ngramidx);

  \node[score, above=0.01cm of vecidx.east, anchor=south west,
      xshift=-1.6cm, yshift=0.3cm] {cosine $\in [0,1]$};
  \node[score, above=0.01cm of bm25idx.east, anchor=south west,
      xshift=-1.8cm, yshift=0.3cm] {min-max $\to [0,1]$};
  \node[score, above=0.01cm of ngramidx.east, anchor=south west,
      xshift=-1.6cm, yshift=0.3cm] {Jaccard $\in [0,1]$};

  \node[fuse] (fusion) at (8.0, 0.0) {Score\\Fusion};

  \draw[arrs] (vecidx.east) -- ++(0.3,0) |- (fusion.west);
  \draw[arrs] (bm25idx.east) -- (fusion.west);
  \draw[arrs] (ngramidx.east) -- ++(0.3,0) |- (fusion.west);

  \node[lbl, below=0.15cm of fusion] {Weighted / RRF};

  \node[box, minimum width=1.5cm, fill=green!8, font=\scriptsize\bfseries] (out) at (10.5, 0.0) {Ranked\\Results};
  \draw[arr] (fusion.east) -- (out.west);

  \draw[densely dotted, gray!40] (5.0, -2.0) -- (vecidx.south);
  \node[lbl, anchor=north] at (5.0, -2.05)
    {non-native backends: $4{\times}$ top-$k$ expansion};

  \node[lbl, anchor=north] at (8.0, -0.85)
    {$w_v{=}0.7,\; w_b{=}0.2,\; w_n{=}0.1$};

\end{tikzpicture}
}
\caption{Hybrid search pipeline. Each candidate chunk is scored along three axes---vector cosine similarity, BM25 keyword relevance, and character $n$-gram Jaccard overlap---then combined via weighted fusion or Reciprocal Rank Fusion (RRF). Backends without native hybrid support fall back to a generic rerank path that fetches $4{\times}$ top-$k$ candidates from vector search before applying BM25 and $n$-gram scoring.}
\label{fig:hybrid_search}
\end{figure}

\noindent\textbf{Backend abstraction.}
The RAG plugin accesses vector stores through a common \texttt{VectorStoreBackend} interface, decoupling retrieval logic from storage implementation.
Six backend types are supported (\Cref{fig:rag_backends}):
\begin{itemize}[nosep,leftmargin=*]
  \item \emph{In-memory}: development and testing; no external dependencies.
  \item \emph{Milvus}~\cite{wang2021milvus}: production-grade distributed vector database with native hybrid search support.
  \item \emph{Llama Stack}: delegates vector storage and search to a Llama Stack deployment via its OpenAI-compatible \texttt{/v1/vector\_stores} API, enabling unified management of LLM serving and retrieval through a single platform. When the Llama Stack instance is configured with the Milvus \texttt{vector\_io} provider, the backend supports hybrid search by passing \texttt{ranking\_options: \{ranker: ``rrf''\}} in search requests, combining vector similarity with BM25 keyword matching at the provider level.
  \item \emph{External API}: any OpenAI-compatible vector store endpoint.
  \item \emph{MCP}: retrieval via Model Context Protocol tool servers.
  \item \emph{OpenAI file search}: delegated retrieval through OpenAI's hosted file search API.
\end{itemize}
Backends that implement native hybrid search (Milvus, Llama Stack with Milvus provider) use their own BM25 and keyword indexes; all other backends fall back to a generic rerank path that fetches an expanded candidate set ($4{\times}$ top-$k$) from vector search and applies BM25 and n-gram scoring as a post-retrieval reranking step.

\noindent\textbf{Score-range awareness.}
Different retrieval modes produce scores on fundamentally different scales: cosine similarity yields values in $[0, 1]$ where a threshold of $\sim$0.7 is typical, while RRF scores follow $\sum 1/(k + \text{rank})$ and typically range from 0.001 to 0.05.
Applying a cosine-calibrated threshold to RRF scores would silently discard all results.
The backend interface handles this transparently: when hybrid search is active, score-based filtering is bypassed and result volume is controlled solely by the top-$k$ parameter.

\begin{figure}[t]
\centering
\resizebox{\linewidth}{!}{%
\begin{tikzpicture}[
    node distance=0.5cm and 0.7cm,
    box/.style={draw, rounded corners, fill=blue!6, minimum width=2cm,
                minimum height=0.7cm, align=center, font=\small},
    backend/.style={draw, rounded corners, fill=green!6, minimum width=1.8cm,
                minimum height=0.6cm, align=center, font=\scriptsize},
    arr/.style={->, >=stealth, thick},
    lbl/.style={font=\scriptsize, text=gray}
]

\node[box] (query) {User Query};
\node[box, right=1.2cm of query] (rag) {RAG Plugin};
\node[box, right=1.2cm of rag] (iface) {VectorStore\\Interface};

\draw[arr] (query) -- (rag);
\draw[arr] (rag) -- (iface);

\node[backend, above right=0.4cm and 1.2cm of iface] (milvus) {Milvus};
\node[backend, right=1.2cm of iface] (llama) {Llama Stack};
\node[backend, below right=0.4cm and 1.2cm of iface] (others) {MCP / API /\\OpenAI};

\draw[arr] (iface) -- (milvus);
\draw[arr] (iface) -- (llama);
\draw[arr] (iface) -- (others);

\node[lbl, right=0.1cm of milvus] {hybrid};
\node[lbl, right=0.1cm of llama] {hybrid (via Milvus)};
\node[lbl, right=0.1cm of others] {vector only};

\node[box, below=1.5cm of rag] (inject) {Inject into Prompt};
\draw[arr] (rag) -- (inject);

\end{tikzpicture}
}
\caption{RAG backend architecture. The RAG plugin accesses vector stores through a common interface. Milvus and Llama Stack (with Milvus provider) support native hybrid search combining vector similarity with BM25 keyword matching via Reciprocal Rank Fusion. Other backends use vector-only retrieval with optional post-retrieval reranking.}
\label{fig:rag_backends}
\end{figure}
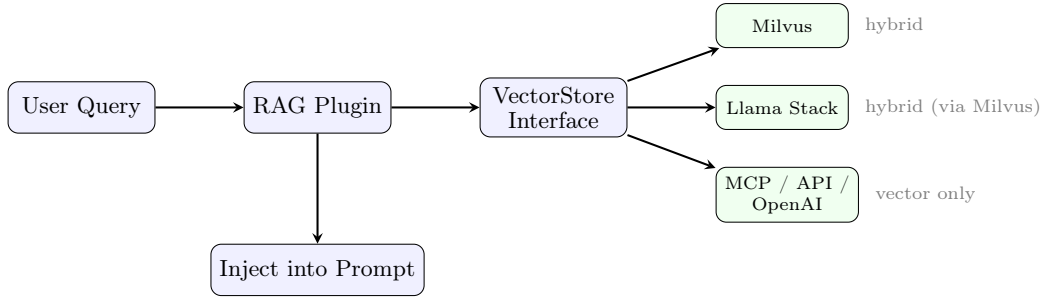

\subsection{Stateful Conversations (Response API)}

The system supports the OpenAI Responses API for stateful multi-turn conversations:

\noindent\textbf{Conversation chaining.}
Each response is stored with a unique ID.
Subsequent requests reference \texttt{previous\_response\_id} to reconstruct the full conversation history without retransmitting the full message sequence.
A bidirectional translator converts between the Response API format and Chat Completions format for backend model invocation, enabling all routing, safety, and caching features to operate identically on both API surfaces.

\noindent\textbf{Routing continuity.}
Stored responses include routing metadata (decision, model selection, signal results), enabling consistent routing across conversation turns and providing context for feedback-driven model selection.

\noindent\textbf{State backends.}
Conversation state is persisted via three backends:
(1)~\emph{in-memory} for development;
(2)~\emph{Redis} for production deployments requiring distributed state with high availability---supporting both standalone and cluster modes;
(3)~\emph{Milvus} for deployments that benefit from semantic retrieval over conversation history.
The Redis backend enables horizontal scaling: multiple router replicas share conversation state, and Redis persistence ensures that conversation chains survive pod restarts.

\subsection{Integration with Signal-Decision Architecture}

Both memory and RAG operate as per-decision plugins.
Different decisions can activate different RAG configurations (different vector stores, different $k$ values, different chunk strategies, different search modes) or disable retrieval entirely.
This enables, for example, a ``research assistant'' decision that activates RAG with hybrid search over a technical knowledge base while a ``casual chat'' decision disables retrieval.


\section{Observability}
\label{sec:observability}

Intelligent routing introduces a new observability surface: beyond standard model serving metrics, operators must monitor signal extraction quality, decision matching patterns, plugin effectiveness, and model selection outcomes.

\subsection{Metrics Taxonomy}

We instrument four metric categories using Prometheus~\cite{prometheus2024}:

\noindent\textbf{Model performance.}
Per-model request counts, token usage, estimated cost, completion latency, Time-to-First-Token (TTFT), and Time-per-Output-Token (TPOT).
These metrics enable real-time cost monitoring and performance regression detection across the model fleet.

\noindent\textbf{Routing behavior.}
Routing modification counts (original model vs.\ selected model), reason codes (which signal types triggered which decisions), and routing latency (overhead of the routing pipeline itself).
These metrics answer the question: \emph{how is the router changing traffic patterns?}

\noindent\textbf{Signal and decision quality.}
Per-signal-type extraction counts and match rates, per-decision match frequencies and confidence distributions, and per-plugin execution counts and outcomes.
These metrics enable calibration: if a signal type rarely matches, its threshold may need adjustment; if a decision matches too broadly, its conditions may be under-specified.

\noindent\textbf{Safety and cache effectiveness.}
PII violation rates by entity type, jailbreak detection rates, hallucination detection latency, cache hit rates, and cache operation latency.
These metrics quantify the value delivered by the plugin chain.

\subsection{Distributed Tracing}

We implement OpenTelemetry~\cite{opentelemetry2024} tracing with a hierarchical span model:
\begin{itemize}[leftmargin=*]
  \item \textbf{Root span}: Covers the full request lifecycle from receipt to response.
  \item \textbf{Signal spans}: Individual spans for each signal type evaluation, capturing latency and results.
  \item \textbf{Decision span}: Decision evaluation with the matched decision and confidence.
  \item \textbf{Plugin spans}: Per-plugin execution with type-specific attributes (cache hit/miss, PII types detected, hallucination spans found).
  \item \textbf{Upstream span}: Backend model invocation, with W3C Trace Context propagation enabling end-to-end tracing through vLLM~\cite{kwon2023vllm} and other inference frameworks.
\end{itemize}

This span hierarchy enables operators to diagnose routing latency (``which signal is slow?''), understand routing decisions (``why was this query routed to model X?''), and correlate routing behavior with model serving performance.


\section{Deployment}
\label{sec:deployment}

We describe the deployment architecture that enables the routing system to operate from single-node development to production Kubernetes~\cite{kubernetes2024} clusters.

\subsection{Deployment Modes}

\noindent\textbf{Local development.}
A single command (\texttt{pip install vllm-sr \&\& vllm-sr serve}) bootstraps the complete stack: router, Envoy proxy, and dashboard.
This lowers the barrier to experimentation with routing configurations.

\noindent\textbf{Kubernetes: Standalone mode.}
Envoy runs as a sidecar container alongside the router in the same pod.
The ExtProc filter connects via localhost gRPC.
Deployed via Helm charts with configurable replicas, resource limits, cache backends, and autoscaling.

\noindent\textbf{Kubernetes: Gateway mode.}
The router runs as an independent service behind an existing Istio or Envoy Gateway deployment, referenced via the gateway's ExtProc configuration.
This mode shares the gateway infrastructure across multiple services.

\subsection{Kubernetes Operator}

A custom operator manages the lifecycle of routing deployments through a \texttt{SemanticRouter} Custom Resource Definition (CRD).
The reconciliation loop manages: service accounts, configuration (ConfigMap or CRD-sourced), persistent storage, gateway/route resources, Envoy configuration, deployments, services, and horizontal pod autoscalers.

\noindent\textbf{Backend discovery.}
The operator discovers model backends via three mechanisms:
KServe InferenceService resources (for managed model serving),
label-based Llama Stack discovery, and
direct Kubernetes Service references.

\subsection{Dashboard}

A web console (React frontend, Go backend) provides:
configuration editing with live validation,
topology visualization of routing flows,
an interactive playground for testing routing decisions,
embedded Grafana/Prometheus/Jaeger dashboards for monitoring and tracing, and
an evaluation framework for benchmarking routing quality.


\section{Evaluation}
\label{sec:evaluation}

We evaluate the routing system across four dimensions: signal extraction efficiency, GPU-accelerated long-context inference, LoRA multi-task scaling, and end-to-end routing correctness.

\subsection{Signal Extraction Latency}

\Cref{tab:signal_latency} reports median and p99 latencies for each signal type on an NVIDIA A100 GPU with ModernBERT base model.

\begin{table}[htbp]
\centering
\caption{Signal extraction latency by type}
\label{tab:signal_latency}
\begin{tabular}{lrrl}
\toprule
\textbf{Signal Type} & \textbf{Median} & \textbf{p99} & \textbf{Requires ML} \\
\midrule
Keyword      & $< 0.1$\,ms & $< 0.5$\,ms  & No \\
Context      & $< 0.1$\,ms & $< 0.5$\,ms  & No \\
Language     & $< 0.5$\,ms & $< 1$\,ms    & No \\
Authorization & $< 0.1$\,ms & $< 0.5$\,ms & No \\
\midrule
Embedding    & $15$\,ms    & $45$\,ms     & Yes \\
Domain       & $60$\,ms    & $120$\,ms    & Yes \\
Fact-check   & $55$\,ms    & $110$\,ms    & Yes \\
Modality     & $50$\,ms    & $100$\,ms    & Yes \\
Feedback     & $55$\,ms    & $115$\,ms    & Yes \\
Complexity   & $50$\,ms    & $105$\,ms    & Yes \\
Preference   & $55$\,ms    & $110$\,ms    & Yes \\
\bottomrule
\end{tabular}
\end{table}

Heuristic signals complete in sub-millisecond time, while ML signals range from 15--120\,ms.
With parallel evaluation, the wall-clock time is dominated by the slowest active signal ($\sim$120\,ms for domain classification) rather than the sum.

\subsection{GPU-Accelerated Inference with SDPA}

To accelerate ML signal extraction on long prompts, we re-export the mmBERT-32K classifiers from PyTorch using Scaled Dot-Product Attention (SDPA), apply ONNX Runtime graph optimization (Gelu/LayerNorm fusion), and convert weights to FP16.
\Cref{tab:gpu_acceleration} reports single-session inference latency on an AMD Instinct MI300X using the ROCm execution provider, compared against CPU (x86-64, ORT default provider).

\begin{table}[htbp]
\centering
\caption{mmBERT-32K inference latency: CPU (FP32) vs.\ GPU SDPA (FP16), single session, batch size 1}
\label{tab:gpu_acceleration}
\begin{tabular}{rrrr}
\toprule
\textbf{Seq.\ Length} & \textbf{CPU (ms)} & \textbf{GPU (ms)} & \textbf{Speedup} \\
\midrule
512    & 120   & 6.0   & 20$\times$ \\
1{,}024  & 263   & 7.7   & 34$\times$ \\
2{,}048  & 809   & 14.1  & 57$\times$ \\
4{,}096  & 2{,}664 & 57.6  & 46$\times$ \\
8{,}192  & 9{,}656 & 237   & 41$\times$ \\
\bottomrule
\end{tabular}
\end{table}

The SDPA re-export reduces the ONNX graph from 2{,}902 to 1{,}342 operators; FP16 conversion halves model size from 1.2\,GB to 587\,MB.
At 8K tokens, single-session GPU inference completes in 237\,ms---a 41$\times$ speedup over CPU.
However, when all three classifiers are loaded concurrently (the operational configuration), SDPA's $\mathcal{O}(n^2)$ memory footprint causes OOM beyond $\sim$4K tokens.
The next section addresses this limitation with flash attention.

\subsection{Flash Attention via Composable Kernel}
\label{sec:flash_attention}

Standard SDPA materializes the full $\mathcal{O}(n^2)$ attention mask and $QK^\top$ matrix, causing out-of-memory (OOM) failures beyond $\sim$4{,}096 tokens when three classifiers are loaded concurrently.
We replace SDPA attention subgraphs with a custom ONNX Runtime operator backed by AMD's Composable Kernel (CK) tiled flash attention, which computes attention in $\mathcal{O}(n)$ working memory.
For ModernBERT's local-attention layers (sliding window of 128), the CK kernel's native \texttt{window\_size} parameters eliminate the dense mask entirely; a lightweight 1-D padding bias $[B,1,1,S]$ replaces the $[1,1,S,S]$ broadcast mask for all layers.

\Cref{tab:fa_vs_sdpa} compares SDPA and FA at concurrency 1.
\Cref{tab:fa_concurrency} then varies the number of concurrent requests to measure throughput under load.
All measurements are median latency on an AMD Instinct MI300X with three classifiers loaded concurrently (domain, jailbreak, PII evaluated in parallel per request).

\begin{table}[htbp]
\centering
\caption{End-to-end classification latency: SDPA vs.\ CK Flash Attention (FP16, ROCm EP, 3 classifiers in parallel, concurrency 1, MI300X)}
\label{tab:fa_vs_sdpa}
\begin{tabular}{rrrr}
\toprule
\textbf{Seq.\ Length} & \textbf{SDPA (ms)} & \textbf{FA (ms)} & \textbf{Speedup} \\
\midrule
512       & 19    & 19    & 1.0$\times$ \\
1{,}024   & 26    & 23    & 1.1$\times$ \\
2{,}048   & 51    & 32    & 1.6$\times$ \\
4{,}096   & 167   & 51    & 3.3$\times$ \\
8{,}192   & OOM   & 105   & --- \\
16{,}384  & OOM   & 259   & --- \\
32{,}768  & OOM   & 756   & --- \\
\bottomrule
\end{tabular}
\end{table}

\begin{table}[htbp]
\centering
\caption{CK Flash Attention latency under concurrent load (median / p95 in ms, MI300X)}
\label{tab:fa_concurrency}
\begin{tabular}{rrrrrrr}
\toprule
 & \multicolumn{2}{c}{\textbf{$C{=}1$}} & \multicolumn{2}{c}{\textbf{$C{=}10$}} & \multicolumn{2}{c}{\textbf{$C{=}20$}} \\
\cmidrule(lr){2-3}\cmidrule(lr){4-5}\cmidrule(lr){6-7}
\textbf{Seq.\ Len.} & \textbf{med} & \textbf{p95} & \textbf{med} & \textbf{p95} & \textbf{med} & \textbf{p95} \\
\midrule
512       & 19   & 19    & 77    & 123   & 142   & 238 \\
1{,}024   & 23   & 24    & 104   & 167   & 189   & 325 \\
2{,}048   & 32   & 32    & 141   & 234   & 275   & 456 \\
4{,}096   & 51   & 52    & 281   & 998   & 521   & 847 \\
8{,}192   & 105  & 105   & 601   & 873   & 1{,}058 & 1{,}732 \\
16{,}384  & 259  & 260   & 1{,}567 & 2{,}426 & 3{,}089 & 4{,}817 \\
32{,}768  & 756  & 763   & 5{,}406 & 7{,}503 & 9{,}872 & 14{,}862 \\
\bottomrule
\end{tabular}
\end{table}

At concurrency 1 (\Cref{tab:fa_vs_sdpa}), FA matches SDPA at short sequences and is 3.3$\times$ faster at 4K tokens; beyond 4K tokens SDPA cannot allocate the 1.5\,GB attention mask when three models share the GPU.
Under concurrent load (\Cref{tab:fa_concurrency}), FA scales gracefully: 20 simultaneous 4K-token requests complete with a median of 521\,ms (p95 847\,ms), and even 20 concurrent 32K-token requests complete without OOM---a regime entirely inaccessible to SDPA.
Latency scales approximately linearly with concurrency, confirming that the GPU compute pipeline serializes requests rather than experiencing pathological queuing.

\subsection{LoRA Memory Efficiency}

\Cref{tab:lora_memory} shows the memory advantage of serving classifiers via LoRA adapters versus independent fine-tuned models.

\begin{table}[htbp]
\centering
\caption{Model memory: independent fine-tuned models vs.\ LoRA adapters (ModernBERT base, 150M params)}
\label{tab:lora_memory}
\begin{tabular}{lrr}
\toprule
\textbf{Tasks ($n$)} & \textbf{Independent (MB)} & \textbf{LoRA (MB)} \\
\midrule
1 & 573   & 573 \\
3 & 1{,}719 & 574 \\
6 & 3{,}438 & 575 \\
\bottomrule
\end{tabular}
\end{table}

At $n = 6$, the LoRA architecture requires $\sim$6$\times$ less model memory (one base model + six $\sim$0.2\,MB adapters vs.\ six full model copies).
Each task still requires its own forward pass; latency reduction comes from \emph{parallel execution} of classifiers rather than from LoRA itself (\Cref{sec:lora_mom}).

\subsection{Decision Engine Overhead}

Decision evaluation adds negligible latency:
$< 0.1$\,ms for 10 decisions with 3 conditions each;
$< 0.5$\,ms for 100 decisions with 5 conditions each.
This confirms that the $O(M \cdot L_\text{max})$ complexity is dominated by signal extraction.

\subsection{Composable Orchestration Across Deployment Scenarios}

A key claim of this work is that the same architecture serves diverse deployment scenarios through configuration.
\Cref{tab:deployment_scenarios} demonstrates how different signal-decision-plugin compositions address different requirements:

\begin{table}[htbp]
\centering
\caption{Composable signal orchestration across deployment scenarios. Each scenario activates a different subset of the thirteen signal types, selection algorithms, and plugin chains---using the same system binary and architecture.}
\label{tab:deployment_scenarios}
\begin{tabularx}{\linewidth}{
  >{\raggedright\arraybackslash}p{2.0cm}
  >{\raggedright\arraybackslash}X
  >{\raggedright\arraybackslash}p{1.9cm}
  >{\raggedright\arraybackslash}X
}
\toprule
\textbf{Scenario} & \textbf{Active Signals} & \textbf{Selection} & \textbf{Key Plugins} \\
\midrule
Privacy-regulated (healthcare) & authz, domain, language & Static (compliant models only) & Strict PII redaction, no caching, audit logging \\
Cost-optimized (developer tool) & complexity, embedding, keyword & AutoMix cascade & Aggressive semantic cache, header mutation for LoRA adapter \\
Multi-cloud enterprise & domain, modality, authz & Latency-aware & Multi-endpoint failover, provider auth factory, system prompt injection \\
Multi-turn assistant & embedding, feedback, preference & Elo with session pin & Responses API state, memory retrieval, RAG injection \\
\bottomrule
\end{tabularx}
\end{table}

\subsection{End-to-End Routing Correctness}

The end-to-end test framework validates routing behavior across eight scenario profiles:

\begin{table}[htbp]
\centering
\caption{End-to-end test profiles}
\label{tab:e2e_profiles}
\begin{tabularx}{\linewidth}{
  >{\raggedright\arraybackslash}p{2.4cm}
  >{\raggedright\arraybackslash}X
}
\toprule
\textbf{Profile} & \textbf{Validated Behavior} \\
\midrule
Multi-endpoint      & Multi-provider routing with weighted distribution and failover across heterogeneous backends \\
Multi-provider auth & Provider-specific auth injection (API key, OAuth2, cloud IAM) via authorization factory \\
AuthZ-RBAC          & Role-based model access (admin/premium/free tiers) with authz signal \\
ML model selection  & KNN, KMeans, SVM, MLP selection accuracy on held-out queries \\
Keyword routing     & Keyword signal matching with AND/OR/NOR combinators \\
Embedding routing   & Embedding similarity thresholds and confidence-based decision selection \\
RAG + Responses API & Context retrieval, injection, and stateful multi-turn via Responses API \\
Routing strategies  & Static, Elo, RouterDC, AutoMix, Hybrid algorithm comparison \\
\bottomrule
\end{tabularx}
\end{table}

Each profile validates correct model selection, safety enforcement (jailbreak blocked, PII detected), cache behavior (hits after similar queries), multi-provider routing (correct endpoint resolution and auth injection), and header propagation.

\subsection{Semantic Cache Effectiveness}

At a similarity threshold $\theta = 0.92$:
exact-match queries achieve 100\% hit rate with $< 5$\,ms lookup latency;
paraphrased queries achieve 60--80\% hit rate depending on paraphrase distance.
Cache hits eliminate backend model invocation entirely, reducing per-request cost to embedding computation only.

\subsection{Unified MoM Evaluation Framework}

To validate the robustness of the Mixture of Models (MoM) collection, we implemented a unified evaluation pipeline that benchmarks both merged models and LoRA adapters. The framework standardizes the assessment of heterogeneous model variants across intent classification and PII detection tasks.

The evaluation architecture, shown in \Cref{fig:eval_pipeline}, utilizes the following components:

\begin{itemize}
    \item \textbf{Schema Normalization:} Inputs from diverse sources—including MMLU-Pro for intent and Presidio for token classification—are mapped to a common evaluation schema.
    \item \textbf{Comparative Quality Validation:} The system computes weighted F1-scores and per-class precision/recall to ensure that the memory efficiency gains reported in \Cref{tab:lora_memory} do not result in significant predictive degradation compared to full-parameter models.
    \item \textbf{Parallelized Benchmarking:} Large-scale evaluations are executed via \texttt{ProcessPoolExecutor} to minimize wall-clock time. The pipeline incorporates automated OOM recovery and exponential backoff for API-based signal providers to ensure benchmark reliability.
\end{itemize}

Results include p50 and p99 latency profiling to confirm that model orchestration remains within the operational bounds required for real-time routing.

\begin{figure}[ht]
    \centering
    \includegraphics[width=0.94\linewidth]{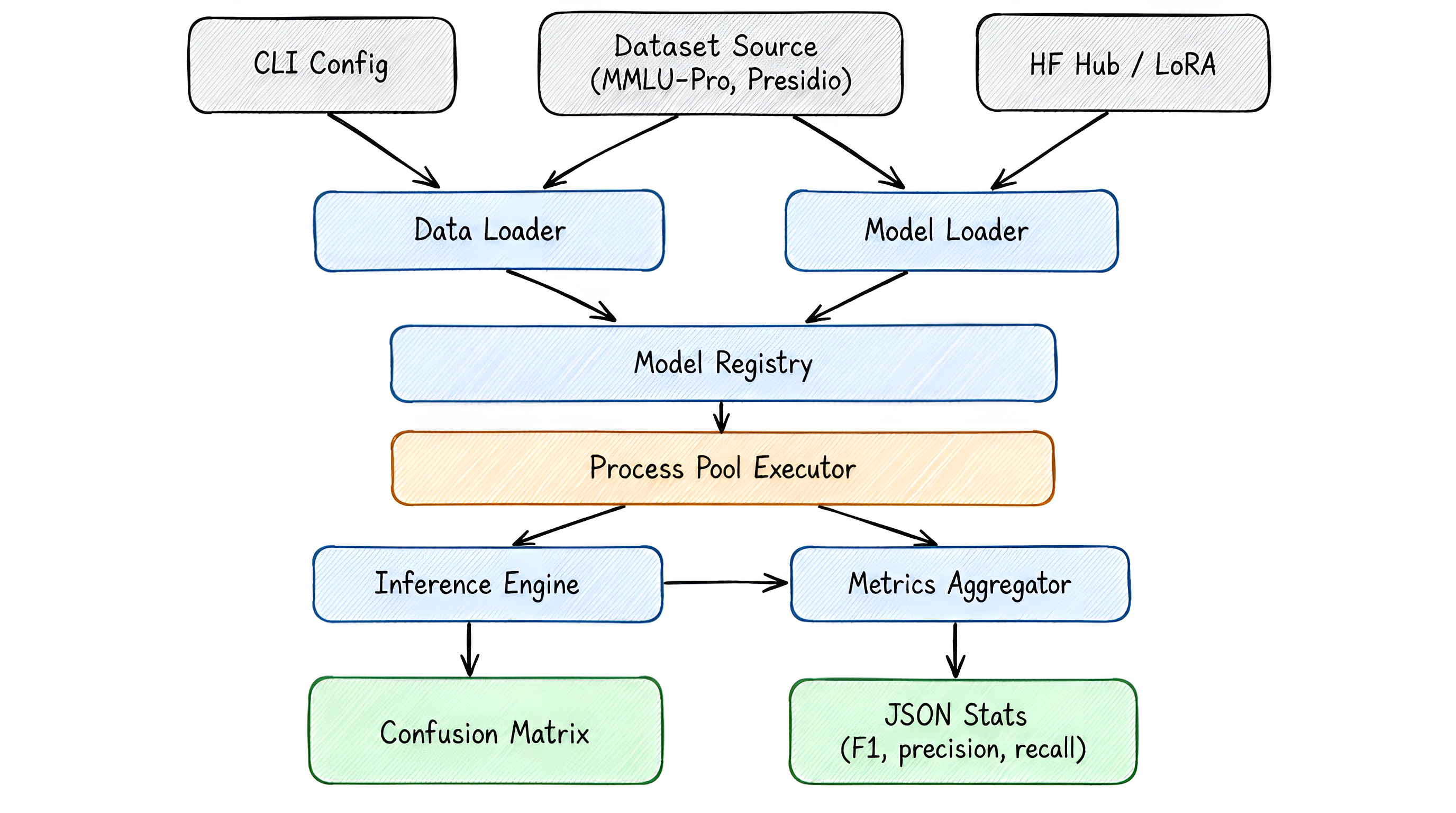}
    \caption{Unified MoM evaluation pipeline. Configuration and dataset sources feed into the data and model loaders, which populate the model registry. A process pool executor parallelizes inference across model variants, producing confusion matrices and aggregated metrics (weighted F1, per-class precision/recall) for both merged and LoRA models.}
    \label{fig:eval_pipeline}
\end{figure}

\subsection{Open Evaluation}

Detailed model selection quality comparisons across algorithms, HaluGate detection accuracy on standard benchmarks (HaluEval, FActScore), and large-scale routing quality evaluation are under preparation in collaboration with the RouterArena team.


\section{Related Work}
\label{sec:related_work}

\subsection{LLM Routing and Model Selection}

\noindent\textbf{Binary routing.}
RouteLLM~\cite{ong2024routellm} pioneered preference-data-driven routing between a strong and weak model, training BERT, MLP, and causal LLM classifiers to estimate query difficulty.
Our work extends this to multi-model, multi-signal routing with per-decision plugin chains.

\noindent\textbf{Contrastive selection.}
RouterDC~\cite{chen2024routerdc} learns shared query-model embeddings via dual contrastive learning.
We integrate RouterDC as one of thirteen selection algorithms and extend it with signal-conditioned features (domain category, complexity).

\noindent\textbf{Cascading.}
AutoMix~\cite{aggarwal2023automix} formulates model cascading as a POMDP with self-verification.
We integrate AutoMix within our plugin-aware framework, where safety checks and caching can prevent unnecessary escalation.

\noindent\textbf{Benchmarking.}
RouterBench~\cite{hu2024routerbench} proposed a benchmark for multi-LLM routing with hybrid scoring.
Our Hybrid selector builds on this approach.

\noindent\textbf{RL-based routing.}
Router-R1~\cite{zhang2025routerr1} formulates multi-LLM routing as a sequential decision process, using rule-based reward RL for multi-round routing and aggregation.
GMTRouter~\cite{xie2025gmtrouter} uses graph-based learning for personalized multi-turn interactions.
We integrate both within the unified selection interface and extend them with the ReMoM multi-round reasoning strategy.

A key distinction of our work is that prior approaches address model selection in isolation, while we embed selection within a composable signal orchestration framework that also handles signal extraction, safety enforcement, caching, context augmentation, and multi-provider routing---enabling the same selection algorithms to serve fundamentally different deployment scenarios through configuration.

\subsection{Multi-Provider and Multi-Endpoint Routing}

Commercial LLM gateway products (OpenRouter, AWS Bedrock, Azure AI Studio) provide multi-provider access but lack the composable signal-driven routing that enables differentiated policies per routing decision.
API management platforms (Kong, Apigee) offer gateway functionality but are not designed for semantic analysis of LLM requests.
Our system uniquely combines semantic model selection with multi-provider protocol abstraction, a pluggable authorization factory, and full OpenAI Responses API support for stateful conversations within the same composable framework.

\subsection{Mixture-of-Experts vs.\ Mixture-of-Models}

Sparse Mixture-of-Experts (MoE)~\cite{shazeer2017moe,jiang2024mixtral} routes \emph{tokens} to specialized sub-networks \emph{within} a single model architecture.
Our system operates at the \emph{request level}, routing entire requests across \emph{different model deployments}---a Mixture-of-Models (MoM) approach.
The two paradigms are complementary: our router can route to MoE models as backends.

\subsection{LLM Safety}

Prompt injection defenses range from perplexity filtering and input preprocessing~\cite{jain2023baseline} to fine-tuned safety classifiers~\cite{inan2023llamaguard}.
PII detection systems~\cite{lison2021anonymisation} identify sensitive information using rule-based and ML approaches.
Our safety subsystem integrates both within the routing pipeline with per-decision thresholds and policies, using LoRA adapters for memory-efficient multi-task classification.

\subsection{Hallucination Detection}

SelfCheckGPT~\cite{manakul2023selfcheckgpt} detects hallucinations via multi-sample consistency.
FActScore~\cite{min2023factscore} evaluates factual precision at the atomic fact level.
HaluGate differs in three respects:
(1)~a gating Sentinel that skips verification for non-factual queries;
(2)~token-level span identification rather than sentence-level scoring;
(3)~NLI-based explanation distinguishing contradiction from neutral unsupported content.

\subsection{Semantic Caching and RAG}

Semantic caching for LLMs uses embedding similarity to match incoming queries against previously seen requests, avoiding redundant model invocations.
Our cache extends this with per-decision policies, multiple backends, and integration with the safety pipeline (cache lookups occur \emph{after} safety checks but \emph{before} model invocation).
RAG integration~\cite{lewis2020rag,wang2021milvus} augments responses with retrieved context; our contribution is embedding RAG as a per-decision plugin within the routing framework.


\section{Conclusion}
\label{sec:conclusion}

We have presented \sysname{}, a signal-driven decision routing system for Mixture-of-Modality model deployments.
The central contribution is \textbf{composable signal orchestration}: the three-layer architecture---signal extraction, Boolean decision evaluation, per-decision plugin chains---enables diverse deployment scenarios to be expressed as different configurations over the same framework, without code changes.

Privacy-regulated deployments activate authz and PII signals with strict filtering plugins; cost-optimized deployments enable cascading selection with aggressive semantic caching; multi-cloud enterprises configure weighted multi-endpoint routing with provider-specific auth injection.
All use the same signal-decision-plugin machinery, composed differently.

Within this framework, \textbf{semantic model selection} analyzes each request's content through thirteen algorithms---spanning rating-based, contrastive, cascading, classical ML, reinforcement learning, and latency-aware families---to find the most cost-effective model while respecting per-decision privacy and safety constraints.
The integration of full \textbf{OpenAI Responses API support} enables stateful multi-turn routing with conversation-consistent model assignment; \textbf{multi-endpoint and multi-provider routing} abstracts over heterogeneous backends (vLLM, OpenAI, Anthropic, Azure, Bedrock, Gemini, Vertex AI) with transparent protocol translation; and the \textbf{pluggable authorization factory} supports diverse auth mechanisms across providers without coupling auth logic to routing decisions.

Additional technical contributions include:
(1)~the LoRA-based multi-task classification architecture that serves $n$ classifiers from a single base model, reducing aggregate model memory by $\sim$$n\times$;
(2)~HaluGate's gated three-stage hallucination detection pipeline that reduces average detection cost by $\sim$50\% through sentinel-based filtering; and
(3)~Rust-native ML inference bindings (Candle, Linfa, ONNX Runtime) that achieve sub-10\,ms signal extraction latency.

The system has been validated in production with over 600 merged contributions from 50+ engineers and is deployed as an Envoy ExtProc with Kubernetes operator support.

\subsection{Future Directions}

Several research directions emerge from this work:

\noindent\textbf{Learned decision policies.}
Replacing hand-crafted Boolean rules with learned routing policies (e.g., neural routing networks trained on production traffic) could improve routing quality while maintaining interpretability through attention-based explanation.

\noindent\textbf{Differentiable entropy-folding policies.}
A natural extension of the layered entropy-folding view is to relax symbolic priority gates into trainable soft gates while preserving explicit policy constraints, enabling gradient-based optimization of control depth with auditable safety boundaries.

\noindent\textbf{Adaptive cost optimization.}
Online learning approaches that continuously adapt model and provider selection based on real-time cost signals, latency measurements, and user feedback---extending the current offline-trained ML selectors to fully adaptive cost-quality optimization.

\noindent\textbf{Contrastive preference routing.}
The contrastive embedding method used for complexity classification generalizes to user preference learning: a contrastive preference classifier scores queries against exemplar sets representing different user preference profiles, enabling personalized model selection without per-user fine-tuning.

\noindent\textbf{Cross-provider consistency.}
Techniques for ensuring consistent behavior when routing the same conversation across different providers, addressing differences in instruction following, safety behavior, and output formatting.

\noindent\textbf{Multi-turn safety.}
Extending safety enforcement from single-turn to multi-turn conversations, detecting adversarial patterns that span multiple interaction rounds.
Preliminary work applies the contrastive embedding approach---already used for complexity classification---to multi-turn jailbreak detection: each user turn is scored against known-benign and known-adversarial exemplar sets, and accumulated turn-level signals are fused to detect gradual prompt escalation that evades single-turn classifiers.

\noindent\textbf{Federated signal orchestration.}
Extending composable signal orchestration to federated deployments where signals from multiple routing instances are aggregated for global optimization.

\noindent\textbf{Agent-based policy synthesis.}
The DSL configuration that specifies routing policies can be viewed as a program in a domain-specific language with a formally complete instruction set (\Cref{sec:disc_agent}).
This reframes router configuration as a \emph{program synthesis} problem: coding agents (LLMs fine-tuned for code generation) can translate natural-language routing specifications into valid DSL configurations, with routing quality feedback enabling RL-based optimization of the synthesis policy.
Preliminary experiments suggest that this meta-learning approach---learning to \emph{write} routing policies rather than to \emph{make} individual routing decisions---can significantly reduce configuration effort while maintaining the interpretability and verifiability of symbolic decision rules.

\noindent\textbf{Multi-protocol adapter abstraction.}
The current system is tightly coupled to Envoy's External Processor protocol. A multi-protocol adapter architecture would abstract the routing engine from protocol-specific code, enabling support for HTTP REST, native gRPC, Nginx/OpenResty, and custom protocols through thin translation layers.
This would also enable abstraction of backend proxying (currently Envoy-specific), external authorization mechanisms, and traffic management policies, making the routing engine truly protocol-agnostic and deployable in serverless, edge, and non-Envoy environments.

\subsection*{Acknowledgments}

We gratefully acknowledge the contributors who contributed to the project.
We thank the Hugging Face Candle team for collaboration on the Candle inference runtime, and the Envoy community for the ExtProc filter.
AMD has sponsored the project with resources and infrastructure as well as the vLLM project for the development of the project.
AI-assisted tools were used during the writing and proofreading of this paper.
The project is open-source at \url{https://github.com/vllm-project/semantic-router}.

\bibliographystyle{iclr2026_conference}
\bibliography{references}

\begin{thebibliography}{47}
\providecommand{\natexlab}[1]{#1}
\providecommand{\url}[1]{\texttt{#1}}
\expandafter\ifx\csname urlstyle\endcsname\relax
  \providecommand{\doi}[1]{doi: #1}\else
  \providecommand{\doi}{doi: \begingroup \urlstyle{rm}\Url}\fi

\bibitem[\AA{}str\"{o}m \& Murray(2008)\AA{}str\"{o}m and Murray]{astrom2008feedback}
Karl~Johan \AA{}str\"{o}m and Richard~M. Murray.
\newblock \emph{Feedback Systems: An Introduction for Scientists and Engineers}.
\newblock Princeton University Press, Princeton, NJ, 2008.

\bibitem[Aggarwal et~al.(2023)Aggarwal, Madaan, Anand, Potharaju, Mishra, Zhou, Gupta, Rajagopal, Kappaganthu, Yang, Upadhyay, Faruqui, and Mausam]{aggarwal2023automix}
Pranjal Aggarwal, Aman Madaan, Ankit Anand, Srividya~Pranavi Potharaju, Swaroop Mishra, Pei Zhou, Aditya Gupta, Dheeraj Rajagopal, Karthik Kappaganthu, Yiming Yang, Shyam Upadhyay, Manaal Faruqui, and Mausam.
\newblock {AutoMix}: Automatically mixing language models.
\newblock \emph{arXiv preprint arXiv:2310.12963}, 2023.

\bibitem[Bellman \& Zadeh(1970)Bellman and Zadeh]{bellman1970decision}
Richard~E. Bellman and Lotfi~A. Zadeh.
\newblock Decision-making in a fuzzy environment.
\newblock \emph{Management Science}, 17\penalty0 (4):\penalty0 B--141--B--164, 1970.

\bibitem[Brayton et~al.(1984)Brayton, Hachtel, McMullen, and Sangiovanni-Vincentelli]{brayton1984logic}
Robert~K. Brayton, Gary~D. Hachtel, Curtis~T. McMullen, and Alberto~L. Sangiovanni-Vincentelli.
\newblock \emph{Logic Minimization Algorithms for {VLSI} Synthesis}.
\newblock Kluwer Academic Publishers, Boston, MA, 1984.

\bibitem[Chen et~al.(2024)Chen, Jiang, Lin, Kwok, and Zhang]{chen2024routerdc}
Shuhao Chen, Weisen Jiang, Baijiong Lin, James~T.\ Kwok, and Yu~Zhang.
\newblock {RouterDC}: Query-based router by dual contrastive learning for assembling large language models.
\newblock \emph{arXiv preprint arXiv:2409.19886}, 2024.

\bibitem[Devlin et~al.(2019)Devlin, Chang, Lee, and Toutanova]{devlin2019bert}
Jacob Devlin, Ming-Wei Chang, Kenton Lee, and Kristina Toutanova.
\newblock {BERT}: Pre-training of deep bidirectional transformers for language understanding.
\newblock \emph{arXiv preprint arXiv:1810.04805}, 2019.

\bibitem[{Envoy Proxy Authors}(2024{\natexlab{a}})]{envoyextproc2024}
{Envoy Proxy Authors}.
\newblock Envoy external processing filter, 2024{\natexlab{a}}.
\newblock URL \url{https://www.envoyproxy.io/docs/envoy/latest/configuration/http/http_filters/ext_proc_filter}.

\bibitem[{Envoy Proxy Authors}(2024{\natexlab{b}})]{envoyproxy2024}
{Envoy Proxy Authors}.
\newblock Envoy proxy: An open source edge and service proxy, 2024{\natexlab{b}}.
\newblock URL \url{https://www.envoyproxy.io/}.

\bibitem[Fleisher \& Maissel(1975)Fleisher and Maissel]{fleisher1975introduction}
Harvey Fleisher and Lester~I. Maissel.
\newblock An introduction to array logic.
\newblock \emph{IBM Journal of Research and Development}, 19\penalty0 (2):\penalty0 98--109, 1975.

\bibitem[Hu et~al.(2022)Hu, Shen, Wallis, Allen-Zhu, Li, Wang, Wang, and Chen]{hu2022lora}
Edward~J.\ Hu, Yelong Shen, Phillip Wallis, Zeyuan Allen-Zhu, Yuanzhi Li, Shean Wang, Lu~Wang, and Weizhu Chen.
\newblock {LoRA}: Low-rank adaptation of large language models.
\newblock \emph{arXiv preprint arXiv:2106.09685}, 2022.

\bibitem[Hu et~al.(2024)Hu, Bieker, Li, Jiang, Keigwin, Ranganath, Keutzer, and Upadhyay]{hu2024routerbench}
Qitian~Jason Hu, Jacob Bieker, Xiuyu Li, Nan Jiang, Benjamin Keigwin, Gaurav Ranganath, Kurt Keutzer, and Shriyash~Kaustubh Upadhyay.
\newblock {RouterBench}: A benchmark for multi-{LLM} routing system.
\newblock \emph{arXiv preprint arXiv:2403.12031}, 2024.

\bibitem[{Hugging Face}(2024)]{candleml2024}
{Hugging Face}.
\newblock Candle: Minimalist {ML} framework for {Rust}, 2024.
\newblock URL \url{https://github.com/huggingface/candle}.

\bibitem[Huntington(1904)]{huntington1904sets}
Edward~V. Huntington.
\newblock Sets of independent postulates for the algebra of logic.
\newblock \emph{Transactions of the American Mathematical Society}, 5\penalty0 (3):\penalty0 288--309, 1904.

\bibitem[Inan et~al.(2023)Inan, Upasani, Chi, Rungta, Iyer, Mao, Tontchev, Hu, Fuller, Testuggine, and Khabsa]{inan2023llamaguard}
Hakan Inan, Kartikeya Upasani, Jianfeng Chi, Rashi Rungta, Krithika Iyer, Yuning Mao, Michael Tontchev, Qing Hu, Brian Fuller, Davide Testuggine, and Madian Khabsa.
\newblock {Llama Guard}: {LLM}-based input-output safeguard for human-{AI} conversations.
\newblock \emph{arXiv preprint arXiv:2312.06674}, 2023.

\bibitem[Jain et~al.(2023)Jain, Schwarzschild, Wen, Somepalli, Kirchenbauer, yeh Chiang, Goldblum, Saha, Geiping, and Goldstein]{jain2023baseline}
Neel Jain, Avi Schwarzschild, Yuxin Wen, Gowthami Somepalli, John Kirchenbauer, Ping yeh Chiang, Micah Goldblum, Aniruddha Saha, Jonas Geiping, and Tom Goldstein.
\newblock Baseline defenses for adversarial attacks against aligned language models.
\newblock \emph{arXiv preprint arXiv:2309.00614}, 2023.

\bibitem[Jiang et~al.(2024)Jiang, Sablayrolles, Roux, Mensch, Savary, Bamford, Chaplot, de~las Casas, Hanna, Bressand, et~al.]{jiang2024mixtral}
Albert~Q.\ Jiang, Alexandre Sablayrolles, Antoine Roux, Arthur Mensch, Blanche Savary, Chris Bamford, Devendra~Singh Chaplot, Diego de~las Casas, Emma~Bou Hanna, Florian Bressand, et~al.
\newblock Mixtral of experts.
\newblock \emph{arXiv preprint arXiv:2401.04088}, 2024.

\bibitem[Kusupati et~al.(2022)Kusupati, Bhatt, Rege, Wallingford, Sinha, Ramanujan, Howard-Snyder, Chen, Kakade, Jain, and Farhadi]{kusupati2022matryoshka}
Aditya Kusupati, Gantavya Bhatt, Aniket Rege, Matthew Wallingford, Aditya Sinha, Vivek Ramanujan, William Howard-Snyder, Kaifeng Chen, Sham Kakade, Prateek Jain, and Ali Farhadi.
\newblock Matryoshka representation learning.
\newblock \emph{arXiv preprint arXiv:2205.13147}, 2022.

\bibitem[Kwon et~al.(2023)Kwon, Li, Zuo, Sheng, Zheng, Yu, Gonzalez, Zhang, and Stoica]{kwon2023vllm}
Woosuk Kwon, Zhuohan Li, Sicheng Zuo, Ying Sheng, Lianmin Zheng, Cody~Hao Yu, Joseph~E.\ Gonzalez, Hao Zhang, and Ion Stoica.
\newblock Efficient memory management for large language model serving with {PagedAttention}.
\newblock In \emph{Proceedings of the 29th Symposium on Operating Systems Principles (SOSP)}, 2023.

\bibitem[Lewis et~al.(2020)Lewis, Perez, Piktus, Petroni, Karpukhin, Goyal, K{\"u}ttler, Lewis, tau Yih, Rockt{\"a}schel, Riedel, and Kiela]{lewis2020rag}
Patrick Lewis, Ethan Perez, Aleksandra Piktus, Fabio Petroni, Vladimir Karpukhin, Naman Goyal, Heinrich K{\"u}ttler, Mike Lewis, Wen tau Yih, Tim Rockt{\"a}schel, Sebastian Riedel, and Douwe Kiela.
\newblock Retrieval-augmented generation for knowledge-intensive {NLP} tasks.
\newblock \emph{arXiv preprint arXiv:2005.11401}, 2020.

\bibitem[Li et~al.(2010)Li, Chu, Langford, and Schapire]{li2010contextual}
Lihong Li, Wei Chu, John Langford, and Robert~E. Schapire.
\newblock A contextual-bandit approach to personalized news article recommendation.
\newblock In \emph{Proceedings of the 19th International Conference on World Wide Web}, pp.\  661--670. ACM, 2010.

\bibitem[{Linfa Contributors}(2024)]{linfa2024}
{Linfa Contributors}.
\newblock Linfa: A {Rust} machine learning framework, 2024.
\newblock URL \url{https://github.com/rust-ml/linfa}.

\bibitem[Lison et~al.(2021)Lison, Pil{\'a}n, S{\'a}nchez, Batet, and {\O}vrelid]{lison2021anonymisation}
Pierre Lison, Ildik{\'o} Pil{\'a}n, David S{\'a}nchez, Montserrat Batet, and Lilja {\O}vrelid.
\newblock Anonymisation models for text data: State of the art, challenges and future directions.
\newblock In \emph{Proceedings of the 59th Annual Meeting of the Association for Computational Linguistics (ACL)}, pp.\  4188--4203, 2021.

\bibitem[Manakul et~al.(2023)Manakul, Liusie, and Gales]{manakul2023selfcheckgpt}
Potsawee Manakul, Adian Liusie, and Mark J.\~F.\ Gales.
\newblock {SelfCheckGPT}: Zero-resource black-box hallucination detection for generative large language models.
\newblock \emph{arXiv preprint arXiv:2303.08896}, 2023.

\bibitem[Mangrulkar et~al.(2022)Mangrulkar, Gugger, Debut, Belkada, Paul, and Bossan]{mangrulkar2022peft}
Sourab Mangrulkar, Sylvain Gugger, Lysandre Debut, Younes Belkada, Sayak Paul, and Benjamin Bossan.
\newblock {PEFT}: State-of-the-art parameter-efficient fine-tuning methods.
\newblock In \emph{Hugging Face}, 2022.
\newblock URL \url{https://github.com/huggingface/peft}.

\bibitem[{Microsoft}(2024)]{onnxruntime2024}
{Microsoft}.
\newblock {ONNX Runtime}: Cross-platform, high performance {ML} inferencing and training accelerator, 2024.
\newblock URL \url{https://onnxruntime.ai/}.

\bibitem[Min et~al.(2023)Min, Krishna, Lyu, Lewis, tau Yih, Koh, Iyyer, Zettlemoyer, and Hajishirzi]{min2023factscore}
Sewon Min, Kalpesh Krishna, Xinxi Lyu, Mike Lewis, Wen tau Yih, Pang~Wei Koh, Mohit Iyyer, Luke Zettlemoyer, and Hannaneh Hajishirzi.
\newblock {FActScore}: Fine-grained atomic evaluation of factual precision in long form text generation.
\newblock \emph{arXiv preprint arXiv:2305.14251}, 2023.

\bibitem[Ong et~al.(2024)Ong, Almahairi, Wu, Chiang, Wu, Gonzalez, Kadous, and Stoica]{ong2024routellm}
Isaac Ong, Amjad Almahairi, Vincent Wu, Wei-Lin Chiang, Tianhao Wu, Joseph~E.\ Gonzalez, M~Waleed Kadous, and Ion Stoica.
\newblock {RouteLLM}: Learning to route {LLMs} with preference data.
\newblock \emph{arXiv preprint arXiv:2406.18665}, 2024.

\bibitem[{OpenTelemetry Authors}(2024)]{opentelemetry2024}
{OpenTelemetry Authors}.
\newblock {OpenTelemetry}: An observability framework for cloud-native software, 2024.
\newblock URL \url{https://opentelemetry.io/}.

\bibitem[Peng et~al.(2023)Peng, Alcaide, Anthony, Albalak, Arcadinho, Biderman, Cao, Cheng, Chung, Grella, et~al.]{peng2023yarn}
Bowen Peng, Eric Alcaide, Quentin Anthony, Amir Albalak, Samuel Arcadinho, Stella Biderman, Huanqi Cao, Xin Cheng, Matteo Chung, Matteo Grella, et~al.
\newblock {YaRN}: Efficient context window extension of large language models.
\newblock \emph{arXiv preprint arXiv:2309.00071}, 2023.

\bibitem[{Prometheus Authors}(2024)]{prometheus2024}
{Prometheus Authors}.
\newblock Prometheus: From metrics to insight, 2024.
\newblock URL \url{https://prometheus.io/}.

\bibitem[Rissanen(1978)]{rissanen1978modeling}
Jorma Rissanen.
\newblock Modeling by shortest data description.
\newblock \emph{Automatica}, 14\penalty0 (5):\penalty0 465--471, 1978.

\bibitem[Rivest(1987)]{rivest1987learning}
Ronald~L. Rivest.
\newblock Learning decision lists.
\newblock \emph{Machine Learning}, 2\penalty0 (3):\penalty0 229--246, 1987.

\bibitem[Shalev-Shwartz(2012)]{shalev2012online}
Shai Shalev-Shwartz.
\newblock Online learning and online convex optimization.
\newblock \emph{Foundations and Trends in Machine Learning}, 4\penalty0 (2):\penalty0 107--194, 2012.

\bibitem[Shannon(1938)]{shannon1938symbolic}
Claude~E. Shannon.
\newblock A symbolic analysis of relay and switching circuits.
\newblock \emph{Transactions of the American Institute of Electrical Engineers}, 57\penalty0 (12):\penalty0 713--723, 1938.

\bibitem[Shannon(1948)]{shannon1948mathematical}
Claude~E. Shannon.
\newblock A mathematical theory of communication.
\newblock \emph{The Bell System Technical Journal}, 27\penalty0 (3):\penalty0 379--423, 1948.

\bibitem[Shazeer et~al.(2017)Shazeer, Mirhoseini, Maziarz, Davis, Le, Hinton, and Dean]{shazeer2017moe}
Noam Shazeer, Azalia Mirhoseini, Krzysztof Maziarz, Andy Davis, Quoc Le, Geoffrey Hinton, and Jeff Dean.
\newblock Outrageously large neural networks: The sparsely-gated mixture-of-experts layer.
\newblock \emph{arXiv preprint arXiv:1701.06538}, 2017.

\bibitem[{The Kubernetes Authors}(2024)]{kubernetes2024}
{The Kubernetes Authors}.
\newblock {Kubernetes}: Production-grade container orchestration, 2024.
\newblock URL \url{https://kubernetes.io/}.

\bibitem[Thomas(2024)]{participle2024}
Alec Thomas.
\newblock Participle: A parser library for {Go}.
\newblock \url{https://github.com/alecthomas/participle}, 2024.
\newblock Accessed: 2025-01-15.

\bibitem[Thompson(1933)]{thompson1933likelihood}
William~R.\ Thompson.
\newblock On the likelihood that one unknown probability exceeds another in view of the evidence of two samples.
\newblock \emph{Biometrika}, 25\penalty0 (3/4):\penalty0 285--294, 1933.

\bibitem[Vaswani et~al.(2017)Vaswani, Shazeer, Parmar, Uszkoreit, Jones, Gomez, Kaiser, and Polosukhin]{vaswani2017attention}
Ashish Vaswani, Noam Shazeer, Niki Parmar, Jakob Uszkoreit, Llion Jones, Aidan~N.\ Gomez, {\L}ukasz Kaiser, and Illia Polosukhin.
\newblock Attention is all you need.
\newblock In \emph{Advances in Neural Information Processing Systems (NeurIPS)}, 2017.

\bibitem[Wang et~al.(2021)Wang, Yi, Guo, Jin, Xu, Li, Wang, Guo, Li, Xu, et~al.]{wang2021milvus}
Jianguo Wang, Xiaomeng Yi, Rentong Guo, Hai Jin, Peng Xu, Shengjun Li, Xiangyu Wang, Xiangzhou Guo, Chengming Li, Xiaohai Xu, et~al.
\newblock Milvus: A purpose-built vector data management system.
\newblock In \emph{Proceedings of the 2021 International Conference on Management of Data (SIGMOD)}, 2021.

\bibitem[Warner et~al.(2024)Warner, Chaffin, Clavie, Weller, Hallstrom, Taghadouini, Gallagher, Biswas, Ladhak, Aarsen, Cooper, Adams, Howard, and Poli]{warner2024modernbert}
Benjamin Warner, Antoine Chaffin, Benjamin Clavie, Orion Weller, Oskar Hallstrom, Said Taghadouini, Alexis Gallagher, Raja Biswas, Faisal Ladhak, Tom Aarsen, Nathan Cooper, Griffin Adams, Jeremy Howard, and Iacopo Poli.
\newblock Smarter, better, faster, longer: A modern bidirectional encoder for fast, memory efficient, and long context finetuning and inference.
\newblock \emph{arXiv preprint arXiv:2412.13663}, 2024.

\bibitem[Williams et~al.(2018)Williams, Nangia, and Bowman]{williams2018mnli}
Adina Williams, Nikita Nangia, and Samuel~R.\ Bowman.
\newblock A broad-coverage challenge corpus for sentence understanding through inference.
\newblock \emph{arXiv preprint arXiv:1704.05426}, 2018.

\bibitem[Xie et~al.(2025)Xie, Sun, Feng, and You]{xie2025gmtrouter}
Encheng Xie, Yihang Sun, Tao Feng, and Jiaxuan You.
\newblock {GMTRouter}: Personalized {LLM} router over multi-turn user interactions.
\newblock \emph{arXiv preprint arXiv:2511.08590}, 2025.

\bibitem[Zadeh(1965)]{zadeh1965fuzzy}
Lotfi~A. Zadeh.
\newblock Fuzzy sets.
\newblock \emph{Information and Control}, 8\penalty0 (3):\penalty0 338--353, 1965.

\bibitem[Zhang et~al.(2025{\natexlab{a}})Zhang, Feng, and You]{zhang2025routerr1}
Haozhen Zhang, Tao Feng, and Jiaxuan You.
\newblock {Router-R1}: Teaching {LLMs} multi-round routing and aggregation via reinforcement learning.
\newblock \emph{arXiv preprint arXiv:2506.09033}, 2025{\natexlab{a}}.

\bibitem[Zhang et~al.(2025{\natexlab{b}})Zhang, Ruan, and Hashimoto]{pacore2025}
Tianjun Zhang, Yangjun Ruan, and Tatsunori Hashimoto.
\newblock {PaCoRe}: Parallel cooperative reasoning with {LLMs}.
\newblock \emph{arXiv preprint arXiv:2601.05593}, 2025{\natexlab{b}}.

\end{thebibliography}

\end{document}